\documentclass[12pt]{article}

\pdfoutput=1
\usepackage[margin=1in]{geometry}
\usepackage[T1]{fontenc}
\usepackage{mathpazo}
\usepackage{eucal}
\usepackage{dsfont}
\usepackage{amsmath}
\usepackage{amssymb}
\usepackage{amsthm}
\usepackage{mathtools}
\usepackage{authblk}
\usepackage{bbm}
\usepackage{sepfootnotes}
\usepackage{times}

\usepackage{titlesec}
\usepackage{pifont}
\usepackage{pdfpages}
\usepackage{hyperref}
\usepackage[capitalize,noabbrev]{cleveref}

\usepackage{booktabs}
\usepackage{algorithmic}
\usepackage{algorithm}
\usepackage{optidef}

\theoremstyle{plain}
\newtheorem{thm}{Theorem}[section]
\newtheorem{prop}[thm]{Proposition}
\newtheorem{lmm}[thm]{Lemma}
\newtheorem{cor}[thm]{Corollary}
\newtheorem{obs}[thm]{Observation}
\theoremstyle{definition}
\newtheorem{defi}[thm]{Definition}
\theoremstyle{remark}
\newtheorem{rmk}[thm]{Remark}

\hypersetup{%
  pdfpagemode=UseNone,
  colorlinks=true,
  citecolor=blue,
  linkcolor=blue,
  urlcolor=blue
}

\frenchspacing

\titleformat{\section}[hang]{\Large\bfseries\filright}{\thesection.}{.5em}{}
\titleformat{\subsection}[hang]{\large\bfseries\filright}{%
  \thesubsection.}{.5em}{}

\newcommand\argmin{\mathop{\mathrm{arg\,min}}\limits} 
\newcommand{\norm}[1]{\ensuremath{\left\lVert #1 \right\rVert}} 
\newcommand{\abs}[1]{\ensuremath{\left\lvert #1 \right\rvert}} 
\newcommand\mc{\mathbb{C}} 
\newcommand\mr{\mathbb{R}} 
 
\newcommand{\eps}{\varepsilon} 

% Linear Algebra

% Dirac's Bra-ket notations
\newcommand\ket[1]{\ensuremath{|#1\rangle}} 
\newcommand\bra[1]{\ensuremath{\langle#1|}} 
\newcommand\bigket[1]{\ensuremath{\bigl|#1\bigr\rangle}}

\newcommand\Ip[2]{\ensuremath{\left\langle#1,#2\right\rangle}} % Inner product

\newcommand\tr{\mathop{\mathrm{tr}}\nolimits}

\newcommand{\set}[1]{\left\{#1\right\}}
\newcommand{\trans}{\intercal}
\newcommand{\mcal}[1]{\mathcal{#1}}
\newcommand{\mbf}[1]{\mathbf{#1}}

% Probability
\renewcommand{\Pr}[1]{{\textbf{\textup{Pr}} \left[#1\right]}}
\newcommand{\E}{{\mathbb{E}}}

\DeclarePairedDelimiter\rbra{\lparen}{\rparen}
\DeclarePairedDelimiter\sbra{\lbrack}{\rbrack}
\DeclarePairedDelimiter\cbra{\{}{\}}
\DeclarePairedDelimiter\Abs{\lVert}{\rVert}
\newcommand{\poly}{{\operatorname{poly}}}
\newcommand{\polylog}{{\operatorname{polylog}}}
\newcommand{\braket}[2]{\left< #1 \vphantom{#2} \middle| #2 \vphantom{#1} \right>}
\newcommand{\diag}{{\operatorname{diag}}}

\newcommand{\G}{{\mathrm{G}}}

\newcommand{\Oracle}{\ensuremath{\mathcal{O}^\textup{Gibbs}}}
\newcommand{\GibbsDis}[1]{\ensuremath{\frac{\exp{\left(#1\right)}}{\norm{\exp{\left(#1\right)}}_1}}}

\begin{document}

\title{\textbf{\Large Logarithmic-Regret Quantum Learning Algorithms for Zero-Sum Games}}
\author[1,2]{Minbo Gao}
\author[3]{Zhengfeng Ji}
\author[4,5]{Tongyang Li}
\author[6]{Qisheng Wang}
\affil[1]{State Key Laboratory of Computer Science, Institute of Software,\protect\\ Chinese Academy of Sciences, Beijing, China}
\affil[2]{University of Chinese Academy of Sciences, Beijing, China}
\affil[3]{Department of Computer Science and Technology, Tsinghua University, Beijing, China}
\affil[4]{Center on Frontiers of Computing Studies, Peking University, Beijing, China}
\affil[5]{School of Computer Science, Peking University, Beijing, China}
\affil[6]{Graduate School of Mathematics, Nagoya University, Nagoya, Japan
}

\renewcommand\Affilfont{\normalsize\itshape}
\renewcommand\Authfont{\large}
\setlength{\affilsep}{4mm}
\renewcommand\Authand{\rule{6mm}{0mm}}

\date{}

\maketitle

\begin{abstract}
  We propose the first online quantum algorithm for solving zero-sum games with
  $\widetilde O\rbra{1}$ regret under the game setting.\hyperlink{ft1}{\footnotemark[1]}
  Moreover, our quantum algorithm computes an $\eps$-approximate Nash
  equilibrium of an $m \times n$ matrix zero-sum game in quantum time
  $\widetilde O\rbra{\sqrt{m+n}/\eps^{2.5}}$.
  Our algorithm uses standard quantum inputs and generates classical outputs
  with succinct descriptions, facilitating end-to-end applications.
  Technically, our online quantum algorithm ``quantizes'' classical algorithms
  based on the \emph{optimistic} multiplicative weight update method.
  At the heart of our algorithm is a fast quantum multi-sampling procedure for
  the Gibbs sampling problem, which may be of independent interest.
\end{abstract}

\hypertarget{ft1}{\footnotetext[1]{\label{ft1}Throughout this
  paper, $\widetilde{O}\rbra{\cdot}$ suppresses polylogarithmic factors such
  as $\log\rbra{n}$ and $\log\rbra{1/\eps}$, and $O^*\rbra{\cdot}$ hides
  quasi-polylogarithmic factors such as $n^{o\rbra{1}}$ and
  $\rbra{1/\eps}^{o\rbra{1}}$.}}

  \section{Introduction}

  Nash equilibrium is one of the most important concepts in game theory.
  It characterizes and predicts rational agents' behaviors in non-cooperative
  games, finding a vast host of applications ranging from analyzing
  wars~\cite{S1980} and designing auctions~\cite{M1981}, to optimizing
  networks~\cite{RT2002}.
  
  It was shown in~\cite{DGP2009,CD2006} that finding a Nash equilibrium is
  $\mathsf{PPAD}$-hard for general games.
  Nevertheless, computing the Nash equilibrium for specific types of games, such as
  zero-sum games, is particularly interesting.
  A zero-sum game requires that the utility of one player is the opposite of the
  other's, a condition that often appears in, for example, chess games.
  Von Neumann's minimax theorem~\cite{vN28} promises that every finite two-player
  zero-sum game has optimal mixed strategies.
  
  \paragraph{Zero-Sum Games.} For a two-player zero-sum game represented by an
  $m \times n$ matrix $\mbf{A}$, the Nash equilibrium is the solution pair
  $\rbra{x, y}$ to the following min-max problem:
  \begin{equation*}
      \min_{x\in \Delta_m} \max_{y \in \Delta_n} x^{\trans} \mbf{A} y,
  \end{equation*}
  where $x \in \Delta_m$ and $y \in \Delta_n$ are $m$- and $n$-dimensional
  probability distributions, respectively.
  Usually, we are satisfied with an approximate Nash equilibrium rather than an
  exact one.
  An $\eps$-approximate Nash equilibrium is a solution pair $\rbra{x, y}$ such
  that:
  \begin{equation*}
    \max_{y'\in \Delta_n} x^\intercal \mathbf{A} y' -
    \min_{x'\in \Delta_m} x'^\intercal\mathbf{A} y \le \eps.
  \end{equation*}
  
  \paragraph{Online Learning.} Since the matrix $\mbf{A}$ of the zero-sum game usually has a large dimension in practice, it is common that we trade accuracy for space and time efficiency.
  Thus, online learning becomes increasingly
  significant in these scenarios.
  Online learning studies the situation when data is only available in sequential
  order and aims at making good decisions in this setup.
  In evaluating online learning algorithms, regret is an important criterion
  that measures how good an algorithm is compared with the optimal static loss
  (see more details in \cref{sec:online}).
  
  The idea of the online learning algorithms for zero-sum games stems from
  repeated play in game theory, e.g., fictitious play~\cite{CNL2006}.
  Specifically, we simulate the actions of two players for multiple rounds. 
  In each round, players make decisions using a no-regret learning algorithm,
  considering the opponent's previous actions.
  For example, a famous algorithm of this type was proposed in~\cite{GK1995}
  inspired by the exponential Hedge algorithm.
  The algorithm has regret $\widetilde O \rbra{\sqrt{T}}$ and $T$ rounds,
  establishing the convergence rate of $\widetilde O \rbra{1/\sqrt{T}}$.
  
  It takes about two decades before the next improvement in~\cite{DDK2015} to
  happen, where the authors proposed a strongly-uncoupled algorithm, achieving
  $\widetilde O\rbra{1}$ total regret if both players use the algorithm.
  They used the technique of minimizing non-smooth functions using smoothed approximations proposed in~\cite{Yu05a}, and this technique was later developed in~\cite{Yu05b, A04} for broader classes of problems.
  Later, it was found in~\cite{SALS2015} that the optimistic multiplicative
  weight algorithm also leads to $\widetilde O\rbra{1}$ total regret with regret
  bounded by variation in utilities; this algorithm was recently extended to
  correlated equilibria in multi-player general-sum games in~\cite{ADFFGS2022}.
  It was proved in~\cite{HAM2021} that optimistic mirror descent with
  a time-varying learning rate can also achieve $\widetilde O \rbra{1}$ total regret
  for multi-players.
  Our quantum algorithm follows the overall idea of the optimistic multiplicative
  weight update and the regret bounded by variation methods~\cite{SALS2015}.
  
  \paragraph{Quantum Computing and Learning.}
  Quantum computing has been rapidly advancing in recent years.
  Specifically, many machine learning problems are known to have significant
  quantum speedups, e.g., support vector machines~\cite{RML2014}, principal
  component analysis~\cite{LMR2014}, classification~\cite{KWS2016,LCW2019}, etc.
  The combination of quantum computing and online learning has
  recently become a popular topic.
  For instance, online learning tools have been applied to solving semidefinite
  programs (SDPs) with quantum speedup in the problem dimension and the number of
  constraints~\cite{BS2017,vGGd2017,BKLLSW2019,vG2019}.
  In addition, as an important quantum information task, the online version of
  quantum state learning has been systematically developed with good theoretical
  and empirical guarantees~\cite{ACHKN2018,YJZS2020,CW2020,CHLLWY2022}.
  
  For finding the Nash equilibrium of zero-sum games, a quantum algorithm was
  proposed in~\cite{vAG2019} by ``quantizing'' the classical algorithm
  in~\cite{GK1995}, achieving a quadratic speedup in the dimension parameters $m$
  and $n$.
  At the same time, quantum algorithms for training linear and kernel-based
  classifiers were proposed in~\cite{LCW2019}, which have similar problem
  formulations to zero-sum games.
  Recently, an improved quantum algorithm for zero-sum games was proposed
  in~\cite{BGJST2022} using dynamic Gibbs sampling.
  All of the above quantum algorithms are based on the multiplicative weight update method, 
  and as a consequence, they all share the $O\rbra{\sqrt{T}}$ regret bound.

  \begin{table*}[!htp]
    \begin{center}\footnotesize
    \caption{Online Algorithms for  $\eps$-Approximate Nash Equilibria of Zero-Sum Games.}\label{tab:main}
    \begin{tabular}{lllll}
      \toprule
      Approach & Type & Regret & Update Cost Per Round
      & Classical/Quantum Time Complexity \\
      \midrule
      \cite{GK1995} & Classical & $\widetilde O\rbra{\sqrt{T}}$
      & $\widetilde O\rbra{m+n}$ & $\widetilde O\rbra{\rbra{m+n}/\eps^2 }$ \\
      \cite{vAG2019} & Quantum & $\widetilde O\rbra{\sqrt{T}}$
      & $\widetilde O\rbra{\sqrt{m+n}/\eps}$
      & $\widetilde O\rbra{\sqrt{m+n}/\eps^3}$ \\
      \cite{BGJST2022} & Quantum & $\widetilde O\rbra{\sqrt{T}}$
      & $\widetilde O\rbra{\sqrt{m+n}/\eps^{0.5}}$
      & $\widetilde O\rbra{\sqrt{m+n}/\eps^{2.5}}~\hyperlink{ft2}{\footnotemark[2]}$\\
      \midrule
      \cite{SALS2015} & Classical & $\widetilde O \rbra{1}$
      & $\widetilde O\rbra{mn}$ & $\widetilde O\rbra{mn/\eps}$ \\
      \cite{CJST2019} & Classical & $\widetilde O \rbra{1}$
      & $\widetilde O\rbra{\sqrt{mn\rbra{m+n}}}$
      & $\widetilde O\rbra{mn+\sqrt{mn\rbra{m+n}}/ \eps}$ \\
      Our result~\hyperlink{ft3}{\footnotemark[3]}  & Quantum & $\widetilde O \rbra*{1}$
      & $\widetilde O \rbra{\sqrt{m+n}/\eps^{1.5}}$
      & $\widetilde O\rbra{\sqrt{m+n}/\eps^{2.5}}$ \\
      \bottomrule
    \end{tabular}

    \end{center}
  \end{table*}

\hypertarget{ft2}{\footnotetext[2]{The complexity given in \cite{BGJST2022} is $\widetilde O (\sqrt{m+n}/\eps^{2.5}+1/\eps^3)$, wherein the former term $\sqrt{m+n}/\eps^{2.5}$ dominates the complexity. See Footnote \ref{ft4} and \cref{rem:term} for discussions.}}
\hypertarget{ft3}{\footnotetext[3]{Here, we require that the input matrix $\mbf{A}$ satisfies $\Abs{\mbf{A}}\le 1$, while other works require $|A_{i,j}|\le 1$.}} 
  \subsection{Main Result}
  Our result in this paper establishes a positive answer to the following open question:
  \emph{Does there exist a learning algorithm with $\widetilde O\rbra{1}$ regret
    allowing quantum speedups?}
  
  Inspired by the optimistic follow-the-regularized-leader algorithm proposed
  in~\cite{SALS2015}, we propose a sample-based quantum online learning algorithm
  for zero-sum games with $O\rbra{\log\rbra{mn}}$ total regret, which is
  near-optimal.
  If we run this algorithm for $T$ rounds, it will compute an
  $\widetilde O\rbra{1/T}$-approximate Nash equilibrium with high probability,
  achieving a quadratic speedup in dimension parameters $m$ and $n$.
  Formally, we have the following quantum online learning algorithm:
  
  \begin{thm}[Online learning for zero-sum games]
    Suppose $T\leq\widetilde O\rbra{m+n}$.
    There is a quantum online algorithm for zero-sum game
    $\mbf{A} \in \mr^{m \times n}$ with $\Abs{\mbf{A}} \leq 1$ such that it
    achieves a total regret of $O\rbra{\log\rbra{mn}}$ with high probability after
    $T$ rounds, while each round takes quantum time
    $\widetilde O\rbra{T^{1.5}\sqrt{m+n}}$.
  \end{thm}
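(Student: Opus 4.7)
The plan is to quantize the optimistic multiplicative weight update (OMWU) algorithm of \cite{SALS2015}. Classically, at round $t$ both players compute $x_t \propto \exp\rbra*{-\eta\sum_{s<t}\mbf{A}y_s - \eta\mbf{A}y_{t-1}}$ and $y_t \propto \exp\rbra*{\eta\sum_{s<t}\mbf{A}^\trans x_s + \eta\mbf{A}^\trans x_{t-1}}$ for a suitably small learning rate $\eta$, and the ``regret bounded by variation in utilities'' (RVU) analysis then yields $O(\log(mn))$ total regret. In the quantum setting we never materialize $x_t$, $y_t$, or the cumulative loss vectors: we would access them only through the quantum Gibbs oracle $\Oracle$ and estimate every needed inner product by sampling.

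\textbf{Step 1: a noise-tolerant OMWU bound.} I would first reprove the RVU bound when each player runs OMWU on an unbiased estimator $\widetilde g_t$ of its true loss vector with coordinate-wise variance $\sigma_t^2$. The deterministic telescoping argument goes through with an added error term of order $\eta\sum_t\sigma_t^2$; choosing $\sigma_t^2=\widetilde O(1/T)$ per round keeps this contribution logarithmic. Using $\Abs{\mbf{A}}\le 1$, the per-round gradient estimates are bounded, so a martingale Azuma--Hoeffding step upgrades the expectation bound to a high-probability $O(\log(mn))$ regret bound.

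\textbf{Step 2: quantum estimation via Gibbs multi-sampling.} At round $t$, the cumulative history $h_t=\sum_{s<t}\mbf{A}y_s+\mbf{A}y_{t-1}$ has $\ell_\infty$ norm $O(T)$, and the ideal iterate $x_t$ is exactly the Gibbs distribution $\GibbsDis{-\eta h_t}$. I would invoke the fast quantum multi-sampler---the main technical tool developed earlier in the paper---to draw $\Theta(T)$ independent indices from this distribution, and symmetrically for $y_t$. Each sample $i\sim x_t$ yields an unbiased estimate $\mbf{A}^\trans e_i$ of $\mbf{A}^\trans x_t$ with entries in $[-1,1]$, so averaging $\Theta(T)$ samples drives the variance to $\widetilde O(1/T)$, as required by Step~1. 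The per-round quantum cost is dominated by producing $\Theta(T)$ Gibbs samples from an exponent of norm $\Theta(T)$ on an $(m+n)$-dimensional space, which by the multi-sampler is $\widetilde O\rbra{T^{1.5}\sqrt{m+n}}$; the assumption $T\le\widetilde O(m+n)$ ensures we remain in the regime where the multi-sampler beats repeated single-sample Gibbs preparation.

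\textbf{Main obstacle.} The principal difficulty is the two-way coupling between the Gibbs sampler and the OMWU dynamics: the distribution at round $t$ depends on noisy gradients from every prior round, so bias and variance could, in principle, compound catastrophically across the $T$ rounds. I would handle this by conditioning on the filtration generated by previous samples and relying on the optimistic correction term $\mbf{A}y_{t-1}$ to absorb one-step prediction errors so that the RVU telescoping still collapses. A subtler point is that one must not reuse the same samples both to update the history and to evaluate the current gradient in a way that induces statistical dependence; structuring the algorithm so that freshly drawn samples play each role, while still fitting within the $\widetilde O\rbra{T^{1.5}\sqrt{m+n}}$ budget, is the most delicate part of the argument.
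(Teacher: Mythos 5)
Your overall route coincides with the paper's: quantize the optimistic MWU iteration by replacing each true gradient with the average of $\Theta(T)$ Gibbs samples produced by the fast multi-sampler (per-round cost $\widetilde O\rbra{\beta\sqrt{(m+n)k}}$ with $\beta=O(T)$ and $k=T$, i.e.\ $\widetilde O\rbra{T^{1.5}\sqrt{m+n}}$), and prove an RVU-type regret bound that tolerates the stochastic gradients, using conditioning on the filtration to kill the cross terms. The one genuine gap is your concentration step. Plain Azuma--Hoeffding cannot ``upgrade the expectation bound'': the martingale increments $\Ip{u_t-u}{\mathbf{A}\rbra{v_t-\eta_t}}$ are only $O(1)$ in the worst case, so Azuma gives a deviation of order $\sqrt{T}$, which swamps the $O(\log(mn))$ target. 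To salvage a true high-probability statement you would need a variance-sensitive martingale inequality (Freedman/Bernstein, exploiting the $O(1/T)$ conditional variance per round) \emph{and} high-probability control of the quadratic terms $\sum_t\norm{\eta_t-v_t}^2$ that enter the RVU inequality, not just their expectations. The paper sidesteps all of this with a much simpler device: the \emph{sum} of the two players' regrets is pointwise non-negative (\cref{lem:pos_reg}), so Markov's inequality applied to the expectation bound already yields the stated bound with constant probability $2/3$, which is all the theorem's ``with high probability'' claims.

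Two smaller corrections. First, the quantum sampler is only $\epsilon_{\G}$-close in total variation, so your per-round estimators are not exactly unbiased; the paper absorbs this bias (\cref{thm:main-theorem-error-restate}) by taking $\epsilon_{\G}=1/T$, which contributes $O(T\epsilon_{\G})=O(1)$ extra regret, and your plan should include this step. Second, the hypothesis $T\leq\widetilde O\rbra{m+n}$ is not about when multi-sampling beats repeated single-shot Gibbs preparation (the multi-sampler wins for every $k>1$); it is there so that the $\widetilde O\rbra{\beta^3}=\widetilde O\rbra{T^3}$ classical cost of computing the QSVT polynomial description (\cref{rem:term}) is dominated by the $\widetilde O\rbra{T^{2.5}\sqrt{m+n}}$ total sampling cost.
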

  
  Our algorithm does not need to read all the entries of the input matrix
  $\mbf{A}$ at once.
  Instead, we assume that our algorithm can query its entries when necessary.
  The input model is described as follows:
  \begin{itemize}
    \item Classically, given any $i\in [m], j \in [n]$, the entry $A_{i,j}$ can be
          accessed in $\widetilde O(1)$ time.
    \item Quantumly, we assume that the entry $A_{i,j}$ can be accessed in
          $\widetilde O(1)$ time \emph{coherently}.
  \end{itemize}
  This is the \emph{standard quantum input model} for zero-sum games adopted in
  previous literature~\cite{LCW2019,vAG2019,BGJST2022}.
  See more details in \cref{sec:intro-q}.
  
In addition, same as prior works~\cite{vAG2019,BGJST2022},  our algorithm outputs \emph{purely classical vectors} with succinct
  descriptions because they are sparse (with at most $T^2$ nonzero entries).
  Overall, using standard quantum inputs and generating classical outputs
  significantly facilitate end-to-end applications of our algorithm in the near
  term.
  
  As a direct corollary, we can find an \emph{$\eps$-approximate Nash equilibrium}
  by taking $T = \widetilde O\rbra{1/\eps}$, resulting in a quantum speedup stated
  as follows.
  A detailed comparison to previous literature is presented in \cref{tab:main}.

  \begin{cor} [Computing Nash equilibrium]\label{cor:main-nash} There is a
    quantum online algorithm for zero-sum game $\mbf{A} \in \mr^{m \times n}$ with
    $\Abs{A} \leq 1$ that, with high probability, computes an
    $\eps$-approximate Nash equilibrium in quantum time
  \footnotetext[4]{\label{ft4}In fact, a condition of
  $\eps = \Omega\rbra{\rbra{m+n}^{-1}}$ is required in our quantum algorithm.
  Nevertheless, our claim still holds because when
  $\eps = O\rbra{\rbra{m+n}^{-1}}$, we can directly apply the classical
  algorithm in~\cite{GK1995} with time complexity
  $\widetilde O \rbra{\rbra*{m+n}/\eps^2}
  \leq \widetilde O \rbra{\sqrt{m+n}/\eps^{2.5}}$.}
    $\widetilde O \rbra{\sqrt{m+n}/\eps^{2.5}}$.\hyperref[ft4]{\footnotemark[4]}
  \end{cor}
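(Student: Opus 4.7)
The plan is to derive the corollary directly from the preceding theorem by a suitable choice of the number of rounds $T$, together with the standard reduction from low-regret no-regret dynamics to approximate Nash equilibria for zero-sum games.

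First, I would recall the reduction: if both players run a no-regret algorithm against each other on the zero-sum matrix $\mbf{A}$, and the sum of their individual regrets after $T$ rounds is at most $R$, then the time-averaged strategy pair $(\bar x,\bar y)=\tfrac1T\sum_{t=1}^T (x_t,y_t)$ is a $(R/T)$-approximate Nash equilibrium (this is the classical folklore calculation using von Neumann's minimax theorem; in our setting it yields at worst a $\widetilde O(R/T)$-approximate equilibrium once one accounts for the high-probability event). Applying the theorem, the total regret is $R=O(\log(mn))$ with high probability, so after $T$ rounds we obtain an $\widetilde O(1/T)$-approximate equilibrium.

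Second, I would choose $T=\widetilde\Theta(1/\eps)$ so that $\widetilde O(1/T)\leq \eps$. With the theorem's per-round cost of $\widetilde O(T^{1.5}\sqrt{m+n})$, the total quantum time is
\begin{equation*}
T\cdot \widetilde O\rbra{T^{1.5}\sqrt{m+n}} \;=\; \widetilde O\rbra{T^{2.5}\sqrt{m+n}} \;=\; \widetilde O\rbra{\sqrt{m+n}/\eps^{2.5}},
\end{equation*}
as claimed. The average strategies $(\bar x,\bar y)$ are produced as classical sparse vectors because each round contributes only $\widetilde O(1)$ coordinates to the support, giving the promised succinct classical output.

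Finally, I would address the regime restriction. The theorem requires $T\leq \widetilde O(m+n)$, which translates to $\eps \geq \widetilde\Omega(1/(m+n))$. For the complementary regime $\eps=O(1/(m+n))$, the corollary is handled separately, as noted in Footnote~\ref{ft4}: one simply runs the classical algorithm of \cite{GK1995} in time $\widetilde O((m+n)/\eps^2)$, which in this small-$\eps$ regime is already dominated by $\widetilde O(\sqrt{m+n}/\eps^{2.5})$. The only nontrivial step is really the regret-to-equilibrium reduction with high-probability control over randomness used inside the quantum algorithm; since the theorem already delivers the regret bound with high probability, this step is essentially bookkeeping rather than a genuine obstacle.
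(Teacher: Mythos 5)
Your proposal is correct and follows essentially the same route as the paper: apply the main regret theorem (proved via \cref{thm:error-Gibbs} together with the multi-Gibbs sampler cost), use the standard regret-to-equilibrium reduction for zero-sum games with $T=\widetilde\Theta(1/\eps)$, multiply the per-round cost $\widetilde O(T^{1.5}\sqrt{m+n})$ by $T$ to get $\widetilde O(\sqrt{m+n}/\eps^{2.5})$, and fall back to the classical algorithm of \cite{GK1995} when $\eps=O((m+n)^{-1})$, exactly as in Footnote~4. One cosmetic slip: each round contributes up to $T$ (not $\widetilde O(1)$) new support coordinates, since $\zeta_t$ and $\eta_t$ average $T$ samples, so the output has at most $T^2$ nonzero entries---still succinct and immaterial to the stated time bound.
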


  \paragraph{Quantum Lower Bounds.}
  In the full version of~\cite{LCW2019}, they showed a lower bound
  $\Omega\rbra{\sqrt{m+n}}$ for the quantum query complexity
  of computing an $\eps$-approximate
  Nash equilibrium of zero-sum games for constant $\eps$ .
  Therefore, our algorithm is tight in terms of $m$ and $n$.
  \subsection{Our Techniques}
  
  Our quantum online algorithm is a stochastic modification of the optimistic
  multiplicative weight update proposed by~\cite{SALS2015}.
  We choose to ``quantize'' the optimistic online algorithm because it has a
  better convergence rate for zero-sum games than general multiplicative weight
  update algorithms.
  During the update of classical online algorithms, the update term (gradient
  vector) is computed in linear time by arithmetic operations.
  However, we observe that it is not necessary to know the exact gradient vector.
  This motivates us to apply stochastic gradient methods for updates so that certain
  fast quantum samplers can be utilized here and bring quantum speedups.
  
  Specifically, in our quantum online algorithm, we need to establish an upper
  bound on the expectation of the total regret of our stochastic update rule, and
  also deal with errors that may appear from noisy samplers.
  To reduce the variance of the gradient estimator, we need multiple samples (from
  Gibbs distributions) at a time.
  To this end, we develop a fast quantum multi-Gibbs sampler that produces
  multiple samples by preparing and measuring quantum states.
  
  \paragraph{Sample-Based Optimistic Multiplicative Weight Update.}
  Optimistic online learning adds a ``prediction loss'' term to the cumulative
  loss for regularized minimization, giving a faster convergence rate than the
  non-optimistic versions for zero-sum games.
  \cite{AHK2012} surveyed the use of the multiplicative weight update method in
  various domains, but little was known for the optimistic learning
  method at that time.
  \cite{DDK2015} proposed an extragradient method that largely resembles the
  optimistic multiplicative weight.
  \cite{SALS2015} gave a characterization of this update rule---RVU (Regret
  bounded by Variation in Utilities) property, which is very useful in proving
  regret bounds.
  Subsequently, the optimistic learning method is applied to other areas, including
  training GANs~\cite{DISZ2018} and multi-agent learning~\cite{PZO2022}.
  
  However, when implementing the optimistic learning methods, we face a
  fundamental difficulty---we cannot directly access data from quantum states
  without measurement.
  To resolve this issue, we get samples from the desired distribution and use them
  to estimate the actual gradient.
  This idea is commonly seen in previous literature on quantum SDP
  solvers~\cite{BS2017,vGGd2017,BKLLSW2019,vG2019}.
  Then we prove the regret bound (see \cref{thm:error-Gibbs}) of our algorithm by
  showing that it has a property similar to the RVU property~\cite{SALS2015}.
  Moreover, we need multiple samples to obtain a small ``variance'' of the
  stochastic gradient (by taking the average of the samples), to ensure that the
  expected regret is bounded.
  Our fast quantum multi-Gibbs sampler produces the required samples and ensures
  further quantum speedups.
  In a nutshell, we give an algorithm (\cref{algo:POMWU}) which modifies the
  optimistic multiplicative weight algorithm in~\cite{SALS2015} to fit the
  quantum implementation.
  This is \emph{the first quantum algorithm that implements optimistic online
    learning} to the best of our knowledge.
  
  \paragraph{Fast Quantum Multi-Gibbs Sampling.}
  The key to our sample-based approach is to obtain multiple samples from the
  Gibbs distribution after a common preprocessing step.
  For a vector $p \in \mr^{n}$ with $\max_{i\in[n]}\abs{p_i}\le \beta$, a sample
  from the Gibbs distribution with respect to $p$ is a random variable $j\in [n]$
  such that
  $
    \Pr{j = l} = \frac{\exp \rbra*{p_l}}{\sum_{i = 1}^n \exp \rbra*{p_i}}.
  $
   Gibbs sampling on a quantum computer was first studied in~\cite{PW2009}, and
   was later used as a subroutine in quantum SDP
   solvers~\cite{BS2017,vGGd2017,BKLLSW2019,vG2019}.
   However, the aforementioned quantum Gibbs samplers produce one sample from an
   $n$-dimensional Gibbs distribution in quantum time
   $\widetilde O\rbra{\beta\sqrt{n}}$; thus, we can produce $k$ samples in quantum
   time $\widetilde O\rbra{\beta k \sqrt{n}}$.
   Inspired by the recent work~\cite{H2022} about preparing multiple samples of
   quantum states, we develop a fast quantum Gibbs sampler (\cref{thm:gibbs})
   which \emph{produces $k$ samples from a Gibbs distribution in quantum time
     $\widetilde O\rbra{\beta \sqrt{nk}}$}.
   Our quantum multi-Gibbs sampling may have potential applications in
   sample-based approaches for optimization tasks that require multiple samples.
  
   Technically, the main idea is based on quantum rejection
   sampling~\cite{G2000,ORR2013}, where the target quantum state $\ket{u}$ is
   obtained by post-selection from a quantum state $\ket{u_{\text{guess}}}$ that
   is easy to prepare (see \cref{sec:multi-Gibbs}).
   The algorithm has the following steps (Here we assume $\beta = O\rbra{1}$ for
   simplicity):
   \begin{itemize}
    \item To bring $\ket{u_{\text{guess}}}$ closer to the target $\ket{u}$, we
          find the $k$ (approximately) most dominant amplitudes of $\ket{u}$ by
          quantum $k$-maximum finding~\cite{DHHM2006} in quantum time
          $\widetilde O\rbra{\sqrt{nk}}$.
    \item In quantum $k$-maximum finding, we need to estimate the amplitudes of
          $\ket{u}$ and compare them coherently, which requires the estimation
          should be consistent.
          To address this issue, we develop consistent quantum amplitude
          estimation (see \cref{append:consistent-amplitude-estimation}) for our
          purpose based on consistent phase estimation~\cite{A2012,T2013}, which
          is of independent interest.
    item Then, we correct the tail amplitudes of $\ket{u_{\text{guess}}}$ by
          quantum singular value transformation~\cite{GSLW2019}, resulting in an
          (approximate) target state with amplitude $\Omega\rbra{\sqrt{k/n}}$ (see
          \cref{append:multi-Gibbs-proof} for details).
    \item Finally, we post-select the target state by quantum amplitude
          amplification~\cite{BHMT02} in quantum time
          $\widetilde O\rbra{\sqrt{n/k}}$.
          This follows that $k$ samples of the target quantum state can be
          obtained in
          $k \cdot \widetilde O\rbra{\sqrt{n/k}} = \widetilde O\rbra{\sqrt{nk}}$
          time.
  \end{itemize} 
  
  We believe that our technique can be extended to a wide range of distributions
  whose mass function is monotonic and satisfies certain Lipschitz conditions.
  
  \section{Preliminaries}

This section introduces some basic notions of game theory, online learning, and
quantum computing.
  \subsection{General Mathematical Notations}
  
  For convenience, we use $[n]$ to denote the set $\set{ 1,2,\dots,n}$.
  \paragraph{Vectors.}
  We use $e_i$ to denote a vector whose $i$-th coordinate is $1$ and other
  coordinates are $0$.
  For a vector $v\in \mr^n$, $v_i$ is the $i$-th coordinate of $v$.
  For a function $f\colon \mr \to \mr$, we write $f\rbra*{v}$ to denote the result
  of $f$ applied to its coordinates, i.e.,
  $f\rbra*{v} = \rbra*{ f\rbra*{v_1}, f\rbra*{v_2},\dots, f\rbra*{v_n}}$.
  \paragraph{Vector Spaces.}
  We use $\Delta_{n}$ to represent the set of $n$-dimensional probability
  distributions, i.e.,
  \[
    \Delta_n := \{v\in \mr^n : \sum_{i=1}^n v_i = 1, \forall i\in [n], v_i\ge 0\}.
  \]
  Here the $i$-th coordinate represents the probability of event $i$ takes place. 
  We use $\norm{\cdot}$ for vector norms.
  The $l_1$ norm $\norm{\cdot}_1$ for a vector $v\in \mr^n$ is defined as
  $\norm{v}_1:=\sum_{i=1}^n \abs{v_i}$.
  For two $n$-dimensional probability distributions $p,q\in \Delta_n$, their total
  variance distance is defined as:
  \[
    d_{\text{TV}}\rbra{p,q} = \frac{1}{2}\norm{p-q}_1 = \frac{1}{2}\sum_{i=1}^n \abs{p_i-q_i}.
  \]
  \paragraph{Matrices.}
  We will use $\mbf{A}\in \mr^{m\times n}$ to denote a matrix
  with $m$ rows and $n$ columns.
  $A_{i,j}$ is the entry of $\mbf{A}$ in the $i$-th row and $j$-th column.
  $\norm{\mbf{A}}$ denotes the operator norm of matrix
  $\mbf{A}$.
  For a vector $v\in \mr^n$, we use $\diag \rbra*{v}$ to denote the diagonal
  matrix in $\mr^{n\times n}$ whose diagonal entries are coordinates of $v$ with
  the same order.
  
  \subsection{Game Theory}
  
  Let us consider two players, Alice and Bob, playing a zero-sum game represented
  by a matrix $\mathbf{A}\in \mr^{m\times n}$ with entries $A_{i,j}$, where $[m]$
  is the labeled action set of Alice and $[n]$ of Bob.
  Usually, Alice is called the row player and Bob is called the column player.
  $A_{i,j}$ is the payoff to Bob and $-A_{i,j}$ is the payoff to Alice when Alice
  plays action $i$ and Bob plays action $j$.
  Both players want to maximize their payoff when they consider their opponent's
  strategy.
  
  Consider the situation that both players' actions are the best responses to each
  other.
  In this case, we call the actions form a \emph{Nash equilibrium}.
  A famous minimax theorem by~\cite{vN28} states that we can exchange the order
  of the min and max operation and thus the value of the Nash equilibrium of the
  game can be properly defined.
  To be more exact, we have:
  $
    \min_{x\in \Delta_m}\max_{y\in \Delta_n} x^{\trans} \mbf{A} y =
    \max_{y\in \Delta_n}\min_{x\in \Delta_m} x^{\trans} \mbf{A} y
  $.
  Here the value of the minimax optimization problem is called the \emph{value of
    the game}. Our goal is to find an approximate Nash equilibrium for this problem.
  More formally, we need two probabilistic vectors $x\in \Delta_m, y\in \Delta_n$
  such that the following holds:
  \[
    \max_{y'\in \Delta_n} x^\intercal\mathbf{A} y' -
    \min_{x'\in \Delta_m} x'^\intercal\mathbf{A} y \le \eps.
  \]
  We will call such a pair $(x,y)$ an $\eps$-\textit{approximate Nash Equilibrium}
  for the two-player zero-sum game $\mathbf{A}$.
  Computing an $\eps$-approximate Nash for a zero-sum game is not as hard as for a
  general game.
  For a general game, approximately computing its Nash equilibrium is
  $\mathsf{PPAD}$-hard~\cite{DGP2009,CD2006}.
  
  Here, we emphasize an important observation that will be used throughout our
  paper: we can add a constant or multiply by a positive constant on all
  $\mbf{A}$'s entries without changing the solution pair of the Nash equilibrium.
  This is because, in the definition of the Nash equilibrium, the best response
  is always in comparison with other possible actions, so only the order of the
  utility matters.
  Thus without loss of generality, we can always rescale $\mbf{A}$ to $\rbra*{\mbf{A}+c\mathbf{1}}/2$ where $\mathbf{1}$ is an $m\times n$
  matrix with all entries being $1$ with $c$ being the largest absolute value of $\mbf{A}$'s entries,
  and let $\mbf{A'} = \mbf{A}/\norm{\mbf{A}}$ to guarantee that $\mbf{A}$ has non-negative entries and has operator norm no more than $1$.
  
  \subsection{Online Learning}\label{sec:online}
  
  \subsubsection{Notions of online learning}
  In general, online learning focuses on making decisions in an uncertain
  situation, where a decider is not aware of the current loss and is required to
  make decisions based on observations of previous losses.
  To be more exact, we fix the number of rounds $T$ and judge the performance of
  the algorithm (in the following of this subsection we use ``decider'' with the
  same meaning as the ``algorithm'') in these $T$ rounds.
  Assume that the decider is allowed to choose actions in the domain $\mcal{X}$,
  usually a convex subset of a finite-dimensional Euclidean space.
  Let $t\in [T]$ denote the current number of rounds.
  At round $t$, the decider chooses an action $x_t\in \mcal{X}$.
  (The action may depend on the decider's previous observation of the loss
  functions $l_i$ for all $i\in [t-1]$.)
  Then the decider will get the loss $l_t(x_t)$ for the current round.
  We assume that the decider can also observe the full information of $l_t$, i.e.,
  the formula of the function.
  We judge the performance of the algorithm by its \emph{regret}:
  \[
     \mcal{R}\rbra*{T} = \sum_{i=1}^T l_t(x_t) - \min_{x\in \mcal{X}} \sum_{i=1}^T l_t(x).
  \]
  Intuitively, the regret shows how far the total loss in $T$ rounds caused by the
  algorithm is from the optimal static loss.
  
  \subsubsection{Online learning in zero-sum games}
  
  To demonstrate how to compute the approximate Nash equilibrium using online
  learning algorithms, we present a useful proposition here. It states that
  from any sublinear regret learning algorithm $\mcal{A}$ with regret
  $\mcal{R}\rbra*{T}$, we can find an $O\rbra*{\mcal{R}\rbra*{T}/T}$-approximate
  Nash equilibrium of the zero-sum game in $T$ rounds.

  To be more precise, let us consider the following procedure.
  Let $\mbf{A}$ be the matrix for the two-player zero-sum game.
  The algorithm starts with some initial strategies
  $u_0\in \Delta_m, v_0\in \Delta_n$ for the two players.
  Then at each round $t$, for each player, it makes decisions with previous
  observations of the opponent's strategy.
  In particular, the row player is required to choose his/her action
  $u_t\in \Delta_m$ after considering the previous loss functions
  $g_i(x) = v_i^{\trans} \mbf{A}^{\trans}x $ for $i\in [t-1]$.
  Similarly, the column player chooses his/her action $v_t\in \Delta_n$ with
  respect to the previous loss functions $h_i(y) = -u_i^{\trans}\mbf{A}y$ for
  $i\in [t-1]$.
  After both players choose their actions at this round $t$, they will receive
  their loss functions $g_t(x) := v_t^{\trans} \mbf{A}^{\trans}x$ and
  $h_t(y) = -u_t^{\trans}\mbf{A}y$, respectively.
  
  Suppose after $T$ rounds, the regret of the row player with respect to the loss
  functions $g_t\rbra*{x}$ is $\mcal{R}\rbra*{T}$, and the regret of the column
  player with loss functions $h_t\rbra*{y}$ is $\mcal{R}'\rbra*{T}$.
  We can write the total regret $\mcal{R}\rbra*{T}+\mcal{R}'\rbra*{T}$ explicitly:
  \[
     \mcal{R}\rbra*{T} +\mcal{R}'\rbra*{T}
    = T \rbra{\max_{v\in\Delta_n} \Ip{v}{\mbf{A}^{\trans}\hat{u}} -
      \min_{u\in \Delta_m}\Ip{u}{\mbf{A}\hat{v}}},
  \]
  where the average strategy is defined as $\hat{u} = \sum_{i=1}^T u_i /T$,
  $\hat{v} = \sum_{i=1}^T v_i/T$.
  This pair is a good approximation of the Nash equilibrium for
  the game $\mbf{A}$ if the regret is $o\rbra{T}$.
  
  \subsection{Quantum Computing}\label{sec:intro-q}
  
  In quantum mechanics, a $d$-dimensional quantum state is described by a unit
  vector
  \begin{equation*}
    v = \rbra{v_0, v_1, \ldots, v_{d-1}}^{\trans},
  \end{equation*}
  usually denoted as $\ket{v}$ with the Dirac symbol $\ket{\cdot}$, in a complex
  Hilbert space $\mc^{d}$.
  The computational basis of $\mc^{d}$ is defined as $\cbra{\ket{i}}_{i=0}^{d-1}$,
  where $\ket{i} = \rbra{0, \dots, 0, 1, 0, \dots, 0}^{\trans}$ with the $i$-th
  (0-indexed) entry being $1$ and other entries being $0$.
  The inner product of quantum states $\ket{v}$ and $\ket{w}$ is defined by
  $\braket{v}{w} = \sum_{i=0}^{d-1} v_i^* w_i$, where $z^*$ denotes the conjugate
  of complex number $z$.
  The norm of $\ket{v}$ is defined by $\Abs{\ket{v}} = \sqrt{\braket{v}{v}}$.
  The tensor product of quantum states $\ket{v} \in \mc^{d_1}$ and
  $\ket{w} \in \mc^{d_2}$ is defined by
  $\ket{v} \otimes \ket{w} = \rbra{v_0w_0, v_0w_1, \dots, v_{d_1-1}w_{d_2-1}}^{\trans}
  \in \mc^{d_1d_2}$,
  denoted as $\ket{v}\ket{w}$ for short.

  A quantum bit (qubit for short) is a quantum state $\ket{\psi}$ in $\mc^{2}$,
  which can be written as $\ket{\psi} = \alpha \ket{0} + \beta \ket{1}$ with
  $\abs{\alpha}^2 + \abs{\beta}^2 = 1$.
  An $n$-qubit quantum state is in the tensor product space of $n$ Hilbert spaces
  $\mc^{2}$, i.e., $\rbra{\mc^{2}}^{\otimes n} = \mc^{2^n}$ with the computational
  basis $\cbra{\ket{0}, \ket{1}, \dots, \ket{2^n-1}}$.
  To obtain classical information from an $n$-qubit quantum state $\ket{v}$, we
  measure $\ket{v}$ on the computational basis and obtain outcome $i$ with
  probability $p\rbra{i} = \abs{\braket{i}{v}}^2$ for every $0 \leq i < 2^n$.
  The evolution of a quantum state $\ket{v}$ is described by a unitary
  transformation $U \colon \ket{v} \mapsto U\ket{v}$ such that
  $UU^\dag = U^\dag U = I$, where $U^\dag$ is the Hermitian conjugate of $U$, and
  $I$ is the identity operator.
  A quantum gate is a unitary transformation that acts on $1$ or $2$ qubits, and a
  quantum circuit is a sequence of quantum gates.
  
  Throughout this paper, we assume the quantum oracle $\mathcal{O}_{\mbf{A}}$ for
  a matrix $\mbf{A} \in \mr^{m \times n}$, which is a unitary operator such that
  for every row index $i \in \sbra*{m}$ and column index $j \in \sbra*{n}$,
  \[
    \mathcal{O}_{\mbf{A}} \ket{i}\ket{j}\ket{0} = \ket{i}\ket{j}\ket{A_{i,j}}.
  \]
  Intuitively, the oracle $\mathcal{O}_{\mbf{A}}$ reads the entry $A_{i,j}$ and
  stores it in the third register; this is potentially stronger than the classical
  counterpart when the query is a linear combination of basis vectors, e.g.,
  $\sum_{k}\alpha_{k}\ket{i_{k}}\ket{j_{k}}$ with $\sum_{k}|\alpha_{k}|^{2}=1$.
  This is known as the \emph{superposition} principle in quantum computing.
  Note that this input model for matrices is commonly used in quantum algorithms,
  e.g., linear system solvers~\cite{HHL2009} and semidefinite programming
  solvers~\cite{BS2017,vGGd2017,BKLLSW2019,vG2019}.
  
  A quantum (query) algorithm $\mathcal{A}$ is a quantum circuit that consists a
  sequence of unitary operators $G_1, G_2, \dots, G_T$, each of which is either a
  quantum gate or a quantum oracle.
  The quantum time complexity of $\mathcal{A}$ is measured by the number $T$ of
  quantum gates and quantum oracles in $\mathcal{A}$.
  The execution of $\mathcal{A}$ on $n$ qubits starts with quantum state
  $\ket{0}^{\otimes n}$, then it performs unitary operators $G_1, G_2, \dots, G_T$
  on the quantum state in this order, resulting in the quantum state
  $\ket{\phi} = G_T \dots G_2 G_1 \ket{0}^{\otimes n}$.
  Finally, we measure $\ket{\phi}$ on the computational basis $\ket{i}$ for
  $0 \leq i < 2^n$, giving a classical output $i$ with probability
  $|\langle i|\phi\rangle|^{2}$.
  
  \section{Quantum Algorithm for Online Zero-Sum Games by Sample-Based Optimistic
    Multiplicative Weight Update}
  
  Now, we present our quantum algorithm for finding an approximate Nash
  equilibrium for zero-sum games (\cref{algo:POMWU}).
  This algorithm is a modification of the optimistic multiplicative weight
  algorithm~\cite{SALS2015}, in which we use stochastic gradients to estimate true
  gradients.
  This modification utilizes the quantum advantage of Gibbs sampling (as will be
  shown in \cref{sec:multi-Gibbs}).
  It is the source of quantum speedups in the algorithm and also the reason that
  we call the algorithm \textit{sample-based}.
  To this end, we first give the definition of Gibbs sampling oracles.
  
  \begin{defi} [Approximate Gibbs sampling oracle]\label{def:Gibbs-oracle}
    Let $p \in \mr^n$ be an $n$-dimensional vector, $\epsilon >0 $ be the approximate
    error.
    We let $\Oracle_{p}(k,\epsilon)$ denote the oracle which produces $k$
    independent samples from a distribution that is $\epsilon$-close to the Gibbs
    distribution with parameter $p$ in total variation distance.
    Here, for a random variable $j$ taking value in $[n]$ following the Gibbs
    distribution with parameter $p$, we have
    $ \Pr{j = l} = \exp \rbra*{p_l}/{\sum_{i = 1}^n \exp \rbra*{p_i}}.$
  \end{defi}
  
  \renewcommand{\algorithmicrequire}{\textbf{Input:}}
  \renewcommand{\algorithmicensure}{\textbf{Output:}}
  
  \begin{algorithm}[!htp]
    \caption{Sample-Based Optimistic Multiplicative Weight Update for Matrix Games}\label{algo:POMWU}
    \begin{algorithmic}[1]
      \REQUIRE $\mathbf{A}\in \mr^{m\times n}$, additive approximation $\eps$,
      approximate Gibbs sampling oracle $\Oracle$ with error
      $\epsilon_{\G}$, total episode $T$, learning rate
      $\lambda \in \rbra{0, \sqrt{3}/6}$.
      \ENSURE $\left(\hat{u},\hat{v}\right)$ as the approximate Nash equilibrium
      of the matrix game $\mbf{A}$.
      \STATE Initialize $\hat{u} \gets \mathbf{0}_m$,
      $\hat{v} \gets \mathbf{0}_n$, $x_1\gets \mathbf{0}_m$,
      $y_1\gets \mathbf{0}_n$.
      \STATE Set $g_{1} \gets x_1$, $h_{1}\gets y_{1}$.
      \FOR{$t = 1, \dots, T$} \STATE Get $T$ independent samples
      ${i_1^t}, {i_2^t},\dots, {i_{T}^t}$ from the Gibbs sampling oracle
      $\Oracle_{-\mathbf{A}h_{t}}\rbra*{T,\epsilon_{\G}}$.
      \STATE Choose the action $\zeta_{t} = \sum_{N = 1}^T e_{i_N^t}/T$.
      \STATE Update $x_{t+1} \gets x_t + \lambda \zeta_{t}$,
       $g_{t+1} \gets x_{t+1} + \lambda \zeta_{t}$.
       $\hat{u}\gets \hat{u}+\frac{1}{T}\zeta_{t}$.
      \STATE Get $T$ independent samples $j_1^t, j_2^t,\dots, j_{T}^t$ from the
      Gibbs sampling oracle
      $\Oracle_{\mathbf{A}^\trans g_{t}}\rbra*{T,\epsilon_{\G}}$.
      \STATE Choose the action $\eta_{t} = \sum_{N = 1}^T e_{j^t_N}/T$. 
      \STATE Update $y_{t+1} \gets y_t + \lambda \eta_t$,
      $h_{t+1} \gets y_{t+1} + \lambda \eta_t$,
      $\hat{v}\gets \hat{v}+\frac{1}{T}\eta_t$.
      \ENDFOR
      \STATE \textbf{return} $\left(\hat{u},\hat{v}\right)$.
    \end{algorithmic}
  \end{algorithm}
  
  Suppose the zero-sum game is represented by matrix
  $\mbf{A} \in \mr^{m \times n}$ with $\Abs{\mbf{A}} \leq 1$.
  Our sample-based optimistic multiplicative weight update algorithm is given in
  \cref{algo:POMWU}.
  \Cref{algo:POMWU} is inspired by the classical optimistic
  follow-the-regularized-leader algorithm (see \cref{append:opt-online-learning}
  for more information).
  In that classical algorithm, the update terms are essentially $\E\sbra{\zeta_t}$
  and $\E\sbra{\eta_t}$, which are computed deterministically by matrix arithmetic
  operations during the update.
  In contrast, we do this probabilistically by sampling from the Gibbs
  distributions and $\zeta_t$ and $\eta_t$ in Line 5 and Line 10 are the
  corresponding averages of the samples.
  For this to work, we need to bound the expectation of the total regret (see
  \cref{append:algoanalysis}) based on the RVU property (\cref{def:RVU}).
  Technically, the $l_1$ variances of $\zeta_t$ and $\eta_t$ turn out to be
  significant in the analysis.
  To reduce the variances, we need multiple independent samples identically
  distributed from Gibbs distributions (see Line 4 and Line 7 in
  \cref{algo:POMWU}).
  Because of the randomness from Gibbs sampling oracles, the total regret
  $\mcal{R}\rbra*{T} + \mcal{R}'\rbra*{T}$ of \cref{algo:POMWU} is a random
  variable.
  Nevertheless, we can bound the total regret by $O\rbra{\log\rbra{mn}}$ with high
  probability as follows.
  
  \begin{thm}\label{thm:error-Gibbs}
    Let $\epsilon_{\G} = 1/T$.
    After $T$ rounds of playing the zero-sum game $\mbf{A}$ by~\cref{algo:POMWU},
    the total regret will be bounded by
    \begin{equation*}
      \mcal{R}\rbra*{T} + \mcal{R}'\rbra*{T}\le 144 \lambda +
      \frac{3\log \rbra*{mn}}{\lambda}+ 12,
    \end{equation*}
    with probability at least $2/3$.
    Then, for any constant $\lambda \in \rbra{0, \sqrt{3}/6}$, the total regret is
    $O\rbra{\log\rbra{mn}}$.
    Moreover, if we choose $T = \widetilde \Theta\rbra{1/\eps}$, then
    \cref{algo:POMWU} will return an $\eps$-approximate Nash equilibrium of the
    game $\mbf{A}$.
  \end{thm}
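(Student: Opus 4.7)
The plan is to adapt the regret analysis of the classical optimistic multiplicative weight update of \cite{SALS2015} to our sample-based setting by establishing a stochastic analogue of the RVU (Regret bounded by Variation in Utilities) property, and then to convert the resulting in-expectation bound into a high-probability bound by Markov's inequality.

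First, I would analyze a single player's update in isolation. The updates $x_{t+1}\gets x_{t}+\lambda\zeta_{t}$ and $g_{t+1}\gets x_{t+1}+\lambda\zeta_{t}$ are exactly the optimistic follow-the-regularized-leader updates with negative-entropy regularizer, except that the true loss vector $\mathbf{A}h_{t}$ is replaced by the empirical estimator $\zeta_{t}$ obtained from $T$ i.i.d.\ samples drawn from an $\epsilon_{\G}$-approximate Gibbs distribution with parameter $-\mathbf{A}h_{t}$. Two elementary facts about this estimator drive everything. The bias satisfies $\Abs{\mathbb{E}[\zeta_{t}\mid h_{t}]-p_{t}}_{1}\le 2\epsilon_{\G}$, where $p_{t}$ is the ideal Gibbs vector, so each round contributes at most an $O(\epsilon_{\G})$ error to inner products. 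The variance satisfies the multinomial-style estimate $\mathbb{E}\sbra{\Abs{\zeta_{t}-\mathbb{E}\zeta_{t}}_{1}^{2}\mid h_{t}}=O(1/T)$, which is the payoff for taking $T$ samples per round in Lines~4 and~7 of \cref{algo:POMWU}.

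Second, I would run the standard OMWU potential-function argument, yielding for each player a per-round bound of the shape $(\text{KL-regularization})/\lambda+\lambda\sum_{t}\Abs{\zeta_{t}-\zeta_{t-1}}^{2}-(1/\lambda)\sum_{t}\Abs{x_{t+1}-x_{t}}^{2}$ plus error terms caused by replacing the true gradient by $\zeta_{t}$. Adding the two players' inequalities, the cross-player variation $\sum_{t}\Abs{\zeta_{t}-\eta_{t-1}}^{2}$ can be dominated by the sum of the two stability terms provided $\lambda$ is small enough, exactly as in the RVU argument of \cite{SALS2015}. Taking conditional expectations, the surviving stochastic pieces split into a variance contribution of $O(\lambda\sum_{t}1/T)=O(\lambda)$ and a bias contribution of $O(\sum_{t}\epsilon_{\G})=O(1)$ under the choice $\epsilon_{\G}=1/T$. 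Combined with the regularization term $3\log(mn)/\lambda$, this gives an expectation bound on $\mcal{R}(T)+\mcal{R}'(T)$ of the form $48\lambda+\log(mn)/\lambda+4$ up to the explicit universal constants, whose tripling via Markov applied to a suitable nonnegative shift of the total regret yields the stated $144\lambda+3\log(mn)/\lambda+12$ bound with probability at least $2/3$. Plugging in a constant $\lambda\in(0,\sqrt{3}/6)$ gives $O(\log(mn))$ total regret, and the online-to-batch conversion at the end of \cref{sec:online} with $T=\widetilde{\Theta}(1/\eps)$ turns this into an $\eps$-approximate Nash equilibrium.

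The main obstacle I anticipate is the stochastic RVU step: the negative stability term $-\sum_{t}\Abs{x_{t+1}-x_{t}}^{2}/\lambda$ must absorb not only the variation $\Abs{\zeta_{t}-\zeta_{t-1}}^{2}$ (as in the deterministic case) but also the martingale-difference contributions $\zeta_{t}-\mathbb{E}[\zeta_{t}\mid h_{t}]$ and the $\epsilon_{\G}$-bias. Making this absorption work with only a constant factor slack is what forces both the sample count $T$ per round and the Gibbs accuracy $\epsilon_{\G}=1/T$; a looser choice of either would cause the fluctuation contributions to grow with $T$ rather than telescope. A second, more minor obstacle is the high-probability conversion itself: one must verify that the relevant nonnegative quantity has the right expectation and use Markov's inequality (rather than a sharper concentration bound) to keep the proof elementary and the success probability a universal constant such as $2/3$.
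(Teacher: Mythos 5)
Your plan follows the same route as the paper's proof: view the procedure through the optimistic-FTRL lens, invoke the RVU property of~\cite{SALS2015}, control the bias (an $O(\epsilon_{\G})$ per-round contribution, giving the $+12$ once $\epsilon_{\G}=1/T$) and the variance of the sample-average estimators, bound the expected total regret by roughly $48\lambda+\log(mn)/\lambda+4$, and pass to probability $2/3$ by Markov using nonnegativity of the summed regrets (the paper's \cref{lem:pos_reg}; no shift is needed), then apply the online-to-batch conversion. However, there is one concrete quantitative error at the heart of your argument: the claim $\E\sbra{\norm{\zeta_t-\E\zeta_t}_1^2\mid h_t}=O(1/T)$ is false in general. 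For an average of $T$ i.i.d.\ draws from a distribution over $m$ outcomes, the expected squared $\ell_1$ deviation is of order $\min(m,T)/T$ (for the uniform distribution with $m\gg T$ it is $\Theta(1)$), so in the $\ell_1/\ell_\infty$ norm pair the variance terms accumulate to $O(\lambda\min(m+n,T))$ over $T$ rounds rather than $O(\lambda)$, and the claimed $O(\log(mn))$ regret does not follow. The paper avoids this by running the entire RVU argument in the $\ell_2$ norm: negative entropy is still $1$-strongly convex with respect to $\norm{\cdot}_2$ on the simplex (since $\norm{\cdot}_2\le\norm{\cdot}_1$), the per-round variance of an average of i.i.d.\ basis vectors satisfies $\E\norm{\zeta_t-u_t}_2^2\le 1/T$, and the loss variation is controlled via $\norm{\mathbf{A}(\eta_{t+1}-\eta_t)}_2\le\norm{\eta_{t+1}-\eta_t}_2$ — which is exactly why the theorem assumes the operator-norm bound $\Abs{\mbf{A}}\le 1$ rather than an entrywise bound. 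Your constants only close after making the same (or an equivalent) choice of norm.

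A lighter structural remark: the ``stochastic RVU'' step you flag as the main obstacle is not actually needed. The paper observes that the ideal Gibbs iterates $u_t,v_t$ are \emph{exactly} the deterministic Opt-FTRL iterates for the realized sampled losses $\mathbf{A}\eta_t$ and $-\mathbf{A}^{\trans}\zeta_t$ (with predictions $\mathbf{A}\eta_{t-1}$, $-\mathbf{A}^{\trans}\zeta_{t-1}$), so the deterministic RVU inequality of \cref{prop:OptFTRLRVU} applies path-by-path; the sampling randomness enters only through (i) the residual inner products $\sum_t\Ip{u_t-u}{\mathbf{A}(v_t-\eta_t)}$, which have zero mean for exact oracles and expectation at most $2T\epsilon_{\G}$ with oracle error, and (ii) the terms $\E\sum_t\norm{\eta_t-v_t}^2\le 2+2T\epsilon_{\G}^2$ arising inside the variation bound, which the stability term absorbs for any $\lambda<\sqrt{3}/6$ via $3\lambda<1/(4\lambda)$. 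Relatedly, your description of the updates slightly misidentifies the objects: $x_t,g_t,h_t$ are cumulative (predicted) loss vectors, and $\zeta_t$ estimates the row player's Gibbs action $u_t$ (serving as the column player's loss), not the vector $\mathbf{A}h_t$; this does not affect the plan's structure but should be fixed in any write-up.
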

  
  The proof of \cref{thm:error-Gibbs} is deferred to \cref{append:sample-with-error}.
  Combining \cref{thm:error-Gibbs} and our fast quantum multi-Gibbs sampler in
  \cref{thm:gibbs} (which will be developed in the next section), we obtain a
  quantum algorithm for finding an $\eps$-approximate Nash equilibrium of zero-sum
  games.
  
  \begin{cor}\label{cor:main}
    If we choose $T = \widetilde \Theta\rbra{1/\eps}$
    and use quantum multi-Gibbs sampling (\cref{algo:gibbs}) with
    $\epsilon_{\G} = 1/T$, then \cref{algo:POMWU} will return an
    $\eps$-approximate Nash equilibrium in quantum time
    $\widetilde O \rbra{\sqrt{m+n}/\eps^{2.5}}$.
  \end{cor}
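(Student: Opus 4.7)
The proof combines the regret guarantee of \cref{thm:error-Gibbs} with a careful accounting of the cost of the two Gibbs-sampler calls per round of \cref{algo:POMWU}. Correctness is essentially immediate: \cref{thm:error-Gibbs} already says that with $T=\widetilde\Theta(1/\eps)$ rounds and Gibbs accuracy $\epsilon_{\G}=1/T$, the pair $(\hat u,\hat v)$ returned by \cref{algo:POMWU} is an $\eps$-approximate Nash equilibrium with high probability. Thus the only real content of the corollary is the time bound.

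The per-round cost is dominated by the two calls to the multi-Gibbs sampler, $\Oracle_{-\mathbf{A}h_t}(T,\epsilon_{\G})$ on $\mr^m$ and $\Oracle_{\mathbf{A}^\trans g_t}(T,\epsilon_{\G})$ on $\mr^n$, each producing $k=T$ samples. By \cref{thm:gibbs} each such call runs in quantum time $\widetilde O(\beta_t\sqrt{(m+n)T})$, where $\beta_t$ is an $\ell_\infty$ bound on the log-weights. To bound $\beta_t$, I would track the iterates: since $\zeta_s$ and $\eta_s$ are convex combinations of standard basis vectors, $\|\zeta_s\|_1=\|\eta_s\|_1=1$, so the updates $x_{t+1}=x_t+\lambda\zeta_t$, $g_{t+1}=x_{t+1}+\lambda\zeta_t$ give $\|g_t\|_1\le t\lambda\le T\lambda$, and symmetrically $\|h_t\|_1\le T\lambda$. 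Because $\Abs{\mbf{A}}\leq 1$ implies $|A_{i,j}|\le 1$, we obtain $\|\mathbf A^\trans g_t\|_\infty\le\|g_t\|_1\le T\lambda$ and $\|\mathbf A h_t\|_\infty\le T\lambda$; since $\lambda\in(0,\sqrt 3/6)$ is a fixed constant, $\beta_t=O(T)$. Plugging this into \cref{thm:gibbs} gives a cost of $\widetilde O(T\cdot\sqrt{(m+n)T})=\widetilde O(T^{1.5}\sqrt{m+n})$ per call, and hence per round.

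Summing over $T=\widetilde\Theta(1/\eps)$ rounds yields the claimed bound $\widetilde O(T^{2.5}\sqrt{m+n})=\widetilde O(\sqrt{m+n}/\eps^{2.5})$. The hard part, I expect, is not the arithmetic above but checking that each invocation of \cref{thm:gibbs} really can be executed in the stated time from only the standard oracle $\mathcal O_{\mathbf A}$: one has to exhibit a coherent evaluator for entries of $\mathbf A h_t$ and $\mathbf A^\trans g_t$ built from $\mathcal O_{\mathbf A}$ together with the incrementally maintained sparse descriptions of $g_t,h_t$ (which acquire at most $T$ new nonzero coordinates per round, so have sparsity $\widetilde O(T)$ throughout), and argue that the resulting per-entry query overhead is absorbed inside $\widetilde O(\cdot)$. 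Once this bookkeeping is in place, the corollary follows from feeding the regret bound of \cref{thm:error-Gibbs} and the sampler bound of \cref{thm:gibbs} into the two-player reduction from online learning to approximate Nash equilibria reviewed in \cref{sec:online}.
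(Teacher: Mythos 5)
Your proposal is correct and follows essentially the same route as the paper: correctness comes directly from \cref{thm:error-Gibbs} (in the form of \cref{thm:main-theorem-error-restate}), and the time bound comes from invoking \cref{thm:gibbs} with $k=T$ samples and $\beta\le\lambda T=\widetilde\Theta(1/\eps)$ (exactly the bound the paper uses, via $\Abs{g_t}_1,\Abs{h_t}_1\le\lambda T$ and QRAM-stored iterates), giving $\widetilde O(T^{1.5}\sqrt{m+n})$ per round and $\widetilde O(\sqrt{m+n}/\eps^{2.5})$ overall. The only detail you gloss over is the one-off classical cost $\widetilde O(\beta^3)=\widetilde O(1/\eps^3)$ of computing the QSVT polynomial description, which the paper absorbs under the assumption $\eps=\Omega((m+n)^{-1})$ (see \cref{rem:term} and the accompanying footnote), falling back to the classical algorithm otherwise.
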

  
  \section{Fast Quantum Multi-Gibbs Sampling}\label{sec:multi-Gibbs}
  
  In \cref{algo:POMWU}, the vectors $h_t$ and $g_t$ updated in each round are used
  to generate independent samples from Gibbs distributions
  $\Oracle_{-\mathbf{A}h_{t}}$ and $\Oracle_{\mathbf{A}^\intercal g_{t}}$.
  Here, $h_t$ and $g_t$ are supposed to be stored in classical registers.
  To allow quantum speedups, we store $h_t$ and $g_t$ in
  quantum-read classical-write random access memory (QRAM)~\cite{GLM2008}, which is commonly used in prior work~\cite{vG2019,BGJST2022}.
  Specifically, a QRAM can store/modify an array $a_1, a_2, \dots, a_{\ell}$ of
  classical data and provide quantum (read-only) access to them, i.e., a unitary
  operator $U_{\text{QRAM}}$ is given such that
  \[
  U_{\text{QRAM}} \colon \ket{i} \ket{0} \mapsto \ket{i} \ket{a_i}.
  \]
  Without loss of generality (see \cref{rem:wlog}), suppose we have quantum oracle
  $\mathcal{O}_{\mbf{A}}$ for $\mbf{A} \in \mr^{n \times n}$ with
  $A_{i,j} \in \sbra*{0, 1}$, and QRAM access to a vector $z \in \mr^{n}$ with
  $z_i\ge 0$. 
  
  We also need the polynomial approximation of the exponential function for applying the QSVT technique~\cite{GSLW2019}:
  \begin{lmm} [Polynomial approximation, Lemma 7
    of~\cite{vAG2019}]\label{lem:poly-approx} Let $\beta \geq 1$ and
    $\epsilon_P \in \rbra*{0, 1/2}$.
    There is a classically efficiently computable polynomial
    $P_\beta \in \mr\sbra*{x}$ of degree
    $O\rbra{\beta \log\rbra{\epsilon_P^{-1}}}$ such that
    $\abs{P_\beta\rbra*{x}} \leq 1$ for $x \in \sbra*{-1, 1}$, and
    \[
      \max_{x\in \sbra*{-1, 0}} \abs{P_\beta\rbra*{x} - \frac{1}{4}\exp\rbra*{\beta x}} \leq \epsilon_P.
    \]
  \end{lmm}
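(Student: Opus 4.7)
The plan is a two-step construction: first approximate $\exp(\beta x)$ on $[-1,0]$ by its truncated Taylor series at the origin, and then multiply by a polynomial ``soft rectangle'' that damps the product on $[0,1]$ so that it remains bounded by $1$ throughout $[-1,1]$.

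For the first step, set $T_K(x) := \frac{1}{4}\sum_{k=0}^{K}(\beta x)^k/k!$ with $K = \Theta(\beta + \log(1/\epsilon_P))$. The Lagrange remainder combined with Stirling's estimate yields $\abs{T_K(x) - \frac{1}{4}\exp(\beta x)} \leq \frac{1}{4}\rbra{e\beta/(K+1)}^{K+1} \leq \epsilon_P/2$ uniformly on $[-1,1]$. In particular $\abs{T_K(x)} \leq 1/4 + \epsilon_P/2 < 1$ on $[-1,0]$, but on $[0,1]$ the truncated series can grow as large as $\sim e^\beta/4$, violating the required uniform bound on its own.

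For the second step, construct a polynomial $R(x)$ of degree $O(\beta\log(1/\epsilon_P))$ such that $\abs{R(x)-1}\leq \epsilon_P$ on $[-1,0]$, $\abs{R(x)}\leq 4e^{-\beta}$ on $[\delta,1]$ for a transition width $\delta = \Theta(1/\beta)$, and $\abs{R(x)}\leq 1$ throughout $[-1,1]$. Such an approximate indicator of $[-1,0]$ can be assembled from standard polynomial approximations of the sign function (as in quantum signal processing): a transition window of width $\delta$ with accuracy $\eta$ requires degree $O(\delta^{-1}\log(1/\eta))$, so taking $\delta = \Theta(1/\beta)$ and $\eta = \Theta(e^{-\beta})$ suffices, and the resulting degree is within the claimed bound in the interesting regime $\log(1/\epsilon_P) \geq \beta$.

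Finally, define $P_\beta(x) := T_K(x)\,R(x)$, a polynomial of total degree $O(\beta\log(1/\epsilon_P))$, and verify the two conclusions by splitting $[-1,1]$ into three regions. On $[-1,0]$, the triangle inequality gives $\abs{P_\beta(x) - \frac{1}{4}\exp(\beta x)} \leq \abs{R(x)}\cdot\epsilon_P/2 + \frac{1}{4}\abs{\exp(\beta x)}\cdot\epsilon_P \leq \epsilon_P$. On the narrow transition $[0,\delta]$, $\abs{T_K(x)}\leq e^{\beta\delta}/4 = O(1)$ and $\abs{R(x)}\leq 1$, so $\abs{P_\beta(x)} = O(1)$; on $[\delta,1]$, $\abs{T_K(x)}\leq e^\beta/4$ while $\abs{R(x)}\leq 4e^{-\beta}$, so again $\abs{P_\beta(x)} = O(1)$. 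Tuning the constants and the transition width drives the overall bound below $1$. The main obstacle is precisely this balance in the choice of $\delta$: a wider window lets $T_K$ blow up on $[0,\delta]$, while a narrower window blows up the degree of $R$, and $\delta \sim 1/\beta$ is what equates these trade-offs to produce the stated degree.
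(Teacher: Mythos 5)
The paper does not prove this lemma at all; it is imported verbatim as Lemma 7 of [vAG2019], so your attempt can only be judged on its own merits. On those merits there is a genuine gap: your construction does not achieve the stated degree $O\rbra{\beta\log\rbra{\epsilon_P^{-1}}}$. The truncated Taylor factor $T_K$ is fine (degree $\Theta\rbra{\beta+\log\rbra{1/\epsilon_P}}$), but the ``soft rectangle'' $R$ is forced to have transition width $\delta=\Theta\rbra{1/\beta}$ (else $T_K\cdot R$ exceeds $1$ just to the right of $0$) \emph{and} uniform tail error $\eta=\Theta\rbra{e^{-\beta}}$ on $\sbra{\delta,1}$ (else the product is not bounded near $x=1$, where $T_K\approx e^{\beta}/4$). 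Since the degree $\Theta\rbra{\delta^{-1}\log\rbra{1/\eta}}$ for sign/step approximation with uniform error is essentially tight, your $R$ has degree $\Theta\rbra{\beta^2}$, and the total degree is $O\rbra{\beta^2+\beta\log\rbra{1/\epsilon_P}}$. You acknowledge this only matches the claim when $\log\rbra{1/\epsilon_P}\geq\beta$, but the lemma is stated for all $\epsilon_P\in\rbra{0,1/2}$, and, more importantly, the regime needed in this paper is exactly the opposite one: here $\beta\leq\lambda T=\widetilde\Theta\rbra{1/\eps}$ while $\epsilon_P=\Theta\rbra{k\epsilon_{\G}^2/n}$, so $\log\rbra{1/\epsilon_P}=O\rbra{\log\rbra{\rbra{m+n}/\eps}}\ll\beta$. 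With your degree bound, the QSVT step in \cref{algo:gibbs} would use $\widetilde O\rbra{\beta^2}$ rather than $\widetilde O\rbra{\beta}$ queries, inflating \cref{thm:gibbs} from $\widetilde O\rbra{\beta\sqrt{nk}}$ to $\widetilde O\rbra{\beta^2\sqrt{nk}}$ and breaking the $\widetilde O\rbra{\sqrt{m+n}/\eps^{2.5}}$ bound of \cref{cor:main}.

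The missing idea is that a plain product of a Taylor truncation centered at $0$ with a uniform-error step cannot avoid paying for the $e^{\beta}$ growth of the truncation on $\sbra{0,1}$. The route used to get $O\rbra{\beta\log\rbra{1/\epsilon_P}}$ (as in [vAG2019], building on the local-Taylor-series boundedness machinery accompanying quantum singular value transformation [GSLW2019]) instead re-centers the expansion, e.g.\ at $x_0=-1/2$ with radius $r=1/2$ and margin $\delta=\Theta\rbra{1/\beta}$, so that the weighted coefficient sum $\frac14 e^{-\beta/2}\sum_k \beta^k\rbra{r+\delta}^k/k! = e^{\beta\delta}/4$ is $O\rbra{1}$; the boundedness on all of $\sbra{-1,1}$ then comes out of that machinery with degree $O\rbra{\rbra{r/\delta}\log\rbra{1/\epsilon_P}}=O\rbra{\beta\log\rbra{1/\epsilon_P}}$, rather than from multiplying by a rectangle whose error must beat $e^{-\beta}$. (The remaining issues in your write-up --- the final bound being ``$O\rbra{1}$'' rather than $\leq 1$, and the remainder estimate on $\sbra{0,1}$ missing an $e^{\beta}$ factor that is harmless there --- are matters of constants and would be fixable; the degree deficit is not.)
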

  Then, we can produce multiple samples from $\Oracle_{\mbf{A}z}$ efficiently by
  \cref{algo:gibbs} on a quantum computer. \cref{algo:gibbs} is inspired by~\cite{H2022} about preparing multiple samples
  of a quantum state, with quantum access to its amplitudes.
  However, we do not have access to the exact values of the amplitudes, which are
  $\rbra{\mbf{A}z}_i$ in our case.
  To resolve this issue, we develop consistent quantum amplitude estimation (see
  \cref{append:consistent-amplitude-estimation}) to estimate $\rbra{\mbf{A}z}_i$
  with a unique answer (Line 1).
  After having prepared an initial quantum state $\ket{u_{\text{guess}}}$, we
  use quantum singular value decomposition~\cite{GSLW2019} to correct the tail
  amplitudes (Line 6), and finally obtain the desired quantum state
  $\ket{\tilde u_{\text{Gibbs}}}$ by quantum amplitude amplification~\cite{BHMT02}
  (Line 7). We have the following (see \cref{append:multi-Gibbs-proof} for its proof): 
  
  \begin{algorithm}[t]
    \caption{Quantum Multi-Gibbs Sampling
      $\Oracle_{\mbf{A}z}\rbra{k,\epsilon_{\G}}$}\label{algo:gibbs}
    \begin{algorithmic}[1]
      \REQUIRE Quantum oracle $\mathcal{O}_{\mbf{A}}$ for
      $\mathbf{A}\in \mr^{n \times n}$, QRAM access to $z \in \mr^n$ with
      $\Abs{z}_1 \leq \beta$, polynomial $P_{2\beta}$ with parameters $\epsilon_P$
      by \cref{lem:poly-approx}, number $k$ of samples.
      We write $u = \mbf{A} z$.
      \ENSURE $k$ independent samples $i_1, i_2, \dots, i_k$.
      \STATE Obtain
      $\mathcal{O}_{\tilde u} \colon \ket{i}\ket{0} \mapsto \ket{i}\ket{\tilde u_i}$
      by consistent quantum amplitude estimation such that
      $u_i \leq \tilde u_i \leq u_i + 1$.
      \STATE Find the set $S \subseteq \sbra*{n}$ of indexes of the $k$ largest
      $\tilde u_i$ by quantum $k$-maximum finding, with access to
      $\mathcal{O}_{\tilde u}$.
      \STATE Obtain $\tilde u_i$ for all $i \in S$ from $\mathcal{O}_{\tilde u}$,
      then compute $\tilde u^* = \min\limits_{i \in S} \tilde u_i$ and
      $W = \sum\limits_{i \in S} \exp\rbra*{\tilde u_i} + \rbra*{n-k} \exp\rbra*{\tilde u^*}$.
      \FOR{$\ell = 1, \dots, k$}
      \STATE Prepare the quantum state
      $
        \ket{u_{\text{guess}}} = \sum_{i \in S} \sqrt{\frac{\exp\rbra*{\tilde u_i}}{W} } \ket{i} +
        \sum_{i \notin S} \sqrt{\frac{\exp\rbra*{\tilde u^*}}{W} } \ket{i}.
      $
      \STATE Obtain unitary $U^{\exp}$ such that
      $\bra{0}^{\otimes a} U^{\exp} \ket{0}^{\otimes a} =
      \diag\rbra*{P_{2\beta}\rbra{u - \max\cbra*{\tilde u, \tilde u^*}}} / 4\beta$
      by QSVT.
      \STATE Post-select
      $\ket{\tilde u_{\text{Gibbs}}} \propto \bra{0}^{\otimes a} U^{\exp}
      \ket{u_{\text{guess}}} \ket{0}^{\otimes a}$
      by quantum amplitude amplification.
      \STATE Let $i_\ell$ be the measurement outcome of
      $\ket{\tilde u_{\text{Gibbs}}}$ in the computational basis.
      \ENDFOR
      \STATE \textbf{return} $i_1, i_2, \dots, i_k$.
    \end{algorithmic}
  \end{algorithm}
  
  \begin{thm} [Fast quantum multi-Gibbs sampling]\label{thm:gibbs}
    For $k\in [n]$, if we set
    $\epsilon_P = \Theta \rbra{k\epsilon^2_{\G}/n}$, then
    \cref{algo:gibbs} will produce $k$ independent and identical distributed
    samples from a distribution that is $\epsilon_{\G}$-close to
    $\Oracle_{\mbf{A}z}$ in total variation distance, in quantum time
    $\widetilde O\rbra{\beta \sqrt{nk}}$.
  \end{thm}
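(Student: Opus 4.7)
The plan is to separately bound the one-time preprocessing cost (Lines 1--3), the per-sample inner-loop cost (Lines 5--8), and the per-sample total-variation error, then sum. The key structural identity that saves a $\sqrt{k}$ factor over $k$ independent single-Gibbs samples is $W/Z = O\rbra*{n/k}$, where $Z := \sum_i \exp\rbra*{u_i}$ is the true Gibbs normalizer; this yields post-selection success probability $\Omega\rbra*{k/n}$ and amplitude amplification in $\widetilde O\rbra*{\sqrt{n/k}}$ rounds.

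For preprocessing, consistent quantum amplitude estimation (\cref{append:consistent-amplitude-estimation}) produces $\mathcal{O}_{\tilde u}$ with $u_i \leq \tilde u_i \leq u_i + 1$ in $\widetilde O\rbra*{\beta}$ queries per coherent call, since $u_i \in \sbra*{0, \beta}$. Consistency is essential: both $k$-max finding and the downstream QSVT must see a single canonical value $\tilde u_i$ per index $i$. Quantum $k$-maximum finding~\cite{DHHM2006} on top of $\mathcal{O}_{\tilde u}$ extracts $S$ in $\widetilde O\rbra*{\sqrt{nk}}$ oracle calls, and $\tilde u^*$ and $W$ are computed classically in $O\rbra*{k}$ time. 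Preprocessing thus costs $\widetilde O\rbra*{\beta\sqrt{nk}}$ in total.

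For a single iteration of the inner loop, write $M_i := \max\cbra*{\tilde u_i, \tilde u^*}$, so that $u_i \leq M_i \leq u_i + 1$. The guess state $\ket{u_{\text{guess}}}$ is prepared in $\widetilde O\rbra*{\log n}$ time via QRAM. Applying QSVT~\cite{GSLW2019} with the polynomial $P_{2\beta}$ from \cref{lem:poly-approx} to the appropriate block encoding (built from $\mathcal{O}_{\mbf{A}}$, QRAM on $z$, and $\mathcal{O}_{\tilde u}$) implements $\diag\rbra*{\exp\rbra*{u - M}/4}$ up to $\epsilon_P$ error, at $\widetilde O\rbra*{\beta}$ gates per invocation; the amplitude on $\ket{i}\ket{0}^{\otimes a}$ after $U^{\exp}$ is $\sqrt{\exp\rbra*{M_i}/W} \cdot \exp\rbra*{u_i - M_i}/4$, whose squared sum is $\Theta\rbra*{Z/W}$ using $M_i = u_i + O\rbra*{1}$. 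For the crucial $W/Z$ bound, split $W = \sum_{i \in S}\exp\rbra*{\tilde u_i} + \rbra*{n-k}\exp\rbra*{\tilde u^*} \leq e Z + e\rbra*{n-k}\exp\rbra*{u^{(k)}}$, where $u^{(k)}$ is the $k$-th-largest coordinate of $u$; since $Z \geq k\exp\rbra*{u^{(k)}}$, the second piece is at most $e\rbra*{n/k}Z$, giving $W/Z = O\rbra*{n/k}$. The post-selection probability is therefore $\Omega\rbra*{k/n}$, amplitude amplification~\cite{BHMT02} draws one sample in $\widetilde O\rbra*{\sqrt{n/k}}$ QSVT calls, and the per-sample cost is $\widetilde O\rbra*{\beta\sqrt{n/k}}$.

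For the error, the polynomial approximation perturbs $U^{\exp}\ket{u_{\text{guess}}}\ket{0}^{\otimes a}$ by $O\rbra*{\epsilon_P}$ in $\ell_2$ norm pre-amplification; after amplifying by $\sqrt{W/Z} = O\rbra*{\sqrt{n/k}}$, the normalized post-selected state is $O\rbra*{\epsilon_P \sqrt{n/k}}$-close in norm to the target Gibbs state. Using $d_{\text{TV}}\rbra*{p, \tilde p} \leq 2\Abs*{\ket{\psi} - \ket{\tilde \psi}}$, setting $\epsilon_P = \Theta\rbra*{k\epsilon_{\G}^2/n}$ keeps each sample within $\epsilon_{\G}$ in total variation distance of the true Gibbs distribution. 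Summing preprocessing and $k$ inner rounds gives the claimed $\widetilde O\rbra*{\beta\sqrt{nk}}$. The main technical obstacle is the tight $W/Z = O\rbra*{n/k}$ bound, which drives the $\sqrt{k}$ improvement; it rests on both the one-sided rounding $u_i \leq \tilde u_i$ (so the flat-tail guess never underestimates) and the consistency of $\mathcal{O}_{\tilde u}$ (so that $S$ and $\tilde u^*$ are deterministic functions of the input, not quantum superpositions). Threading consistency through the two nested quantum subroutines is the other delicate bookkeeping step.
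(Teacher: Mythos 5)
Your proposal is correct and follows essentially the same route as the paper's proof in \cref{append:multi-Gibbs-proof} (consistent amplitude estimation feeding quantum $k$-maximum finding, the bound $E/W \geq \Omega\rbra*{k/n}$ driving an $\widetilde O\rbra*{\sqrt{n/k}}$-round amplitude amplification per sample, and a closeness-to-Gibbs argument with $\epsilon_P = \Theta\rbra*{k\eps_{\G}^2/n}$, cf.\ \cref{prop:u_post,prop:diff-total-variation}). One cosmetic slip: applying $P_{2\beta}$ to the block-encoding of $\diag\rbra*{u - M}/4\beta$ implements $\tfrac14\exp\rbra*{\rbra*{u-M}/2}$ rather than $\tfrac14\exp\rbra*{u-M}$, and it is precisely this half-exponent that makes the post-selected amplitudes exactly proportional to $\sqrt{\exp\rbra*{u_i}/W}$ — which is the fact your error bound implicitly uses.
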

  
  \begin{rmk}\label{rem:wlog} If $\mbf{A} \in \mr^{m \times n}$ is not a square
    matrix, then by adding $0$'s we can always enlarge $\mbf{A}$ to an
    $\rbra*{m+n}$-dimensional square matrix.
    For $\Oracle_{-\mbf{A}h_t}$ as required in \cref{algo:POMWU}, we note that
    $\Oracle_{\rbra*{\mathbf{1}-\mbf{A}}h_t}$ indicates the same distribution as
    $\Oracle_{-\mbf{A}h_t}$, where $\mathbf{1}$ has the same size as $\mbf{A}$
    with all entries being $1$.
    From the above discussion, we can always convert $\mbf{A}$ to another matrix
    satisfying our assumption, i.e., with entries in the range $\sbra*{0, 1}$.
  \end{rmk}
  
  \begin{rmk}\label{rem:term}
    The description of the unitary operator $U^{\exp}$ defined by the polynomial
    $P_{2\beta}$ can be classically computed to
    precision $\epsilon_P$ in time $\widetilde O\rbra{\beta^3}$ by~\cite{H2019},
    which is $\widetilde O\rbra{1/\eps^3}$ in our case as
    $\beta \leq \lambda T = \widetilde \Theta\rbra{1/\eps}$ is required in
    \cref{cor:main}.
    This extra cost can be neglected because
    $\widetilde O\rbra{\sqrt{m+n}/\eps^{2.5}}$ dominates the complexity whenever
    $\eps = \Omega\rbra{\rbra{m+n}^{-1}}$.
  \end{rmk}

  \begin{rmk}
    Our multi-Gibbs sampler is based on maximum finding and consistent amplitude estimation, with a guaranteed worst-case performance in each round. 
    In comparison, the dynamic Gibbs sampler in \cite{BGJST2022} maintains a hint vector, resulting in an amortized time complexity per round.
  \end{rmk}
  
  \section{Discussion}
  
  In our paper, we propose the first quantum online algorithm for zero-sum games
  with near-optimal regret.
  This is achieved by developing a sample-based stochastic version of the
  optimistic multiplicative weight update method~\cite{SALS2015}.
  Our core technical contribution is a fast multi-Gibbs sampling, which may have
  potential applications in other quantum computing scenarios.
  
  Our result naturally gives rise to some further open questions.
  For instance: Can we improve the dependence on $\eps$ for the time complexity?
  And can we further explore the combination of optimistic learning and quantum
  computing into broader applications?
  Now that many heuristic quantum approaches for different machine learning
  problems have been realized, e.g.in~\cite{HCTHKCG2019,SAHS2021,HSNS2021}, can
  fast quantum algorithms for zero-sum games be realized in the near future?

  \section*{Acknowledgments}
  
  We would like to thank Kean Chen, Wang Fang, Ji Guan, Junyi Liu, Xinzhao Wang,
  Chenyi Zhang, and Zhicheng Zhang for helpful discussions.
  Minbo Gao would like to thank Mingsheng Ying for valuable suggestions.
  
  Minbo Gao was supported by the National Key R\&D Program of China (2018YFA0306701) and
  the National Natural Science Foundation of China (61832015).
  Zhengfeng Ji was supported by a startup fund from Tsinghua University, and the Department
  of Computer Science and Technology, Tsinghua Univeristy.
  Tongyang Li was supported by a startup fund from Peking University, and the Advanced
  Institute of Information Technology, Peking University.
  Qisheng Wang was supported by the MEXT Quantum Leap Flagship Program (MEXT Q-LEAP) grants
  No.\ JPMXS0120319794.

\bibliographystyle{alpha}
\bibliography{zero-sum.bib}

\newcommand{\etalchar}[1]{$^{#1}$}
\begin{thebibliography}{vAGGdW17}

\bibitem[ACH{\etalchar{+}}18]{ACHKN2018}
Scott Aaronson, Xinyi Chen, Elad Hazan, Satyen Kale, and Ashwin Nayak.
\newblock Online learning of quantum states.
\newblock In {\em Advances in Neural Information Processing Systems}, volume~31, 2018.

\bibitem[ADF{\etalchar{+}}22]{ADFFGS2022}
Ioannis Anagnostides, Constantinos Daskalakis, Gabriele Farina, Maxwell Fishelson, Noah Golowich, and Tuomas Sandholm.
\newblock Near-optimal no-regret learning for correlated equilibria in multi-player general-sum games.
\newblock In {\em Proceedings of the 54th Annual ACM SIGACT Symposium on Theory of Computing}, pages 736--749, 2022.

\bibitem[AHK12]{AHK2012}
Sanjeev Arora, Elad Hazan, and Satyen Kale.
\newblock The multiplicative weights update method: a meta-algorithm and applications.
\newblock {\em Theory of Computing}, 8(6):121--164, 2012.

\bibitem[Amb12]{A2012}
Andris Ambainis.
\newblock Variable time amplitude amplification and quantum algorithms for linear algebra problems.
\newblock In {\em Proceedings of the 29th Symposium on Theoretical Aspects of Computer Science}, volume~14, pages 636--647, 2012.

\bibitem[BCC{\etalchar{+}}15]{BCCKS2015}
Dominic~W. Berry, Andrew~M. Childs, Richard Cleve, Robin Kothari, and Rolando~D. Somma.
\newblock Simulating {Hamiltonian} dynamics with a truncated {Taylor} series.
\newblock {\em Physical Review Letters}, 114(9):090502, 2015.

\bibitem[BGJ{\etalchar{+}}23]{BGJST2022}
Adam Bouland, Yosheb~M Getachew, Yujia Jin, Aaron Sidford, and Kevin Tian.
\newblock Quantum speedups for zero-sum games via improved dynamic {G}ibbs sampling.
\newblock In {\em Proceedings of the 40th International Conference on Machine Learning}, volume 202, pages 2932--2952, 2023.

\bibitem[BHMT02]{BHMT02}
Gilles Brassard, Peter H{\o}yer, Michele Mosca, and Alain Tapp.
\newblock Quantum amplitude amplification and estimation.
\newblock {\em Quantum Computation and Information}, 305:53--74, 2002.

\bibitem[BKL{\etalchar{+}}19]{BKLLSW2019}
Fernando G. S.~L. Brand{\~{a}}o, Amir Kalev, Tongyang Li, Cedric Yen-Yu Lin, Krysta~M. Svore, and Xiaodi Wu.
\newblock Quantum {SDP} solvers: Large speed-ups, optimality, and applications to quantum learning.
\newblock In {\em Proceedings of the 46th International Colloquium on Automata, Languages, and Programming}, volume 132, pages 27:1--27:14, 2019.

\bibitem[BS17]{BS2017}
Fernando~G.S.L. Brand{\~{a}}o and Krysta~M. Svore.
\newblock Quantum speed-ups for solving semidefinite programs.
\newblock In {\em Proceedings of the 58th Annual Symposium on Foundations of Computer Science}, pages 415--426, 2017.

\bibitem[CBL06]{CNL2006}
Nicolo Cesa-Bianchi and Gabor Lugosi.
\newblock {\em Prediction, Learning, and Games}.
\newblock Cambridge University Press, 2006.

\bibitem[CD06]{CD2006}
Xi~Chen and Xiaotie Deng.
\newblock Settling the complexity of two-player nash equilibrium.
\newblock In {\em Proceedings of the 47th Annual Symposium on Foundations of Computer Science}, pages 261--272, 2006.

\bibitem[CHL{\etalchar{+}}22]{CHLLWY2022}
Xinyi Chen, Elad Hazan, Tongyang Li, Zhou Lu, Xinzhao Wang, and Rui Yang.
\newblock Adaptive online learning of quantum states, 2022.

\bibitem[CJST19]{CJST2019}
Yair Carmon, Yujia Jin, Aaron Sidford, and Kevin Tian.
\newblock Variance reduction for matrix games.
\newblock In {\em Advances in Neural Information Processing Systems}, volume~32, pages 11377--11388, 2019.

\bibitem[CW12]{CW2012}
Andrew~M. Childs and Nathan Wiebe.
\newblock Hamiltonian simulation using linear combinations of unitary operations.
\newblock {\em Quantum Information and Computation}, 12(11--12):901--924, 2012.

\bibitem[CW20]{CW2020}
Yifang Chen and Xin Wang.
\newblock More practical and adaptive algorithms for online quantum state learning, 2020.

\bibitem[DDK15]{DDK2015}
Constantinos Daskalakis, Alan Deckelbaum, and Anthony Kim.
\newblock Near-optimal no-regret algorithms for zero-sum games.
\newblock {\em Games and Economic Behavior}, 92:327--348, 2015.

\bibitem[DGP09]{DGP2009}
Constantinos Daskalakis, Paul~W. Goldberg, and Christos~H. Papadimitriou.
\newblock The complexity of computing a nash equilibrium.
\newblock {\em SIAM Journal on Computing}, 39(1):195--259, 2009.

\bibitem[DHHM06]{DHHM2006}
Christoph D{\"{u}}rr, Mark Heiligman, Peter H{\o}yer, and Mehdi Mhalla.
\newblock Quantum query complexity of some graph problems.
\newblock {\em SIAM Journal on Computing}, 35(6):1310--1328, 2006.

\bibitem[DISZ18]{DISZ2018}
Constantinos Daskalakis, Andrew Ilyas, Vasilis Syrgkanis, and Haoyang Zeng.
\newblock Training gans with optimism.
\newblock In {\em Proceedings of the 6th International Conference on Learning Representations}, 2018.

\bibitem[GK95]{GK1995}
Michael~D. Grigoriadis and Leonid~G. Khachiyan.
\newblock A sublinear-time randomized approximation algorithm for matrix games.
\newblock {\em Operations Research Letters}, 18(2):53--58, 1995.

\bibitem[GLM08]{GLM2008}
Vittorio Giovannetti, Seth Lloyd, and Lorenzo Maccone.
\newblock Quantum random access memory.
\newblock {\em Physical Review Letters}, 100(16):160501, 2008.

\bibitem[GR02]{GR2002}
Lov Grover and Terry Rudolph.
\newblock Creating superpositions that correspond to efficiently integrable probability distributions, 2002.

\bibitem[Gro00]{G2000}
Lov~K. Grover.
\newblock Synthesis of quantum superpositions by quantum computation.
\newblock {\em Physical Review Letters}, 85(6):1334, 2000.

\bibitem[GSLW19]{GSLW2019}
András Gilyén, Yuan Su, Guang~Hao Low, and Nathan Wiebe.
\newblock Quantum singular value transformation and beyond: Exponential improvements for quantum matrix arithmetics.
\newblock In {\em Proceedings of the 51st Annual ACM SIGACT Symposium on Theory of Computing}, pages 193--204, 2019.

\bibitem[Haa19]{H2019}
Jeongwan Haah.
\newblock Product decomposition of periodic functions in quantum signal processing.
\newblock {\em Quantum}, 3:190, 2019.

\bibitem[HAM21]{HAM2021}
Yu-Guan Hsieh, Kimon Antonakopoulos, and Panayotis Mertikopoulos.
\newblock Adaptive learning in continuous games: Optimal regret bounds and convergence to nash equilibrium.
\newblock In {\em Proceedings of the 34th Conference on Learning Theory}, volume 134, pages 2388--2422, 2021.

\bibitem[Ham22]{H2022}
Yassine Hamoudi.
\newblock Preparing many copies of a quantum state in the black-box model.
\newblock {\em Physical Review A}, 105(6):062440, 2022.

\bibitem[HCT{\etalchar{+}}19]{HCTHKCG2019}
Vojtěch Havlíček, Antonio~D. Córcoles, Kristan Temme, Aram~W. Harrow, Abhinav Kandala, Jerry~M. Chow, and Jay~M. Gambetta.
\newblock Supervised learning with quantum-enhanced feature spaces.
\newblock {\em Nature}, 567(7747):209--212, 2019.

\bibitem[HHL09]{HHL2009}
Aram~W. Harrow, Avinatan Hassidim, and Seth Lloyd.
\newblock Quantum algorithm for linear systems of equations.
\newblock {\em Physical Review Letters}, 103(15):150502, 2009.

\bibitem[HSN{\etalchar{+}}21]{HSNS2021}
Matthew~P. Harrigan, Kevin~J. Sung, Matthew Neeley, Kevin~J. Satzinger, Frank Arute, Kunal Arya, Juan Atalaya, Joseph~C. Bardin, Rami Barends, Sergio Boixo, Michael Broughton, Bob~B. Buckley, David~A. Buell, Brian Burkett, Nicholas Bushnell, Yu~Chen, Zijun Chen, {Ben Chiaro}, Roberto Collins, William Courtney, Sean Demura, Andrew Dunsworth, Daniel Eppens, Austin Fowler, Brooks Foxen, Craig Gidney, Marissa Giustina, Rob Graff, Steve Habegger, Alan Ho, Sabrina Hong, Trent Huang, L.~B. Ioffe, Sergei~V. Isakov, Evan Jeffrey, Zhang Jiang, Cody Jones, Dvir Kafri, Kostyantyn Kechedzhi, Julian Kelly, Seon Kim, Paul~V. Klimov, Alexander~N. Korotkov, Fedor Kostritsa, David Landhuis, Pavel Laptev, Mike Lindmark, Martin Leib, Orion Martin, John~M. Martinis, Jarrod~R. McClean, Matt McEwen, Anthony Megrant, Xiao Mi, Masoud Mohseni, Wojciech Mruczkiewicz, Josh Mutus, Ofer Naaman, Charles Neill, Florian Neukart, Murphy~Yuezhen Niu, Thomas~E. O’Brien, Bryan O’Gorman, Eric Ostby, Andre Petukhov, Harald Putterman, Chris Quintana, Pedram Roushan, Nicholas~C. Rubin, Daniel Sank, Andrea Skolik, Vadim Smelyanskiy, Doug Strain, Michael Streif, Marco Szalay, Amit Vainsencher, Theodore White, Z.~Jamie Yao, Ping Yeh, Adam Zalcman, Leo Zhou, Hartmut Neven, Dave Bacon, Erik Lucero, Edward Farhi, and Ryan Babbush.
\newblock Quantum approximate optimization of non-planar graph problems on a planar superconducting processor.
\newblock {\em Nature Physics}, 17(3):332--336, 2021.

\bibitem[KP17]{KP2017}
Iordanis Kerenidis and Anupam Prakash.
\newblock Quantum recommendation systems.
\newblock In {\em Proceedings of the 8th Innovations in Theoretical Computer Science Conference}, volume~67, pages 49:1--49:21, 2017.

\bibitem[KWS16]{KWS2016}
Ashish Kapoor, Nathan Wiebe, and Krysta Svore.
\newblock Quantum perceptron models.
\newblock In {\em Advances in Neural Information Processing Systems}, volume~29, 2016.

\bibitem[LCW19]{LCW2019}
Tongyang Li, Shouvanik Chakrabarti, and Xiaodi Wu.
\newblock Sublinear quantum algorithms for training linear and kernel-based classifiers.
\newblock In {\em Proceedings of the 36th International Conference on Machine Learning}, volume~97, pages 3815--3824, 2019.

\bibitem[LMR14]{LMR2014}
Seth Lloyd, Masoud Mohseni, and Patrick Rebentrost.
\newblock Quantum principal component analysis.
\newblock {\em Nature Physics}, 10(9):631--633, 2014.

\bibitem[Mye81]{M1981}
Roger~B. Myerson.
\newblock Optimal auction design.
\newblock {\em Mathematics of Operations Research}, 6(1):58--73, 1981.

\bibitem[Nem04]{A04}
Arkadi Nemirovski.
\newblock Prox-method with rate of convergence o(1/t) for variational inequalities with lipschitz continuous monotone operators and smooth convex-concave saddle point problems.
\newblock {\em SIAM Journal on Optimization}, 15(1):229--251, 2004.

\bibitem[Nes05a]{Yu05b}
Yu. Nesterov.
\newblock Excessive gap technique in nonsmooth convex minimization.
\newblock {\em SIAM Journal on Optimization}, 16(1):235--249, 2005.

\bibitem[Nes05b]{Yu05a}
Yu. Nesterov.
\newblock Smooth minimization of non-smooth functions.
\newblock {\em Mathematical Programming}, 103(1):127–152, may 2005.

\bibitem[ORR13]{ORR2013}
Maris Ozols, Martin Roetteler, and J{\'{e}}r{\'{e}}mie Roland.
\newblock Quantum rejection sampling.
\newblock {\em ACM Transactions on Computation Theory}, 5(3):11:1--33, 2013.

\bibitem[PW09]{PW2009}
David Poulin and Pawel Wocjan.
\newblock Sampling from the thermal quantum {Gibbs} state and evaluating partition functions with a quantum computer.
\newblock {\em Physical Review Letters}, 103(22):220502, 2009.

\bibitem[PZO22]{PZO2022}
Sarath Pattathil, Kaiqing Zhang, and Asuman Ozdaglar.
\newblock Symmetric (optimistic) natural policy gradient for multi-agent learning with parameter convergence, 2022.

\bibitem[RML14]{RML2014}
Patrick Rebentrost, Masoud Mohseni, and Seth Lloyd.
\newblock Quantum support vector machine for big data classification.
\newblock {\em Physical Review Letters}, 113:130503, Sep 2014.

\bibitem[RT02]{RT2002}
Tim Roughgarden and \'{E}va Tardos.
\newblock How bad is selfish routing?
\newblock {\em Journal of the ACM}, 49(2):236–259, mar 2002.

\bibitem[SAH{\etalchar{+}}21]{SAHS2021}
V.~Saggio, B.~E. Asenbeck, A.~Hamann, T.~Strömberg, P.~Schiansky, V.~Dunjko, N.~Friis, N.~C. Harris, M.~Hochberg, D.~Englund, S.~Wölk, H.~J. Briegel, and P.~Walther.
\newblock Experimental quantum speed-up in reinforcement learning agents.
\newblock {\em Nature}, 591(7849):229--233, 2021.

\bibitem[SALS15]{SALS2015}
Vasilis Syrgkanis, Alekh Agarwal, Haipeng Luo, and Robert~E. Schapire.
\newblock Fast convergence of regularized learning in games.
\newblock In {\em Advances in Neural Information Processing Systems}, pages 2989--2997, 2015.

\bibitem[Sch80]{S1980}
Thomas~C Schelling.
\newblock {\em The Strategy of Conflict: with a new Preface by the Author}.
\newblock Harvard university press, 1980.

\bibitem[TS13]{T2013}
Amnon Ta-Shma.
\newblock Inverting well conditioned matrices in quantum logspace.
\newblock In {\em Proceedings of the 45th Annual ACM Symposium on Theory of Computing}, pages 881--890, 2013.

\bibitem[vAG19a]{vG2019}
Joran van Apeldoorn and András Gilyén.
\newblock Improvements in quantum {SDP}-solving with applications.
\newblock In {\em Proceedings of the 46th International Colloquium on Automata, Languages, and Programming}, volume 132, pages 99:1--99:15, 2019.

\bibitem[vAG19b]{vAG2019}
Joran van Apeldoorn and András Gilyén.
\newblock Quantum algorithms for zero-sum games, 2019.

\bibitem[vAGGdW17]{vGGd2017}
Joran van Apeldoorn, Andr{\'{a}}s Gily{\'{e}}n, Sander Gribling, and Ronald de~Wolf.
\newblock Quantum {SDP}-solvers: Better upper and lower bounds.
\newblock In {\em Proceedings of the 58th Annual Symposium on Foundations of Computer Science}, pages 403--414, 2017.

\bibitem[vN28]{vN28}
John von Neumann.
\newblock Zur theorie der gesellschaftsspiele.
\newblock {\em Mathematische Annalen}, 100(1):295--320, 1928.

\bibitem[YJZS20]{YJZS2020}
Feidiao Yang, Jiaqing Jiang, Jialin Zhang, and Xiaoming Sun.
\newblock Revisiting online quantum state learning.
\newblock {\em Proceedings of the AAAI Conference on Artificial Intelligence}, 34(04):6607--6614, 2020.

\end{thebibliography}

\newpage
\appendix

\onecolumn

\section{Revisit of Optimistic Online Learning}\label{append:opt-online-learning}
  
In this section, we briefly review some important properties in the classical
optimistic online learning algorithms.
Some of the propositions in this section will be frequently used in the proof of
the regret bound.

For convenience, we will use $\psi\rbra*{\cdot}$ to denote the negative entropy
function, i.e., $\psi \colon \Delta_n \to \mr$,
$\psi \rbra*{p} = \sum_{i=1}^n p_i \log p_i$.
Note that $\log$ stands for the natural logarithm function with base
$\mathrm{e}$.

For a vector norm $\norm{\cdot}$, its dual norm is defined as:
\begin{equation*}
 \norm{y}_{*} = \max_{x} \set{\Ip{x}{y} : \norm{x}\le 1}.
\end{equation*}

\begin{prop}\label{prop:Variational-Gibbs}
  Let $L$ be a vector in $n$-dimensional space.
  If $p^{*} = \argmin_{p\in \Delta_n } \set{ \langle p, L\rangle + \psi\left(p\right)}$,
  then $p^*$ can be written as:
  \begin{equation*}
    p^* = \GibbsDis{-L}
  \end{equation*}
  and vice versa.
\end{prop}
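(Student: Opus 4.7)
The plan is to derive the optimizer by first-order conditions using a Lagrangian that enforces only the normalization constraint $\sum_i p_i = 1$; the nonnegativity constraints $p_i \ge 0$ will be inactive automatically because the entropy $\psi$ blows up at the boundary of the simplex and $p^*$ will land in the relative interior. Then strict convexity of the objective will pin down uniqueness, giving the ``vice versa'' direction for free.

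First, I would note that the map $p \mapsto \langle p, L \rangle + \psi(p)$ is continuous on the compact set $\Delta_n$ (adopting $0 \log 0 = 0$) and strictly convex, so a unique minimizer exists. Because $\partial \psi / \partial p_i = \log p_i + 1 \to -\infty$ as $p_i \to 0^+$ while the linear part stays bounded, the minimizer satisfies $p_i^* > 0$ for every $i$, meaning only the equality constraint $\sum_i p_i = 1$ is binding. Form the Lagrangian
\[
  \mathcal{L}(p, \mu) = \sum_{i=1}^n p_i L_i + \sum_{i=1}^n p_i \log p_i - \mu\Bigl(\sum_{i=1}^n p_i - 1\Bigr),
\]
and set $\partial \mathcal{L}/\partial p_i = L_i + \log p_i + 1 - \mu = 0$, which yields $p_i^* = \exp(\mu - 1 - L_i)$.

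Next, I would solve for $\mu$ by imposing $\sum_i p_i^* = 1$: this gives $\exp(\mu - 1) = 1/\sum_j \exp(-L_j)$, hence
\[
  p_i^* = \frac{\exp(-L_i)}{\sum_{j=1}^n \exp(-L_j)} = \frac{\exp(-L)_i}{\norm{\exp(-L)}_1},
\]
which is exactly the claimed Gibbs form. For the converse, I would invoke the strict convexity of the objective on $\Delta_n$: any point of $\Delta_n$ satisfying the Karush--Kuhn--Tucker stationarity condition (which the Gibbs distribution above does, by reversing the algebra) must be \emph{the} unique minimizer, so $p^*$ being the Gibbs distribution implies $p^* = \argmin_{p \in \Delta_n}\{\langle p, L\rangle + \psi(p)\}$.

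I do not expect a serious obstacle here; the only subtlety worth being explicit about is the boundary behavior argument that lets us ignore the inequality constraints $p_i \ge 0$, since otherwise one would need to write out full KKT conditions with multipliers for each $p_i \ge 0$ and verify complementary slackness. Handling this via the $\log p_i \to -\infty$ barrier effect of $\psi$ is the cleanest route and avoids any case analysis.
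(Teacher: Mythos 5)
Your proposal is correct and follows essentially the same route as the paper: both derive $p_i^* \propto \exp(-L_i)$ from first-order (Lagrangian/KKT) stationarity of the convex program $\min_{p \in \Delta_n}\{\langle p, L\rangle + \psi(p)\}$. The only differences are cosmetic: you dispatch the constraints $p_i \ge 0$ up front via the entropy-barrier interiority argument instead of carrying multipliers $u_i$ and eliminating them by complementary slackness and dual feasibility as the paper does, and you make the ``vice versa'' direction explicit through strict convexity and uniqueness of the minimizer, a point the paper leaves implicit.
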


\begin{proof}
  Write $L = \rbra*{L_1,L_2,\ldots, L_n}$.
  By definition, we know that $p^*$ is the solution to the following convex
  optimization problem:
  \begin{mini*} {p_1, p_2,\ldots, p_n}{\sum_{i=1}^n p_i L_i + \sum_{i=1}^n p_i
     \log p_i} {\label{eq:Optimize-Gibbs}}{}
      \addConstraint{\sum_{i=1}^n p_i}{=1}
      \addConstraint{\forall i, p_i}{\ge 0}
  \end{mini*}
  The Lagrangian is
  \begin{equation*}
    \mathcal{L}(p,u,v) =   \sum_{i=1}^n p_i L_i + \sum_{i=1}^n p_i \log p_i -
    \sum_{i=1}^n u_i p_i + v\rbra*{\sum_{i=1}^n p_i -1}
  \end{equation*}
  From KKT conditions, we know that the stationarity is:
  \begin{equation}\label{eq:KKTSta}
    L_i + 1 + \log p_i - u_i +v = 0.
  \end{equation}
  The complementary slackness is:
  \begin{equation*}
    u_i p_i = 0.
  \end{equation*}
  The primal feasibility is
  \begin{equation*}
    \forall i,  p_i \ge 0; \sum_{i=1}^n p_i = 1.
  \end{equation*}
  The dual feasibility is
  \begin{equation*}
    u_i\ge 0.
  \end{equation*}
  If $u_i\ne 0$ then $p_i =0$, from stationarity we know
  $u_i = -\infty$, but that violates the dual feasibility.
  So we can conclude that $u_i = 0$ for all $i\in [n]$, thus
  $p_i \propto \exp \rbra*{-L_i}$ and the result follows.
\end{proof}

Now we present a generalized version of the optimistic multiplicative weight
algorithm called optimisitically follow the regularized leader (Opt-FTRL) in
\cref{algo:OptFTRL}.
In the algorithm, $m_t$ has the same meaning as $m\rbra*{t}$ for notation
consistency.

\begin{algorithm}[H]
    \caption{Optimistic follow-the-regularized-leader}\label{algo:OptFTRL}
    \begin{algorithmic}
      \REQUIRE The closed convex domain $\mcal{X}$.
      \ENSURE Step size $\lambda$, loss gradient prediction $m$.
      \STATE Initialize $L_0 \leftarrow 0$, choose appropriate $m_1$.
      \FOR{$t = 1, \ldots, T$}
      \STATE \textbf{Choose}
      $x_t = \argmin_{x\in \mcal{X}} \set{\lambda\Ip{L_{t-1}+m_t}{x}+\psi\rbra*{x}}$.
      \STATE Observe loss $l_t$, update $L_t = L_{t-1}+l_t$.
      \STATE Compute $m_{t+1}$ using observations till now.
      \ENDFOR
    \end{algorithmic}
\end{algorithm}

Now we study a crucial property that leads to the fast convergence of the
algorithm, called the Regret bounded by Variation in Utilities (RVU in short).
For simplicity, we only consider the linear loss function $l_t(x) = \Ip{l_t}{x}$.
(There is a little abuse of notation here.)

\begin{defi}[Regret bounded by Variations in Utilities (RVU),
  Definition $3$ in~\cite{SALS2015}]\label{def:RVU}
  Consider an online learning algorithm $\mcal{A}$ with regret
  $\mcal{R}\rbra*{T} = o\rbra*{T}$, we say that it has the property of regret
  bound by variation in utilities if for any linear loss sequence
  $l_1, l_2,\ldots, l_T$, there exists parameters
  $\alpha > 0, 0 < \beta \le \gamma$ such that the algorithm output decisions
  $x_1,x_2,\ldots, x_T, x_{T+1}$ that satisfy:
  \begin{equation*}
    \sum_{i=1}^T \Ip{l_i}{x_i} - \min_{x\in \mcal{X}}
    \sum_{i=1}^T \Ip{l_i}{x} \le \alpha + \beta \sum_{i=1}^{T-1}
    \norm{l_{i+1}-l_{i}}_{*}^2 - \gamma \sum_{i=1}^{T-1} \norm{x_{i+1}-x_i}^2,
 \end{equation*}
 where $\norm{\cdot}_{*}$ is the dual norm of $\norm{\cdot}$.
\end{defi}

We do not choose the norm to be any specific one here.
In fact,~\cite{SALS2015} have already shown that the above
optimistic follow-the-regularized-leader algorithm has the RVU property with
respect to any norm $\norm{\cdot}$ in which the negative entropy function $\psi$
is $1$-strongly convex.
So, from Pinsker's inequality, for $l_2$ norms the following result holds:

\begin{prop}[Proposition $7$ in~\cite{SALS2015}]\label{prop:OptFTRLRVU}
  If we choose $m_t = l_{t-1}$ in the optimistic follow-the-regularized-leader
  algorithm with step size $\lambda \le 1/2$, then it has the regret bound by
  variation in utilities property with the parameters
  $\alpha = \log n / {\lambda}$, $\beta = \lambda$ and
  $\gamma = 1/\rbra*{4\lambda}$, where $n$ is the dimension of $\mcal{X}$.
\end{prop}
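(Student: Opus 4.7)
The plan is to follow the standard ``ghost iterate'' proof of RVU for optimistic FTRL, adapted to the negative-entropy setting. I would introduce the non-optimistic companion iterates $w_{t+1} := \argmin_{x\in\mcal{X}}\cbra{\lambda\Ip{L_{t}}{x}+\psi(x)}$ with $w_1 = \argmin_{x\in\mcal{X}}\psi(x)$, so that $x_t$ differs from $w_t$ only through the replacement of the latest loss $l_t$ by the prediction $m_t = l_{t-1}$ in the objective. The $\log n$ term in $\alpha$ will come from the range of $\psi$ on the simplex, since $\psi(u)-\psi(w_1)\le\log n$.

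Next I would carry out the standard regret decomposition
\begin{equation*}
\sum_{t=1}^{T}\Ip{l_t}{x_t - u} = \sum_{t=1}^{T}\Ip{l_t}{w_{t+1}-u} + \sum_{t=1}^{T}\Ip{l_t - m_t}{x_t - w_{t+1}} + \sum_{t=1}^{T}\Ip{m_t}{x_t - w_{t+1}}.
\end{equation*}
For the first sum I would apply the be-the-regularized-leader telescoping inequality to bound it by $(\psi(u)-\psi(w_1))/\lambda \le \log n / \lambda$, which yields $\alpha$. For the second (``prediction error'') sum I would invoke Young's inequality with weight tuned to $\lambda$, giving $\Ip{l_t-m_t}{x_t-w_{t+1}}\le \lambda\norm{l_t-m_t}_*^2+\frac{1}{4\lambda}\norm{x_t-w_{t+1}}^2$. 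Since $m_t = l_{t-1}$, summing the $\lambda\norm{l_t-l_{t-1}}_*^2$ terms recovers the $\beta\sum_{i=1}^{T-1}\norm{l_{i+1}-l_i}_*^2$ part of the RVU bound with $\beta = \lambda$, up to a boundary contribution at $t=1$ that is absorbed into the additive constant.

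The core of the proof is the third sum. I would use the first-order optimality conditions for $x_t$ (minimizing $\lambda\Ip{L_{t-1}+m_t}{\cdot}+\psi(\cdot)$) together with those for $w_t$ and $w_{t+1}$, combined with the three-point identity for the Bregman divergence $B_\psi$ and the hypothesis that $\psi$ is $1$-strongly convex with respect to $\norm{\cdot}$, to derive the ``negative momentum'' inequality
\begin{equation*}
\Ip{m_t}{x_t - w_{t+1}} \le -\frac{1}{2\lambda}\norm{x_t - w_t}^2 - \frac{1}{2\lambda}\norm{x_t - w_{t+1}}^2.
\end{equation*}
Adding this to the Young's inequality bound, the $\frac{1}{4\lambda}\norm{x_t-w_{t+1}}^2$ is partially absorbed, leaving $-\frac{1}{4\lambda}\norm{x_t-w_{t+1}}^2 - \frac{1}{2\lambda}\norm{x_t - w_t}^2$. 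Finally I would convert these ``ghost distances'' into the displacements $\norm{x_{t+1}-x_t}^2$ demanded by the RVU definition via the triangle inequality $\norm{x_{t+1}-x_t}^2 \le 2\norm{x_{t+1}-w_{t+1}}^2 + 2\norm{w_{t+1}-x_t}^2$, pairing the negative term at step $t$ with the one at step $t+1$ to produce $-\frac{1}{4\lambda}\sum_{i=1}^{T-1}\norm{x_{i+1}-x_i}^2$, giving $\gamma = 1/(4\lambda)$.

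The main obstacle is the three-point Bregman inequality in the third step: chasing the correct signs and verifying that the strong-convexity lower bound $B_\psi(a,b)\ge \frac{1}{2}\norm{a-b}^2$ supplies both of the negative quadratic terms with the claimed coefficient $\frac{1}{2\lambda}$. The condition $\lambda \le 1/2$ enters here to guarantee that after combining with the Young step the residual $-\frac{1}{4\lambda}\norm{\cdot}^2$ is large enough to dominate the factor-of-two loss from the triangle inequality, so that the final $\gamma$ is exactly $1/(4\lambda)$. Everything else is routine telescoping and re-indexing of the boundary term at $t=1$.
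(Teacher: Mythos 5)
First, a point of reference: the paper never proves this statement --- \cref{prop:OptFTRLRVU} is imported verbatim as Proposition~7 of \cite{SALS2015}, with only the surrounding remark that the negative entropy is $1$-strongly convex in the relevant norm. So your proposal is measured against the standard proof in that reference, whose skeleton you have correctly identified (ghost FTRL iterates $w_t$, the three-term decomposition, Young's inequality with the $\lambda a^2+b^2/\rbra{4\lambda}$ split, strong convexity, and a triangle inequality at the end). The genuine gap is your third step: the standalone ``negative momentum'' inequality $\Ip{m_t}{x_t-w_{t+1}}\le -\frac{1}{2\lambda}\norm{x_t-w_t}^2-\frac{1}{2\lambda}\norm{x_t-w_{t+1}}^2$ is false. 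Already at $t=1$ with $m_1=l_0=0$ one has $x_1=w_1$, so the left-hand side is $0$ while the right-hand side equals $-\frac{1}{2\lambda}\norm{x_1-w_2}^2<0$ as soon as $l_1$ is not a constant vector; and for the unconstrained quadratic regularizer $\psi=\frac12\norm{\cdot}_2^2$ (which satisfies every ingredient you invoke: first-order optimality, the three-point identity, $1$-strong convexity) a direct computation shows your claimed inequality is equivalent to $\norm{l_t}_2^2\le 0$. So no combination of those generic tools can deliver it. The structural reason is that the negative quadratic terms are not generated by the prediction sum alone; they are generated jointly with the be-the-regularized-leader telescoping. The intermediate bound the original argument runs through has the shape
\begin{equation*}
  \sum_{t=1}^{T}\Ip{l_t}{x_t-u}\le \frac{\log n}{\lambda}+\sum_{t=1}^{T}\Ip{l_t-m_t}{x_t-w_{t+1}}-\frac{1}{2\lambda}\sum_{t=1}^{T}\rbra*{\norm{x_t-w_t}^2+\norm{x_t-w_{t+1}}^2},
\end{equation*}
i.e., the slack of the leader argument is exactly where the negative terms live; by first spending the full $\log n/\lambda$ on $\sum_t\Ip{l_t}{w_{t+1}-u}$ and then demanding that $\sum_t\Ip{m_t}{x_t-w_{t+1}}$ be purely negative, you double-count that slack.

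Even granting the corrected intermediate bound, your bookkeeping does not reach the stated $\gamma=1/\rbra{4\lambda}$. After the Young step you hold $-\frac{1}{4\lambda}\norm{x_t-w_{t+1}}^2-\frac{1}{2\lambda}\norm{x_t-w_t}^2$, and pairing across consecutive steps via $\norm{x_{t+1}-x_t}^2\le 2\norm{x_{t+1}-w_{t+1}}^2+2\norm{x_t-w_{t+1}}^2$ yields a coefficient of at most $1/\rbra{8\lambda}$ in front of $\sum_t\norm{x_{t+1}-x_t}^2$ (at most $1/\rbra{6\lambda}$ even with an optimally weighted pairing), so the route as outlined proves an RVU property with a strictly smaller $\gamma$ than claimed. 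Your explanation of where $\lambda\le 1/2$ enters cannot repair this: $\gamma$ is an absolute coefficient, independent of how small $\lambda$ is, so ``the residual dominates the factor-of-two loss because $\lambda\le 1/2$'' is not a mechanism. The hypothesis $\lambda\le 1/2$ is there for a different reason: it is precisely the condition $\beta=\lambda\le\gamma=1/\rbra{4\lambda}$ demanded by \cref{def:RVU}, which is what the two-player cancellation in \cref{append:ideal-sampler} relies on. (The boundary term $\lambda\norm{l_1-m_1}_*^2$ you fold into the additive constant is a comparatively minor issue, though it too perturbs $\alpha$ from the stated $\log n/\lambda$.) Reaching the advertised constants therefore requires the finer bookkeeping of \cite{SALS2015} rather than the triangle-inequality shortcut.
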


\section{Regret Bound and Time Complexity of Our Algorithm}\label{append:algoanalysis}

\subsection{Ideal Samplers}\label{append:ideal-sampler}

We assume that after the execution of our algorithm, the sequences we get are
$\left\{\rbra*{x_t, y_t}\right\}_{t=1}^{T+1}$ and
$\left\{\rbra*{g_t, h_t}\right\}_{t=1}^{T+1}$, respectively.
We denote $u_t:= \GibbsDis{-\mathbf{A}h_t}$ and
$v_t:=\GibbsDis{\mathbf{A}^\intercal g_t}$ to be the corresponding Gibbs
distribution, we will first assume that the Gibbs oracle in our algorithm has no
error (i.e.
$\epsilon_{\G} = 0$) until \cref{thm:main-theorem-regret-restate} is
proved.

\begin{obs}
  The sequence $\set{u_t}_{t=1}^{T}$ can be seen as the decision result of
  applying optimistic FTRL algorithm to the linear loss function $\mbf{A}\eta_t$
  with linear prediction function $\mbf{A}\eta_{t-1}$, and similarly for
  $\set{v_t}_{t = 1}^{T+1}$ with the loss function
  $-\mathbf{A}^{\trans}\zeta_t$, the prediction function
  $-\mathbf{A}^{\trans}\zeta_{t-1}$.
\end{obs}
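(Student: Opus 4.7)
The plan is to verify the observation by unrolling the accumulator variables $h_t$ and $g_t$ inside \cref{algo:POMWU} and matching the resulting expressions to the cumulative-plus-prediction quantity $L_{t-1}+m_t$ that appears inside \cref{algo:OptFTRL}. Once this algebraic identification is done, the Gibbs form of the Opt-FTRL minimizer follows directly from \cref{prop:Variational-Gibbs}.

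First I would unroll the recursion. Starting from $y_1=\mathbf{0}_n$ and $y_{t+1}\gets y_t+\lambda\eta_t$, we get $y_t=\lambda\sum_{i=1}^{t-1}\eta_i$. The assignment $h_{t+1}\gets y_{t+1}+\lambda\eta_t$ then yields $h_t=y_t+\lambda\eta_{t-1}=\lambda\bigl(\sum_{i=1}^{t-1}\eta_i+\eta_{t-1}\bigr)$ for $t\ge 2$, with $h_1=\mathbf{0}_n$ fitting the same formula under the convention $\eta_0=\mathbf{0}_n$. A symmetric computation gives $g_t=\lambda\bigl(\sum_{i=1}^{t-1}\zeta_i+\zeta_{t-1}\bigr)$ with $\zeta_0=\mathbf{0}_m$.

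Next I would interpret these through Opt-FTRL. Applying \cref{algo:OptFTRL} on $\mcal{X}=\Delta_m$ with linear losses $\ell_t(x)=\Ip{\mbf{A}\eta_t}{x}$ and prediction $m_t=\mbf{A}\eta_{t-1}$ gives cumulative loss vector $L_{t-1}=\mbf{A}\sum_{i=1}^{t-1}\eta_i$, so the update rule produces $x_t=\argmin_{x\in\Delta_m}\cbra*{\lambda\Ip{L_{t-1}+m_t}{x}+\psi(x)}$. By \cref{prop:Variational-Gibbs} with $L=\lambda(L_{t-1}+m_t)$ we obtain $x_t=\GibbsDis{-\lambda(L_{t-1}+m_t)}=\GibbsDis{-\mbf{A}h_t}$, which is exactly $u_t$. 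An entirely parallel argument on $\Delta_n$ with losses $-\mbf{A}^{\trans}\zeta_t$ and prediction $-\mbf{A}^{\trans}\zeta_{t-1}$ uses $g_t=\lambda\bigl(\sum_{i=1}^{t-1}\zeta_i+\zeta_{t-1}\bigr)$ together with \cref{prop:Variational-Gibbs} to yield the Opt-FTRL iterate $\GibbsDis{\mbf{A}^{\trans}g_t}=v_t$.

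There is no real analytic obstacle here: the content of the observation is purely that the two-stage assignment $x_{t+1}\gets x_t+\lambda\zeta_t$ followed by $g_{t+1}\gets x_{t+1}+\lambda\zeta_t$ (and analogously for $y_t,h_t$) is precisely the standard trick used in optimistic learning to add the latest loss once to the cumulative term and once more as the prediction for the next round. The only mild care needed is fixing the boundary convention $\eta_0=\zeta_0=\mathbf{0}$ at $t=1$ so that $h_1=g_1=\mathbf{0}$ agrees with $L_0+m_1=0$, and tracking the sign for the column player whose loss is $-\mbf{A}^{\trans}\zeta_t$ because Bob maximizes $\zeta^{\trans}\mbf{A}v$. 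This matching will justify invoking the RVU bound of \cref{prop:OptFTRLRVU} in the subsequent regret analysis.
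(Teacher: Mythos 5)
Your proposal is correct and follows essentially the same route as the paper's proof: unroll the recursion to get $h_t = \lambda\bigl(\sum_{i=1}^{t-1}\eta_i\bigr) + \lambda\eta_{t-1}$ (and likewise $g_t$), then invoke \cref{prop:Variational-Gibbs} to identify the Gibbs distribution $u_t$ (resp.\ $v_t$) with the Opt-FTRL minimizer for the stated losses and predictions. Your version is simply a bit more explicit about the intermediate variables $x_t, y_t$, the boundary convention at $t=1$, and the sign for the column player.
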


\begin{proof}
  By symmetry, we only consider $u_t$.
  Since $u_t = \GibbsDis{-\mbf{A}h_t}$, from \cref{prop:Variational-Gibbs} we
  can write
  \begin{equation*}
    u_t = \argmin_{u\in \Delta_m} \set{\Ip{\mbf{A}h_t}{u} + \psi\rbra*{u}}.
  \end{equation*}
  Then we notice the iteration of \cref{algo:POMWU} gives
  \begin{equation*}
    h_t = \lambda \rbra*{\sum_{i = 1}^{t-1} \eta_i } +\lambda \eta_{t-1}.
  \end{equation*}
  So from the definition of the \cref{algo:OptFTRL}, we know that our observation
  holds.
\end{proof}

This observation, together with \cref{prop:OptFTRLRVU}, gives the following
inequalities.
For any $u\in \Delta_m$, $v \in \Delta_n$, we have:
\begin{align}
  \sum_{t=1}^T \left\langle u_t-u, \mathbf{A}\eta_{t} \right\rangle
  & \leq \frac{\log m}{\lambda}+\lambda \sum_{t=1}^{T-1}
    \left\|\mathbf{A}\rbra*{\eta_{t+1}-\eta_{t}}\right\|^2-
    \frac{1}{4 \lambda} \sum_{t=1}^{T-1}\left\|u_{t+1}-u_t\right\|^2,
   \label{eqn:RVU-u}\\
  \sum_{t=1}^T \left\langle v_t-v, -\mathbf{A}^\intercal \zeta_{t} \right\rangle
  & \leq \frac{\log n}{\lambda}+\lambda \sum_{t=1}^{T-1}\left\|
    \mathbf{A}^\intercal\rbra*{\zeta_{t+1}-\zeta_{t}}\right\|^2
    -\frac{1}{4 \lambda}\sum_{t=1}^{T-1} \left\|v_{t+1}-v_t\right\|^2.
   \label{eqn:RVU-v}
\end{align}

However, we find that the loss function is slightly different from what we expect.

Let us consider the difference $q_t:= \mathbf{A}\rbra*{v_t-\eta_{t}}$ and
$p_t:= -\mathbf{A}^{\trans}\rbra*{u_t-\zeta_{t}}$, we have the decomposition of
the regret:
\begin{equation*}
  \sum_{t = 1}^T \Ip{u_t-u}{\mathbf{A}v_t} =
  \sum_{t = 1}^T \Ip{u_t-u}{\mathbf{A}\eta_t}  + \sum_{t = 1}^T \Ip{u_t-u}{q_t}.
\end{equation*}
Notice that $\E [q_t] = \E [p_t] = 0$, we have:
\begin{lmm}\label{lem:zero_residue}
  \begin{equation*}
    \E \left[\sum_{t = 1}^T \Ip{u_t-u}{q_t}\right] = 0,
    \E \left[\sum_{t = 1}^T \Ip{v_t-v}{p_t}\right] = 0
  \end{equation*}
\end{lmm}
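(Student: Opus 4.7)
The plan is to prove both identities by a one-line martingale argument, exploiting the tower property of conditional expectation and the fact that (in the idealized setting of \cref{append:ideal-sampler}) the sampler returns exact draws from the Gibbs distributions.

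Concretely, I would introduce the filtration $\mcal{F}_{t-1}$ generated by all samples drawn in rounds $1, \ldots, t-1$, i.e., by the random vectors $\zeta_1, \eta_1, \zeta_2, \eta_2, \ldots, \zeta_{t-1}, \eta_{t-1}$. A quick inspection of \cref{algo:POMWU} shows that $h_t$ is updated only on Line~9 (using $\eta_{t-1}$ from the previous iteration and its predecessors), and similarly $g_t$ is updated only on Line~6; therefore $h_t$ is a deterministic function of $\eta_1, \ldots, \eta_{t-1}$ and $g_t$ of $\zeta_1, \ldots, \zeta_{t-1}$. Consequently, both $u_t = \exp(-\mbf{A}h_t)/\Abs{\exp(-\mbf{A}h_t)}_1$ and $v_t = \exp(\mbf{A}^\trans g_t)/\Abs{\exp(\mbf{A}^\trans g_t)}_1$ are $\mcal{F}_{t-1}$-measurable.

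For the first identity, since $u_t - u$ is $\mcal{F}_{t-1}$-measurable it can be pulled out of the conditional expectation:
\[
  \E\sbra{\Ip{u_t - u}{q_t}\mid \mcal{F}_{t-1}} = \Ip{u_t - u}{\E\sbra{q_t \mid \mcal{F}_{t-1}}}.
\]
By \cref{def:Gibbs-oracle} with $\epsilon_{\G} = 0$, each index $j_N^t$ in Line~7 is drawn i.i.d.\ from the exact Gibbs distribution with parameter $\mbf{A}^\trans g_t$, so $\E\sbra{\eta_t \mid \mcal{F}_{t-1}} = v_t$, and hence $\E\sbra{q_t \mid \mcal{F}_{t-1}} = \mbf{A}\rbra*{v_t - v_t} = 0$. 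Taking a further expectation and summing over $t$ yields $\E\sbra{\sum_t \Ip{u_t-u}{q_t}} = 0$. The second identity follows by the symmetric argument: $v_t - v$ is $\mcal{F}_{t-1}$-measurable, and the samples $i_N^t$ in Line~4 give $\E\sbra{\zeta_t \mid \mcal{F}_{t-1}} = u_t$, so $\E\sbra{p_t \mid \mcal{F}_{t-1}} = -\mbf{A}^\trans\rbra*{u_t - u_t} = 0$.

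There is essentially no obstacle: the only delicate point is the measurability bookkeeping, i.e., making sure that the conditioning $\sigma$-algebra is chosen to contain $u_t, v_t$ but \emph{not} the fresh samples $\zeta_t, \eta_t$ that provide the mean-zero noise. Once the filtration is fixed correctly (as above), the result is a direct application of the tower property and linearity of expectation, with no inequalities or technical estimates required.
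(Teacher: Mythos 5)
Your proof is correct and follows essentially the same route as the paper's: condition on the history so that $u_t$ (resp.\ $v_t$) is measurable, use that the fresh samples have conditional mean $v_t$ (resp.\ $u_t$) so the residual $q_t$ (resp.\ $p_t$) has zero conditional expectation, and finish with the tower property. The only cosmetic difference is that the paper conditions merely on $\eta_1,\dots,\eta_{t-1}$ and treats the fixed $u$ term separately via $\E\sbra{q_t}=0$, whereas you condition on the full filtration containing both players' past samples, which is a slightly tidier way of doing the same bookkeeping.
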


\begin{proof}
  By symmetry, we only prove the case for $u$.
  It suffices to prove that for every $t$,
  $\E \left[ \Ip{u_t-u}{q_t}\right] = 0$.
  Since $u$ is fixed,
  $\E \left[ \Ip{u}{q_t}\right] = \Ip{u}{\E \left[q_t\right]} = 0$.

  Now consider $\E \left[ \Ip{u_t}{q_t}\right]$, notice that given
  $\eta_1, \ldots, \eta_{t-1}$ then $u_t$ is a constant.
  We have:
  \begin{align*}
    \E \left[ \Ip{u_t}{q_t}\right]
    &= \E \left[ \E\left[\Ip{u_t}{q_t}| \eta_1, \eta_2, \ldots, \eta_{t-1} \right]\right] \\
    &= \E \left[ \Ip{u_t}{\E\left[{q_t}| \eta_1, \eta_2, \ldots, \eta_{t-1} \right]}\right]\\
    &= \E \left[ \Ip{u_t}{0}\right] = 0.
 \end{align*}
\end{proof}

Now we are going to bound the term
$\sum_{t=1}^{T-1}\left\|\mathbf{A}\rbra*{\eta_{t+1}-\eta_{t}}\right\|^2 $.

\begin{lmm}
  \begin{align}
    \sum_{t=1}^{T-1}\left\|\mathbf{A}\rbra*{\eta_{t+1}-\eta_{t}}\right\|^2
    &\le 6 + 3 \sum_{t=1}^{T-1} \norm{v_{t+1}-v_{t}}^2,\\
    \sum_{t=1}^{T-1}\left\|\mathbf{A}^{\trans}\rbra*{\zeta_{t+1}-\zeta_{t}}\right\|^2
    &\le 6 + 3 \sum_{t=1}^{T-1} \norm{u_{t+1}-u_{t}}^2.
  \end{align}
\end{lmm}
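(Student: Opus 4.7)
The two inequalities are symmetric under the exchange $\rbra*{\mbf{A},\eta,v}\leftrightarrow\rbra*{\mbf{A}^{\trans},\zeta,u}$, so I would prove only the first one. The bound is understood in expectation over the sampler's randomness, consistent with how $\E$ is treated in \cref{lem:zero_residue}. My plan is to split the step $\eta_{t+1}-\eta_t$ into a true Gibbs drift $v_{t+1}-v_t$ plus two zero-mean sampling fluctuations $\eta_{t+1}-v_{t+1}$ and $v_t-\eta_t$, absorb the drift into the $3\sum\norm{v_{t+1}-v_t}^2$ term of the claim using $\Abs{\mbf{A}}\le 1$, and control the two fluctuations via the variance of the empirical mean of $T$ i.i.d.\ Gibbs samples.

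Concretely, I would write
\[
  \eta_{t+1}-\eta_{t}=\rbra*{\eta_{t+1}-v_{t+1}}+\rbra*{v_{t+1}-v_{t}}+\rbra*{v_{t}-\eta_{t}},
\]
multiply by $\mbf{A}$, and use $\norm{x+y+z}^2\le 3\rbra*{\norm{x}^2+\norm{y}^2+\norm{z}^2}$ together with $\norm{\mbf{A}\rbra*{v_{t+1}-v_t}}^2\le\norm{v_{t+1}-v_t}^2$. After summing over $t$, the middle term already produces $3\sum_{t}\norm{v_{t+1}-v_t}^2$, so it remains to establish
\[
  3\sum_{t=1}^{T-1}\rbra*{\E\sbra*{\norm{\mbf{A}\rbra*{\eta_{t+1}-v_{t+1}}}^2}+\E\sbra*{\norm{\mbf{A}\rbra*{\eta_{t}-v_{t}}}^2}}\le 6.
\]
For a single fluctuation, I would exploit that $\eta_t=\frac{1}{T}\sum_{N=1}^{T}e_{j^{t}_{N}}$ is an average of $T$ i.i.d.\ samples with mean $v_t$; expanding the square and using independence to evaluate the off-diagonal cross terms $\E\sbra*{e^{\trans}_{j^t_N}\mbf{A}^{\trans}\mbf{A}\,e_{j^t_{N'}}}=\norm{\mbf{A}v_t}^2$ for $N\neq N'$ yields
\[
  \E\sbra*{\norm{\mbf{A}\rbra*{\eta_t-v_t}}^2}=\frac{1}{T}\rbra*{\sum_{i}v_{t,i}\norm{\mbf{A}e_i}^2-\norm{\mbf{A}v_t}^2}\le\frac{1}{T},
\]
using $\norm{\mbf{A}e_i}^2\le\Abs{\mbf{A}}^2\le 1$ and $v_t\in\Delta_n$. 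Summing the two copies over $t=1,\dots,T-1$ gives at most $3\cdot 2\rbra*{T-1}/T\le 6$, as required.

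The only nontrivial step is the i.i.d.\ variance calculation itself, specifically the bookkeeping between the $T$ diagonal contributions $N=N'$ (which give $\sum_i v_{t,i}\norm{\mbf{A}e_i}^2$) and the $T\rbra*{T-1}$ off-diagonal contributions $N\neq N'$ (which collapse to $\norm{\mbf{A}v_t}^2$ by independence within a single round). Once this is assembled correctly, the coefficient $3$ and the constant $6$ fall out automatically, and no other step poses a real obstacle.
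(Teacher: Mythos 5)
Your proposal is correct and follows essentially the same route as the paper: the same three-term decomposition $\eta_{t+1}-\eta_t=(\eta_{t+1}-v_{t+1})+(v_{t+1}-v_t)+(v_t-\eta_t)$, the same $(a+b+c)^2\le 3(a^2+b^2+c^2)$ bound with $\Abs{\mbf{A}}\le 1$, and the same expected-variance bound of order $1/T$ for the empirical mean of the $T$ i.i.d.\ Gibbs samples (you also correctly note that, as in the paper's own argument, the stated constant $6$ is really a bound in expectation). The only difference is cosmetic: you carry out explicitly the diagonal/off-diagonal variance calculation that the paper merely asserts.
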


\begin{proof}

  Recall that by rescaling we have $\norm{\mbf{A}}\le 1$.
  Hence,
  \begin{equation*}
    \sum_{t=1}^{T-1}\left\|\mathbf{A}\rbra*{\eta_{t+1}-\eta_{t}}\right\|^2
    \le  \sum_{t=1}^{T-1}\left\|\eta_{t+1}-\eta_{t}\right\|^2.
  \end{equation*}
  Write
  $\eta_{t+1}-\eta_{t} = \rbra*{\eta_{t+1}-v_{t+1}}+\rbra*{v_{t+1}-v_t}+\rbra*{v_t-\eta_t}$.
  Using the triangle inequality of the $l_1$ norm and the Cauchy inequality
  $\rbra*{a+b+c}^2\le 3\rbra*{a^2+b^2+c^2}$, we get

  \begin{equation}\label{eqn:varbound-v}
    \sum_{t=1}^{T-1}\left\|{\eta_{t+1}-\eta_{t}}\right\|^2  \le
    6 \sum_{t=1}^{T}\left\|\eta_{t}-v_{t}\right\|^2 + 3 \sum_{t=1}^{T-1} \norm{v_{t+1}-v_{t}}^2.
  \end{equation}
    
  Similarly, we have:
  \begin{equation}\label{eqn:varbound-u}
    \sum_{t=1}^{T-1}\left\|{\zeta_{t+1}-\zeta_{t}}\right\|^2  \le
    6 \sum_{t=1}^{T}\left\|\zeta_{t}-u_{t}\right\|^2  + 3 \sum_{t=1}^{T-1} \norm{u_{t+1}-u_{t}}^2.
  \end{equation}
    
  Observing that in our algorithm
  we collect $T$ independent and identically distributed samples and take their
  average, we have:
  \begin{align*}
    \E \sbra*{\sum_{t=1}^{T}\left\|\zeta_{t}-u_{t}\right\|^2} &\le 1,\\
    \E \sbra*{\sum_{t=1}^{T}\left\|\eta_{t}-v_{t}\right\|^2} &\le 1.
  \end{align*}
  Combining the result above, we just get the desired equation.
\end{proof}

We also need the following lemma to guarantee that the sum of the regret is
always non-negative.

\begin{lmm}\label{lem:pos_reg}
  The sum of the regrets of two players in \cref{algo:POMWU} is always
  non-negative.
  In other words:
  \begin{equation*}
    \max_{u\in \Delta_m} \max_{v\in \Delta_n}  \rbra*{\sum_{t=1}^T
      \left\langle u_t-u, \mathbf{A}v_{t} \right\rangle  +
      \sum_{t=1}^T \left\langle v_t-v, -\mathbf{A}^\intercal u_{t} \right\rangle}\ge 0.
  \end{equation*}
\end{lmm}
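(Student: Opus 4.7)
The plan is to show that after the double maximization, the expression reduces to a non-negative quantity via a simple cancellation and a min-max sandwich. First I would expand the sum by linearity:
\begin{equation*}
\sum_{t=1}^{T} \langle u_t - u, \mathbf{A} v_t \rangle + \sum_{t=1}^{T} \langle v_t - v, -\mathbf{A}^\intercal u_t \rangle
= \sum_{t=1}^{T} \langle u_t, \mathbf{A} v_t \rangle - \sum_{t=1}^{T} \langle v_t, \mathbf{A}^\intercal u_t \rangle + \sum_{t=1}^{T} \langle v, \mathbf{A}^\intercal u_t \rangle - \sum_{t=1}^{T} \langle u, \mathbf{A} v_t \rangle .
\end{equation*}
The crucial observation is that $\langle u_t, \mathbf{A} v_t \rangle = \langle v_t, \mathbf{A}^\intercal u_t \rangle$, so the first two sums cancel exactly. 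In particular, the ``diagonal'' $\sum_t \langle u_t, \mathbf{A} v_t \rangle$ terms drop out, which is the whole reason the bound holds without any regret analysis of the algorithm.

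Next I would set $\bar u = \frac{1}{T} \sum_{t=1}^T u_t \in \Delta_m$ and $\bar v = \frac{1}{T} \sum_{t=1}^T v_t \in \Delta_n$ (these are convex combinations of probability vectors and therefore lie in the simplices). The remaining expression then rewrites as
\begin{equation*}
T \bigl( \bar u^\intercal \mathbf{A} v - u^\intercal \mathbf{A} \bar v \bigr).
\end{equation*}
Maximizing over $u \in \Delta_m$ and $v \in \Delta_n$ independently, I obtain
\begin{equation*}
T \Bigl( \max_{v \in \Delta_n} \bar u^\intercal \mathbf{A} v - \min_{u \in \Delta_m} u^\intercal \mathbf{A} \bar v \Bigr).
\end{equation*}

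Finally, since $\bar u \in \Delta_m$ and $\bar v \in \Delta_n$ are feasible choices in their respective simplices, I can sandwich
\begin{equation*}
\max_{v \in \Delta_n} \bar u^\intercal \mathbf{A} v \;\ge\; \bar u^\intercal \mathbf{A} \bar v \;\ge\; \min_{u \in \Delta_m} u^\intercal \mathbf{A} \bar v,
\end{equation*}
so the difference is non-negative, yielding the claim. Honestly there is no real obstacle here: the proof is one line of algebra plus a trivial min-max sandwich. The only subtle point worth emphasizing in the write-up is the symmetry $\langle u_t, \mathbf{A} v_t \rangle = \langle v_t, \mathbf{A}^\intercal u_t \rangle$ that cancels the algorithm-dependent diagonal terms, making the lower bound completely unconditional on how the $u_t, v_t$ are produced.
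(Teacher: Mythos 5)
Your proof is correct and follows essentially the same route as the paper's: cancel the diagonal terms $\langle u_t, \mathbf{A}v_t\rangle - \langle v_t, \mathbf{A}^\intercal u_t\rangle = 0$, then lower-bound the resulting $\max_v$ and upper-bound the $\min_u$ by plugging in the average strategies $\bar v$ and $\bar u$, which sandwich the common value $T\,\bar u^\intercal \mathbf{A}\bar v$. No gaps; the write-up matches the paper's argument in both structure and substance.
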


\begin{proof}
  \begin{align*}
    &\max_{u\in \Delta_m} \max_{v\in \Delta_n}  \rbra*{\sum_{t=1}^T \left\langle u_t-u,
      \mathbf{A}v_{t} \right\rangle  + \sum_{t=1}^T \left\langle
      v_t-v, -\mathbf{A}^\intercal u_{t} \right\rangle} \\
    =& \max_{u\in \Delta_m} \max_{v\in \Delta_n} \rbra*{\sum_{t=1}^T \left\langle -u,
       \mathbf{A}v_{t} \right\rangle  + \sum_{t=1}^T \left\langle v,
       \mathbf{A}^\intercal u_{t} \right\rangle} \\
    =& \max_{v\in \Delta_n} \sum_{t=1}^T \left\langle v, \mathbf{A}^\intercal u_{t} \right\rangle
       -\min_{u\in \Delta_m}  \sum_{t=1}^T \left\langle u, \mathbf{A}v_{t} \right\rangle \ge 0
  \end{align*}

  The last step is because
  \begin{equation*}
    \max_{v\in \Delta_n} \sum_{t=1}^T \left\langle v, \mathbf{A}^\intercal u_{t} \right\rangle
    \ge \Ip{\mbf{A}\sum_{t=1}^T v_t /T }{\sum_{t=1}^T u_t},
  \end{equation*}
  and
  \begin{equation*}
    \min_{u\in \Delta_m}\sum_{t=1}^T \left\langle u, \mathbf{A}v_{t} \right\rangle
    \le \Ip{\mbf{A}\sum_{t=1}^T v_t }{\sum_{t=1}^T u_t/T}.
  \end{equation*}

\end{proof}

Combining the result above, we finally have the following theorem.

\begin{thm}\label{thm:main-theorem-regret-restate} Suppose that in our
  \cref{algo:POMWU}, we choose the episode
  $T = \widetilde \Theta \rbra*{1/\eps}$, and choose a constant learning rate
  $\lambda$ that satisfies $\lambda < \sqrt{3}/6$.
  Then with probability at least $2/3 $ the total regret of the algorithm is
  $\widetilde{O}\rbra*{1}$.
  To be more clear, we have:
  \begin{equation*}
    T\rbra*{\max_{v\in \Delta_n}  \Ip{v}{\mbf{A}^{\trans}\hat{u}} -
      \min_{u\in \Delta_m} \Ip{u}{\mbf{A}\hat{v}} }
    \le 36 \lambda +\frac{3\log \rbra*{mn}}{\lambda},
  \end{equation*}
  and so our algorithm returns an $\eps$-approximate Nash equilibrium.
\end{thm}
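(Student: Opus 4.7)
The plan is to sum the two RVU-style inequalities from \cref{prop:OptFTRLRVU} applied to the ideal Gibbs sequences $u_t$ and $v_t$, decompose the true loss $\mbf{A}v_t$ into its stochastic surrogate $\mbf{A}\eta_t$ plus a mean-zero residual $q_t$ (and symmetrically for the column player), take expectations so that the residuals vanish by \cref{lem:zero_residue}, absorb the variation-in-utilities terms using the variance bound already established in \cref{append:ideal-sampler}, and finally convert the resulting expectation bound into a high-probability bound via Markov's inequality together with \cref{lem:pos_reg}.

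First I would combine \cref{eqn:RVU-u,eqn:RVU-v}, which are pointwise deterministic bounds, with the identity $\Ip{u_t-u}{\mbf{A}v_t} = \Ip{u_t-u}{\mbf{A}\eta_t} + \Ip{u_t-u}{q_t}$ (and its dual for the column player) to obtain an upper bound on the sum of the two regrets whose only stochastic cost is $R := \sum_t \Ip{u_t-u}{q_t} + \sum_t \Ip{v_t-v}{p_t}$. Taking expectations kills $R$ by \cref{lem:zero_residue}. Applying the variance bound to both $\sum \norm{\mbf{A}(\eta_{t+1}-\eta_t)}^2$ and its transpose counterpart, together with the i.i.d.\ averaging bounds $\E[\sum_t \norm{\eta_t-v_t}^2] \le 1$ and $\E[\sum_t \norm{\zeta_t-u_t}^2] \le 1$ already recorded in the appendix, transforms the utility-variation piece into $12\lambda + 3\lambda \cdot \E[\sum_t(\norm{u_{t+1}-u_t}^2 + \norm{v_{t+1}-v_t}^2)]$.

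The key cancellation is that the combined RVU term $-\frac{1}{4\lambda}\sum_t(\norm{u_{t+1}-u_t}^2 + \norm{v_{t+1}-v_t}^2)$ absorbs this positive contribution exactly when $3\lambda - \frac{1}{4\lambda} \le 0$, i.e.\ when $\lambda \le \sqrt{3}/6$; under our hypothesis the movement terms can be discarded, yielding $\E[\text{total regret}] \le \log(mn)/\lambda + 12\lambda$. To upgrade this to a high-probability statement I would invoke \cref{lem:pos_reg} so that the total regret is non-negative, and then apply Markov's inequality to get $\Pr{\text{total regret} \ge 3(\log(mn)/\lambda + 12\lambda)} \le 1/3$; equivalently, with probability at least $2/3$ the total regret is at most $36\lambda + 3\log(mn)/\lambda$. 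Since this total regret equals $T\rbra*{\max_v \Ip{v}{\mbf{A}^\trans \hat u} - \min_u \Ip{u}{\mbf{A}\hat v}}$, dividing by $T$ and recalling $T = \widetilde\Theta(1/\eps)$ delivers an $\eps$-approximate Nash equilibrium.

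The main obstacle I expect is that each player's variance bound produces a $3\lambda\sum\norm{\cdot_{t+1}-\cdot_t}^2$ contribution in terms of the \emph{other} player's Gibbs distributions, so the cancellation only closes when both RVU bounds are summed simultaneously; this cross-player symmetry is exactly what forces the tight threshold $\lambda < \sqrt{3}/6$ and leaves no slack in the constants. A secondary subtlety is that $q_t$ depends on the current batch of samples while $u_t$ depends only on the history up to round $t-1$, so the conditional-expectation argument underlying \cref{lem:zero_residue} must be applied before any of the movement terms are simplified, and the Markov step at the end is only legitimate because \cref{lem:pos_reg} rules out negative regret realizations that could otherwise spoil the tail bound.
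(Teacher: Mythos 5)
Your proposal follows the paper's own proof essentially step for step: sum the two RVU inequalities for the ideal Gibbs sequences, decompose the loss via the mean-zero residuals of \cref{lem:zero_residue}, take expectations and absorb the cross-player variation terms via the variance bounds so that the movement terms cancel precisely when $3\lambda \le 1/(4\lambda)$, i.e.\ $\lambda \le \sqrt{3}/6$, giving $\E[\text{regret}] \le 12\lambda + \log(mn)/\lambda$, and then upgrade to probability $2/3$ by \cref{lem:pos_reg} and Markov's inequality. The constants, the threshold on $\lambda$, and the final identification of the summed regret with $T\rbra*{\max_{v}\Ip{v}{\mbf{A}^{\trans}\hat{u}} - \min_{u}\Ip{u}{\mbf{A}\hat{v}}}$ (via $\E\sbra*{\hat u}\,T = \sum_t \E\sbra*{u_t}$) all coincide with the paper's argument, so this is the same proof.
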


\begin{proof}
  Adding the inequalities~\eqref{eqn:RVU-u} and~\eqref{eqn:RVU-v}
  together, we get
  \begin{equation}\label{eqn:reg-first}
    \begin{split}
      \sum_{t=1}^T \left\langle u_t-u, \mathbf{A}\eta_{t} \right\rangle  +
      &\sum_{t=1}^T \left\langle v_t-v, -\mathbf{A}^\intercal \zeta_{t} \right\rangle
        \le  \frac{\log m}{\lambda}  + \frac{\log n}{\lambda} \\
      &+\lambda \sum_{t=1}^{T-1} \left\|\mathbf{A}\rbra*{\eta_{t+1}-\eta_{t}}\right\|^2
        -\frac{1}{4 \lambda}\sum_{t=1}^{T-1} \left\|v_{t+1}-v_t\right\|^2 \\
      &+\lambda \sum_{t=1}^{T-1}\left\|\mathbf{A}^\intercal
        \rbra*{\zeta_{t+1}-\zeta_{t}}\right\|^2-\frac{1}{4 \lambda}
        \sum_{t=1}^{T-1}\left\|u_{t+1}-u_t\right\|^2.
    \end{split}
  \end{equation}

  Taking expectation, and using the inequalities~\eqref{eqn:varbound-v} we have
  \begin{align*}
    &\E \left[\lambda \sum_{t=1}^{T-1} \left\|\mathbf{A}\rbra*{\eta_{t+1}-\eta_{t}}\right\|^2
      -\frac{1}{4 \lambda}\sum_{t=1}^{T-1} \left\|v_{t+1}-v_t\right\|^2\right] \\
    &\le \rbra*{3\lambda-\frac{1}{4\lambda}} \E \left[\sum_{t=1}^{T-1}
      \left\|v_{t+1}-v_t\right\|^2\right] + 6 \lambda \cdot
      \E\left[\sum_{t=1}^{T}\left\|\eta_{t}-v_{t}\right\|^2\right] \\
    & \le 6\lambda.
  \end{align*}
        
  Similarly we can prove
  \begin{equation*}
    \E \left[\lambda \sum_{t=1}^{T-1} \left\|\mathbf{A}^\intercal
          \rbra*{\zeta_{t+1}-\zeta_{t}}\right\|^2
      -\frac{1}{4 \lambda} \sum_{t=1}^{T-1}\left\|u_{t+1}-u_t\right\|^2
    \right]\le 6\lambda.
  \end{equation*}
  So, taking expectations of \cref{eqn:reg-first}, and using the above
  inequalities and the \cref{lem:zero_residue}, we get

  \begin{equation}
    \E \left[\max_{u\in \Delta_m} \sum_{t=1}^T \left\langle u_t-u,
        \mathbf{A}v_{t} \right\rangle  + \max_{v\in \Delta_n} \sum_{t=1}^T
      \left\langle v_t-v, -\mathbf{A}^\intercal u_{t} \right\rangle\right]
    \le 12 \lambda +\frac{\log \rbra*{mn}}{\lambda}.
  \end{equation}

  Using the fact that
  \begin{align*}
    \E \sbra*{\hat{u}}\cdot T &= \sum_{t=1}^T\E \sbra*{u_t}, \\
    \E \sbra*{\hat{v}}\cdot T&= \sum_{t=1}^T\E \sbra*{v_t},
  \end{align*}
  we have
  \begin{equation}
    \E \left[ \max_{v\in \Delta_n}  \Ip{v}{\mbf{A}^{\trans}\hat{u}} -
      \min_{u\in \Delta_m} \Ip{u}{\mbf{A}\hat{v}} \right]
    \cdot T\le 12 \lambda +\frac{\log \rbra*{mn}}{\lambda}.
  \end{equation}
    
  By \cref{lem:pos_reg}, we know that the regret is always non-negative.
  So applying Markov's inequality, we know with probability at least $2/3$, the
  following inequality holds:
  \begin{equation*}
    \max_{v\in \Delta_n} \Ip{v}{\mbf{A}^{\trans}\hat{u}} -
    \min_{u\in \Delta_m} \Ip{u}{\mbf{A}\hat{v}}
    \le \frac{1}{T}\rbra*{36 \lambda +\frac{3\log \rbra*{mn}}{\lambda}}.
 \end{equation*}
\end{proof}

\subsection{Samplers with Errors}\label{append:sample-with-error}

\begin{thm}[Restatement of \cref{thm:error-Gibbs}]\label{thm:main-theorem-error-restate}
  Suppose that in our \cref{algo:POMWU}, we choose the episode
  $T = \widetilde O \rbra*{1/\eps}$, and choose a constant learning rate
  $\lambda$ that satisfies $0 < \lambda < \sqrt{3}/6$.
  The quantum implementation of the oracle in the algorithm will return $T$
  independent and identically distributed samples from a distribution that is
  $\epsilon_{\G}$-close to the desired distribution in total variational
  distance in quantum time $T^Q_{\G}$.

  Then with probability at least $2/3 $ the total regret of the algorithm is
  $\widetilde{O}\rbra*{1+\epsilon_{\G}/\eps}$ and the algorithm
  returns an $\widetilde O \rbra*{\eps+\epsilon_{\G}}$-approximate
  Nash equilibrium in quantum time
  $\widetilde O \rbra{T^Q_{\G}/\eps}$.
\end{thm}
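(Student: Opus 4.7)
The plan is to follow the same template as the ideal-sampler proof of \cref{thm:main-theorem-regret-restate} in \cref{append:ideal-sampler}, tracking the extra error introduced by the fact that $\E[\zeta_t \mid \text{history}] = \tilde u_t$ rather than $u_t := \GibbsDis{-\mbf{A}h_t}$, where $\tilde u_t$ is the actual distribution produced by the approximate oracle and $d_{\mathrm{TV}}(\tilde u_t, u_t) \leq \epsilon_{\G}$ (symmetrically for $\eta_t$ with $\tilde v_t$). The key observation is that every step of the ideal proof that invokes unbiasedness of the sampler absorbs at most $O(\epsilon_{\G})$ error per round, so the total additional regret is $O(T\epsilon_{\G})$, which with $T = \widetilde\Theta(1/\eps)$ is exactly the $\widetilde O(\epsilon_{\G}/\eps)$ slack stated in the theorem.

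Concretely, I would proceed in four steps mirroring \cref{append:ideal-sampler}. (i) The RVU inequalities~\eqref{eqn:RVU-u} and~\eqref{eqn:RVU-v} remain valid pointwise in $u,v$, because $u_t, v_t$ are still Opt-FTRL decisions determined by the state variables $h_t, g_t$ and do not depend on where the samples actually came from. (ii) \cref{lem:zero_residue} is replaced by the biased bound $|\E[\Ip{u_t - u}{q_t}]| = |\Ip{u_t - u}{\mbf{A}(v_t - \tilde v_t)}| \leq \norm{u_t - u}_1 \cdot \norm{\mbf{A}(v_t - \tilde v_t)}_\infty \leq 2 \cdot \norm{\mbf{A}} \cdot 2\epsilon_{\G} \leq 4\epsilon_{\G}$, giving aggregate residual $O(T\epsilon_{\G})$ (symmetrically for $p_t$). (iii) The variance bound $\E[\sum_t \norm{\eta_t - v_t}^2] \leq 1$ used in~\eqref{eqn:varbound-v} becomes $\leq 2 + 8T\epsilon_{\G}^2$ after splitting $\eta_t - v_t = (\eta_t - \tilde v_t) + (\tilde v_t - v_t)$ and using $\norm{\tilde v_t - v_t}_2 \leq \norm{\tilde v_t - v_t}_1 \leq 2\epsilon_{\G}$; with $\epsilon_{\G} = 1/T$ the extra $8T\epsilon_{\G}^2$ term is $O(1)$, and the condition $\lambda < \sqrt{3}/6$ continues to make the coefficient $6\lambda - 1/(4\lambda)$ on $\sum_t \norm{v_{t+1} - v_t}^2$ non-positive. (iv) The identity $\E[\hat u] = \E[\bar u]$ used at the end of the ideal proof to convert the bound on averages of the Opt-FTRL sequence into one on the algorithm's output averages now only holds up to $O(\epsilon_{\G})$ per round, contributing another $O(T\epsilon_{\G})$.

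Combining the four adjustments, the expected total regret is at most $144\lambda + 3\log(mn)/\lambda + O(T\epsilon_{\G})$. Non-negativity of the regret still follows from the same argument as \cref{lem:pos_reg}, so Markov's inequality yields the high-probability bound $\widetilde O(1 + T\epsilon_{\G}) = \widetilde O(1 + \epsilon_{\G}/\eps)$. Dividing by $T$ gives an $\widetilde O(\eps + \epsilon_{\G})$-approximate Nash equilibrium. For the running time, each of the $T$ rounds invokes the fast quantum multi-Gibbs sampler of \cref{thm:gibbs} a constant number of times (once for the $u$-samples in Line 4 and once for the $v$-samples in Line 7, each producing all $T$ samples in a single call at cost $T^Q_{\G}$), so the total quantum time is $\widetilde O(T \cdot T^Q_{\G}) = \widetilde O(T^Q_{\G}/\eps)$. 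The main obstacle is in steps (iii) and (iv): ensuring that each TV-distance perturbation contracts linearly to $O(\epsilon_{\G})$ rather than something worse like $O(\sqrt{\epsilon_{\G}})$. This works because every error channel is mediated by $\norm{\mbf{A}} \leq 1$ together with simplex-valued vectors of $\ell_1$-norm one, which simultaneously control the first-moment biases ($O(\epsilon_{\G})$) and second-moment spreads ($O(\epsilon_{\G}^2)$); no other combinatorial blow-up is possible because the linearity in $\eta_t, \zeta_t$ of every quantity appearing in the RVU bound is preserved by the substitution.
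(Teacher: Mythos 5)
Your proposal is correct and follows essentially the same route as the paper's proof: keep the RVU inequalities, bound the now-biased residue terms by $O\rbra*{T\epsilon_{\G}}$ via H\"older together with the total-variation guarantee, inflate the variance bound to $2+O\rbra*{T\epsilon_{\G}^2}$, and finish with non-negativity of the summed regret plus Markov's inequality, exactly as in \cref{append:sample-with-error}. The only slip is the coefficient in your step (iii): the decomposition places $3\lambda-\frac{1}{4\lambda}$ (not $6\lambda-\frac{1}{4\lambda}$) in front of $\sum_{t}\norm{v_{t+1}-v_t}^2$, the factor $6\lambda$ instead multiplying $\sum_{t}\norm{\eta_{t}-v_{t}}^2$, which is precisely why $\lambda<\sqrt{3}/6$ suffices.
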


\begin{proof}

  We will follow similar steps of proof for
  \cref{thm:main-theorem-regret-restate}.
  Since the sampling is not from the ideal distribution, we must bound the terms
  where $\eta_t$ and $\zeta_t$ take place.

  Notice that in this case, we have
  \begin{equation*}
    \norm{A \rbra*{v_t - \E \sbra*{\eta_t}}}\le \norm{v_t - \E \sbra*{\eta_t}}\le \epsilon_{\G}.
  \end{equation*}
  So for the term $q_t$ in \cref{lem:zero_residue} we now have the bound:
  \begin{align*}
    &\E \left[\sum_{t = 1}^T \Ip{u_t-u}{A \rbra*{v_t - \eta_t}}\right]\\
    &= \E \left[\sum_{t = 1}^T \Ip{u_t-u}{A \rbra*{v_t - \E\sbra*{\eta_t}}}\right] +
      \E \left[\sum_{t = 1}^T \Ip{u_t-u}{A \rbra*{\E\sbra*{\eta_t}-\eta_t}}\right] \\
    &= \E \left[\sum_{t = 1}^T \Ip{u_t-u}{A \rbra*{v_t - \E\sbra*{\eta_t}}}\right]
      \le 2T \epsilon_{\G},
  \end{align*}
  where the last step is by H\"{o}lder's inequality.

  Then for the other term, we have
  \begin{align*}
    \E \sbra*{\sum_{t=1}^{T}\left\|\eta_{t}-v_{t}\right\|^2}
    &\le 2 \cdot \E \sbra*{\sum_{t=1}^{T}\left\|\eta_{t}-\E \sbra*{\eta_t}\right\|^2} +
      2 \cdot \E \sbra*{\sum_{t=1}^{T}\left\|v_{t}-\E \sbra*{\eta_t}\right\|^2} \\
    & \le 2 + 2 T\epsilon_{\G}^2.
  \end{align*}

  So following the similar steps of proof for
  \cref{thm:main-theorem-regret-restate}, and using the above bounds, we can
  get
  \begin{align*}
    &\E \left[\max_{u\in \Delta_m} \sum_{t=1}^T \left\langle u_t-u,
      \mathbf{A}v_{t} \right\rangle +
      \max_{v\in \Delta_n} \sum_{t=1}^T \left\langle v_t-v,
      -\mathbf{A}^\intercal u_{t} \right\rangle\right] \\
    & \le 24 \lambda +24\lambda T \epsilon_{\G}^2 +
      \frac{\log \rbra*{mn}}{\lambda}+ 4T\epsilon_{\G}.
  \end{align*}
  Again using linearity of expectation and Markov's inequality, we conclude that
  with probability at least $2/3$
  \begin{equation*}
    T\rbra*{\max_{v\in \Delta_n}  \Ip{v}{\mbf{A}^{\trans}\hat{u}} -
      \min_{u\in \Delta_m} \Ip{u}{\mbf{A}\hat{v}}}\le 72 \lambda +
    \frac{3\log \rbra*{mn}}{\lambda} + 72T\lambda \epsilon^2_{\G}+
    12T\epsilon_{\G}.
  \end{equation*}
\end{proof}

\section{Consistent Quantum Amplitude Estimation}\label{append:consistent-amplitude-estimation}

\begin{thm}[Consistent phase estimation,~\cite{A2012,T2013}]\label{thm:consistent-phase-estimation}
  Suppose $U$ is a unitary operator.
  For every positive reals $\epsilon, \delta$, there is a quantum algorithm (a unitary quantum circuit)
  $\mathcal{A}$  such that, on input $O\rbra*{\log\rbra*{\epsilon^{-1}}}$-bit
  random string $s$, it holds that
  \begin{itemize}
    \item For every eigenvector $\ket{\psi_\theta}$ of $U$ (where
          $U \ket{\psi_\theta} = \exp\rbra*{\mathrm{i}\theta} \ket{\psi_\theta}$), with probability $\geq 1 - \epsilon$:
          \begin{equation*}
              \bra{\psi_{\theta}}\bra{f\rbra*{s,\theta}}\mathcal{A} \ket{\psi_\theta} \ket{0} \ge 1- \epsilon;
          \end{equation*}
          
    \item $f\rbra*{s, \theta}$ is a function of $s$ and $\theta$ such that
          $\abs{f\rbra*{s, \theta} - \theta} < \delta$,
  \end{itemize}
  with time complexity
  $\widetilde O\rbra*{\delta^{-1}} \cdot \poly\rbra*{\epsilon^{-1}}$.
\end{thm}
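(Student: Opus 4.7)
The plan is to implement consistent phase estimation by running standard quantum phase estimation (QPE) at a precision much finer than $\delta$, deterministically rounding the fine estimate through a randomly shifted grid of spacing $\delta$ indexed by $s$, and then uncomputing the fine register. The key intuition is that almost all of the QPE amplitude for eigenstate $\ket{\psi_\theta}$ sits within some $\delta' \ll \delta$ of $\theta$, and for a typical shift $s$ every such estimate rounds to the same grid point $f(s,\theta)$. This consistency of the rounded value is what allows the uncomputation to clean up the fine register and leave the output in essentially the pure tensor form $\ket{\psi_\theta}\ket{f(s,\theta)}$.

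Concretely, I would use $n_1 = O(\log(1/\epsilon))$ bits of $s$ to define a shift $\sigma_s = s\delta/2^{n_1}$, and set $f(s,\theta)$ to be the nearest point of the grid $\{\sigma_s + k\delta : k \in \mathbb{Z}\}$ to $\theta$. Let $V$ be a standard QPE circuit of precision $\delta' = \epsilon\delta/C$, boosted (e.g.\ by median-of-repetitions) so that its tail failure is at most $\epsilon$, i.e.\ $V\ket{\psi_\theta}\ket{0} = \ket{\psi_\theta}\sum_x \alpha_x(\theta)\ket{x}$ with $\sum_{x : |x-\theta|\le \delta'}|\alpha_x(\theta)|^2 \ge 1-\epsilon$. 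The algorithm $\mathcal{A}(s)$ has three steps: (i) apply $V$; (ii) coherently write $\ket{f(s,x)}$ into a fresh output register using $\ket{x}$; (iii) apply $V^\dagger$ to uncompute the fine register.

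For the analysis, after step (ii) the state on the $\ket{\psi_\theta}$ branch is $\sum_y \ket{\psi_y}\ket{y}$ with $\ket{\psi_y} = \sum_{x : f(s,x)=y}\alpha_x(\theta)\ket{x}$. Applying $V^\dagger$ and projecting the fine register on $\ket{0}$ yields amplitude $\bra{0}V^\dagger\ket{\psi_y} = \sum_{x : f(s,x)=y}|\alpha_x(\theta)|^2$ on $\ket{y}$, so the target inner product $\bra{\psi_\theta}\bra{f(s,\theta)}\mathcal{A}\ket{\psi_\theta}\ket{0}$ equals precisely the QPE mass on the grid cell containing $\theta$. Call $s$ good if $\theta$ is at distance $\ge \delta'$ from every boundary of its cell in $G_s$; for such $s$, the QPE concentration forces this amplitude to be at least $1-\epsilon$. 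Over uniform $s$, the boundary distance is uniform on a $(\delta/2^{n_1})$-spaced ladder in $[0,\delta/2]$, so the probability over $s$ that $s$ is bad is at most $2\delta'/\delta \le \epsilon$, provided $C$ is a sufficiently large constant.

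The runtime is dominated by the two QPE invocations, each of cost $\widetilde O(1/\delta') = \widetilde O(1/\delta)\cdot\poly(1/\epsilon)$, which matches the claimed $\widetilde O(\delta^{-1})\cdot\poly(\epsilon^{-1})$. The main obstacle I anticipate is converting a probability guarantee on QPE into an amplitude guarantee, since the theorem demands amplitude $\ge 1-\epsilon$ on the canonical basis vector rather than just measurement probability. Fortunately this falls out automatically from the identity $\bra{0}V^\dagger\ket{\psi_y} = \sum_{x:f(s,x)=y}|\alpha_x(\theta)|^2$, which turns a one-cell probability mass directly into an amplitude: the ``uncomputation-after-copy'' trick transmutes the QPE tail bound into the required amplitude bound without an additional square root loss. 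The remaining subtlety is the standard boost of QPE's native $O(1/K^2)$ tail to $O(\epsilon)$, which contributes the $\poly(\epsilon^{-1})$ factor in the runtime (rather than a mere $\poly\log$ one).
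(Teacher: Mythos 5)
Your construction is correct: the paper does not prove \cref{thm:consistent-phase-estimation} at all but imports it from \cite{A2012,T2013}, and your argument is essentially the standard one underlying those references --- run phase estimation at resolution $\delta'=\Theta(\epsilon\delta)$ (with a coherent median boost), round to a grid of spacing $\delta$ whose offset is chosen by the $O(\log(1/\epsilon))$-bit random string, and uncompute the fine register, so that the overlap with $\ket{\psi_\theta}\ket{f(s,\theta)}$ equals the estimation mass falling in $\theta$'s grid cell, which is at least $1-\epsilon$ whenever $\theta$ is $\delta'$-far from a cell boundary, an event of probability at least $1-\epsilon$ over the random shift. The key points --- that the uncomputation-after-copy identity converts the tail probability bound directly into an amplitude bound without a square-root loss, and that consistency comes from $f(s,\theta)$ depending only on $s$ and $\theta$ --- are exactly the ones needed, so the proposal stands as a valid self-contained proof of the cited statement.
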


\begin{thm}[Consistent quantum amplitude estimation]\label{thm:consist-quantum-amp-est}
  Suppose $U$ is a unitary operator such that
  \begin{equation*}
    U \ket{0}_A \ket{0}_B = \sqrt{p} \ket{0}_A \ket{\phi_0}_B + \sqrt{1-p} \ket{1}_A \ket{\phi_1}_B.
  \end{equation*}
  where $p \in \sbra*{0, 1}$ and $\ket{\phi_0}$ and $\ket{\phi_1}$ are
  normalized pure quantum states.
  Then for every positive reals $\epsilon, \delta$, there is a quantum algorithm
  that, on input $O\rbra*{\log\rbra*{\epsilon^{-1}}}$-bit random string $s$,
  outputs $f\rbra*{s, p} \in \sbra*{0, 1}$ such that
  \begin{equation*}
    \Pr{\abs{f\rbra*{s, p} - p} \leq \delta } \geq 1 - \epsilon,
  \end{equation*}
  with time complexity
  $\widetilde O\rbra*{\delta^{-1}} \cdot \poly\rbra*{\epsilon^{-1}}$.
\end{thm}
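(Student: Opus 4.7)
The plan is to reduce consistent amplitude estimation to consistent phase estimation (\cref{thm:consistent-phase-estimation}) through the standard Grover / amplitude-amplification construction. Let $\Pi_{\text{good}} = \ket{0}\bra{0}_{A} \otimes I_B$ and define
\[
    Q = -\rbra*{I - 2 U \ket{0}\bra{0}_{AB} U^{\dag}} \rbra*{I - 2\Pi_{\text{good}}}.
\]
The state $\ket{\psi_0} = U\ket{0}\ket{0}$ lies in the two-dimensional $Q$-invariant subspace spanned by $\Pi_{\text{good}}\ket{\psi_0}$ and $\rbra*{I-\Pi_{\text{good}}}\ket{\psi_0}$; setting $\theta \in \sbra*{0,\pi/2}$ with $\sin^2\theta = p$, the operator $Q$ acts on this subspace as a rotation by $\pm 2\theta$. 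Hence there is an orthonormal eigenbasis $\cbra*{\ket{\psi_+}, \ket{\psi_-}}$ of $Q$ inside it with eigenvalues $e^{\pm 2\mathrm{i}\theta}$, and $\ket{\psi_0} = \alpha_+\ket{\psi_+} + \alpha_-\ket{\psi_-}$ with $\abs{\alpha_\pm}^2 = 1/2$.

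Next, I would invoke \cref{thm:consistent-phase-estimation} on $Q$ with input $\ket{\psi_0}\ket{0}$, target precision $\delta' = 2\delta$, and failure parameter $\epsilon' = \epsilon/2$. Writing $f_{\text{ph}}\rbra*{s,\phi}$ for the deterministic phase-estimation output on the eigenvector with eigenvalue $e^{\mathrm{i}\phi}$, linearity produces a final state close in trace distance to
\[
    \alpha_+\ket{\psi_+}\ket{f_{\text{ph}}\rbra*{s,2\theta}} + \alpha_-\ket{\psi_-}\ket{f_{\text{ph}}\rbra*{s,-2\theta}}.
\]
Measuring the phase register returns $\hat\phi \in \cbra*{f_{\text{ph}}\rbra*{s,2\theta}, f_{\text{ph}}\rbra*{s,-2\theta}}$, and a union bound over the two eigenvector guarantees of \cref{thm:consistent-phase-estimation} ensures that both candidates lie within $\delta'$ of $\pm 2\theta$ with probability at least $1-\epsilon$ over $s$. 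The algorithm then returns $f\rbra*{s,p} := \sin^2\rbra*{\hat\phi/2}$. Since $\phi \mapsto \sin^2\rbra*{\phi/2}$ is $\tfrac{1}{2}$-Lipschitz and $\sin^2\rbra*{\pm 2\theta / 2} = \sin^2\theta = p$, we obtain $\abs{f\rbra*{s,p}-p} \leq \delta'/2 = \delta$ with the same probability. The time complexity matches that of \cref{thm:consistent-phase-estimation}, namely $\widetilde O\rbra*{\delta^{-1}} \cdot \poly\rbra*{\epsilon^{-1}}$, since each application of $Q$ uses $O\rbra*{1}$ calls to $U$ and $U^{\dag}$.

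The subtle point I expect to require the most care is ensuring that $f\rbra*{s,p}$ is a genuine function of $\rbra*{s,p}$ alone, rather than also depending on which of the two eigenvectors the measurement collapses onto. I would enforce this by arranging the phase-estimation grid symmetrically about $0$, so that $f_{\text{ph}}\rbra*{s,-\phi} = -f_{\text{ph}}\rbra*{s,\phi}$; under this antisymmetry the two branches yield identical values of $\sin^2\rbra*{\hat\phi/2}$ and $f\rbra*{s,p}$ is well defined. If exact antisymmetry is awkward to implement inside the black-box \cref{thm:consistent-phase-estimation}, a safe fallback is to round the output to a fixed grid of spacing $2\delta$: since both candidates differ from $p$ by at most $\delta$, they round to a common grid point except on the $\epsilon$-probability failure event, which can be absorbed into the overall error budget.
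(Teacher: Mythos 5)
Your proposal follows essentially the same route as the paper: construct the amplitude-amplification iterate $Q$, observe that $U\ket{0}\ket{0}$ is supported on the two eigenvectors with eigenphases $\pm 2\theta_p$ where $\sin^2\theta_p = p$, run consistent phase estimation on this state, measure, and output $\sin^2\rbra*{\hat\phi/2}$ with the same Lipschitz error bound and the same complexity. The branch-consistency subtlety you flag is genuine (the paper simply asserts the output ``is consistent,'' implicitly relying on the evenness of $\sin^2$ together with symmetry of the phase-estimation output), and your antisymmetry fix is the right repair; only your rounding fallback is not airtight, since two values within $\delta$ of $p$ can still straddle a boundary of a grid with spacing $2\delta$.
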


\begin{proof}
  Suppose $U$ is a unitary operator such that
  \begin{equation*}
    U \ket{0}_A \ket{0}_B = \sqrt{p} \ket{0}_A \ket{\phi_0}_B +
    \sqrt{1-p} \ket{1}_A \ket{\phi_1}_B.
  \end{equation*}
  Let
  \begin{equation*}
    Q = - U \rbra*{I - 2 \ket{0}_A \bra{0} \otimes \ket{0}_B \bra{0}} U^\dag
    \rbra*{I - 2 \ket{0}_A \bra{0} \otimes I_B}.
  \end{equation*}
  Similar to the analysis in~\cite{BHMT02}, we have
  \begin{equation*}
    U\ket{0}_A \ket{0}_B = \frac{-\mathrm{i}}{\sqrt{2}}
     \bigl( \exp\rbra*{\mathrm{i}\theta_p} \ket{\psi_+}_{AB} -
      \exp\rbra*{-\mathrm{i}\theta_p} \ket{\psi_-}_{AB}\bigr),
  \end{equation*}
  where $\sin^2 \rbra*{\theta_p} = p$ ($0 \leq \theta_p < \pi/2$), and
  \begin{equation*}
    \ket{\psi_{\pm}}_{AB} = \frac{1}{\sqrt{2}}
    \rbra*{\ket{0}_A \ket{\phi_0}_B \pm
      \mathrm{i} \ket{1}_A \ket{\phi_1}_B }.
  \end{equation*}
  Note that $\ket{\psi_{\pm}}_{AB}$ are eigenvectors of $Q$, i.e.,
  $Q \ket{\psi_{\pm}}_{AB} = \exp\rbra*{\pm \mathrm{i} 2 \theta_p} \ket{\psi_{\pm}}_{AB}$.

  Now applying the algorithm $\mathcal{A}$ of consistent phase estimation of $Q$
  by~\cref{thm:consistent-phase-estimation} on state
  $U\ket{0}_A \ket{0}_B \otimes \ket{0}_C$ (with an
  $O\rbra*{\log\rbra*{\epsilon^{-1}}}$-bit random string $s$), we obtain
  \begin{equation*}
    \mathcal{A} \rbra*{U\ket{0}_A \ket{0}_B \otimes \ket{0}_C }
    = \frac{-\mathrm{i}}{\sqrt{2}} \Bigl( \exp\rbra*{\mathrm{i}\theta_p}
      \mathcal{A} \rbra*{\ket{\psi_+}_{AB} \ket{0 }_C } -
      \exp\rbra*{-\mathrm{i}\theta_p} \mathcal{A}
      \rbra*{\ket{\psi_-}_{AB} \ket{0 }_C } \Bigr).
  \end{equation*}
  Since each of $\ket{\psi_{\pm}}_{AB}$ is an eigenvector of $Q$, it holds that,
  with probability $\geq 1 - \epsilon$,
  \begin{equation*}
    \bra{\psi_{\pm}}_{AB}\bra{f\rbra*{s,\pm2\theta_p}}_C \mathcal{A}
    \rbra*{\ket{\psi_{\pm}}_{AB} \ket{0 }_C }\ge 1 - \epsilon.
  \end{equation*}
  which implies that
  $\mathcal{A} \rbra*{U\ket{0}_A \ket{0}_B \otimes \ket{0}_C }$ is
  $O\rbra*{\sqrt\epsilon}$-close to
  \begin{equation*}
    \frac{-\mathrm{i}}{\sqrt{2}} \Bigl( \exp\rbra*{\mathrm{i}\theta_p}
      \ket{\psi_+}_{AB} \ket{f\rbra*{s, 2\theta_p} }_C -
      \exp\rbra*{-\mathrm{i}\theta_p}
      \ket{\psi_-}_{AB} \ket{f\rbra*{s, -2\theta_p} }_C \Bigr)
  \end{equation*}
  in trace distance, where
  $\abs{f\rbra*{s, \pm 2 \theta_p} \mp 2 \theta_p} < \delta$.
  Measuring register $C$, we denote the outcome as $\gamma$, which will be
  either $f\rbra*{s, 2\theta_p}$ or $f\rbra*{s, -2\theta_p}$.
  Finally, output $\sin^2\rbra*{\gamma/2}$ as the estimate of $p$ (which is
  consistent).
  Since $\sin^2\rbra*{\cdot}$ is even and $2$-Lipschitz, the additive error is
  bounded by
  \begin{equation*}
    \abs{\sin^2\rbra*{\frac{\gamma}{2}} - p } \leq
    2 \abs{\abs{\frac{\gamma}{2}} - \abs{\theta_p} } < \delta.
  \end{equation*}
  Note that $\mathcal{A}$ makes
  $\widetilde O\rbra*{\delta^{-1}} \cdot \poly\rbra*{\epsilon^{-1}}$ queries to
  $Q$, thus our consistent amplitude estimation has quantum time complexity
  $\widetilde O\rbra*{\delta^{-1}} \cdot \poly\rbra*{\epsilon^{-1}}$.
\end{proof}

\begin{thm}[Error-Reduced Consistent quantum amplitude estimation]\label{thm:gapped-consist-quantum-amp-est}
  Suppose $U$ is a unitary operator such that
  \begin{equation*}
    U \ket{0}_A \ket{0}_B = \sqrt{p} \ket{0}_A \ket{\phi_0}_B +
    \sqrt{1-p} \ket{1}_A \ket{\phi_1}_B.
  \end{equation*}
  where $p \in \sbra*{0, 1}$ and $\ket{\phi_0}$ and $\ket{\phi_1}$ are
  normalized pure quantum states.
  Then for every positive integers $r$ and positive real $\delta$, there is a
  quantum algorithm that, on input $O\rbra*{r}$-bit random string $s$, outputs
  $f^*\rbra*{s, p} \in \sbra*{0, 1}$ such that
  \begin{equation*}
    \Pr{\abs{f^*\rbra*{s, p} - p} \leq \delta } \geq 1 - O\rbra*{\exp \rbra*{-r}},
  \end{equation*}
  with time complexity $\widetilde O\rbra*{\delta^{-1}} \cdot \poly\rbra*{r}$.
\end{thm}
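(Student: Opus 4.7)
The plan is to reduce to \cref{thm:consist-quantum-amp-est} by median amplification. Invoking that theorem directly with $\epsilon = \exp\rbra{-r}$ would blow up the running time through its $\poly\rbra{\epsilon^{-1}}$ factor to $\exp\rbra{O\rbra{r}}$, which is far too costly. Instead, I would call \cref{thm:consist-quantum-amp-est} with a constant confidence (say $\epsilon = 1/4$) a total of $k = \Theta\rbra{r}$ times on independent random bits, and classically combine the $k$ outputs by taking their median. The crucial observation is that the median of $k$ independent estimators concentrates exponentially around the truth even when each individual estimator has only constant success probability, which matches exactly the $1 - O\rbra{\exp\rbra{-r}}$ bound required here.

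Concretely, I would instantiate \cref{thm:consist-quantum-amp-est} with the same $\delta$ and $\epsilon = 1/4$, obtaining a unitary subroutine that reads a chunk $s_i$ of $O\rbra{1}$ random bits and writes an estimate $a_i \in \sbra{0,1}$ into an ancilla with $\Pr{\abs{a_i - p} \leq \delta} \geq 3/4$ in time $\widetilde O\rbra{\delta^{-1}}$. I would run this subroutine in parallel $k = \Theta\rbra{r}$ times on independent chunks $s_1, \ldots, s_k$ of the overall random string $s$ (of total length $O\rbra{r}$) and on disjoint ancilla registers, then apply a reversible classical circuit that computes $f^*\rbra{s,p} := \mathrm{median}\rbra{a_1, \ldots, a_k}$ into a fresh output register. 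The total cost is $k \cdot \widetilde O\rbra{\delta^{-1}} = \widetilde O\rbra{\delta^{-1}} \cdot \poly\rbra{r}$, matching the claimed complexity.

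For correctness, the $k$ indicator variables $\mathbf{1}\sbra{\abs{a_i - p} > \delta}$ are independent because the subroutines act on disjoint registers with independent random bits, and each has expectation at most $1/4$. A standard Chernoff bound then shows that strictly more than $k/2$ of the $a_i$ lie in $\sbra{p-\delta,\, p+\delta}$ except with probability $O\rbra{\exp\rbra{-\Omega\rbra{k}}} = O\rbra{\exp\rbra{-\Omega\rbra{r}}}$, and on this event the median must lie there as well. The main subtlety I foresee is the \emph{consistency} aspect: individual measurement outcomes of \cref{thm:consist-quantum-amp-est} are not fully deterministic given $s_i$ (the underlying phase estimate can fall on either of the $\pm 2\theta_p$ branches in its proof), but this does not affect the Chernoff step since both branches yield valid estimates within $\delta$, and re-running with the same $s$ reproduces the same joint distribution over measurement outcomes and hence over the median. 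Thus $f^*$ inherits the consistency property required to serve as a coherent oracle inside the multi-Gibbs sampler.
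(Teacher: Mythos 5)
Your proposal is correct and follows essentially the same route as the paper: split the $O\rbra{r}$-bit string into $\Theta\rbra{r}$ constant-length chunks, run the constant-confidence consistent amplitude estimation (\cref{thm:consist-quantum-amp-est}) once per chunk, output the median, and apply a Chernoff bound to get failure probability $O\rbra{\exp\rbra{-r}}$ at cost $\widetilde O\rbra{\delta^{-1}} \cdot \poly\rbra{r}$. Your remark on the two $\pm 2\theta_p$ branches is a reasonable extra caveat that the paper's proof does not spell out, but it does not change the argument.
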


\begin{proof}
  Consider that we divide the input random string $s$ into $r$ strings
  $s_1, s_2, \dots, s_r$ of length $O\rbra*{1}$.
  For each $i\in [r]$, we use \cref{thm:consist-quantum-amp-est} with
  input string $s_i$ and parameter $\epsilon = 1/10$.
  So we get, for each $i\in [r]$,
  \begin{equation*}
    \Pr{\abs{f\rbra*{s_i, p} - p} \leq \delta } \geq \frac{9}{10}.
  \end{equation*}
 Now we set $f^*(s, p)$ to be the median of the estimations $f(s_i,p)$ for
 $i\in [r]$.
 We claim it satisfies the desired property.
 To show that, we define random variables $X_i$ for $i\in [r]$ as follows:
 \begin{equation*}
   X_i =
   \begin{cases}
     1, & \textup{if } \abs{f\rbra*{s_i, p} - p} \leq \delta, \\
     0, & \textup{otherwise}.
   \end{cases}
 \end{equation*}
 Noticing $\E \left[\sum_{i=1}^r X_i\right]\ge 9r/10$, and by Chernoff bound, we
 have:
 \begin{equation*}
   \Pr{\sum_{i=1}^r X_i < \frac{r}{2}} \le \exp \rbra*{-\frac{8r}{45}}.
 \end{equation*}
 Thus with probability at least $1-\exp \rbra*{-{8r}/{45}}$, we know that at
 least half of the estimations fall into the interval
 $\sbra*{p-\delta, p+\delta}$, and then $f^*(s,p)$ returns a correct answer.
\end{proof}

\section{Details and Proofs of Fast Quantum Multi-Gibbs
  Sampling}\label{append:multi-Gibbs-proof}

We present the detailed version of the fast quantum multi-Gibss sampling.
Here, we use the shorthand $\Oracle_{p} = \Oracle_{p}\rbra{1,0}$, and it also
means the distribution of the sample.

We first define the notion of amplitude-encoding (a unitary operator that
encodes a vector in its amplitudes).

\begin{defi} [Amplitude-encoding] \label{def:amplitude-encoding} A unitary
  operator $V$ is said to be a $\beta$-amplitude-encoding of a vector
  $u \in \mr^{n}$ with non-negative entries, if
  \begin{equation*}
    \bra{0}_C V \ket{0}_C \ket{i}_A
    \ket{0}_B = \sqrt{\frac{u_i}{\beta}}\ket{i}_A\ket{\psi_i}_B
 \end{equation*}
 for all $i \in \sbra*{n}$.
\end{defi}

Then, as shown in \cref{algo:multi-Gibbs-full}, we can construct a quantum
multi-Gibbs sampler for a vector $u$ if an amplitude-encoding of the vector $u$
is given.
To complete the proof of \cref{thm:gibbs}, we only have to construct an
amplitude-encoding of $\mbf{A}z$ (see \cref{sec:gibbs-full} for details).

\begin{algorithm}[!htp]
  \caption{Quantum Multi-Gibbs Sampling implementing
    $\Oracle_{u}(k,\epsilon_{\G})$}\label{algo:multi-Gibbs-full}
  \begin{algorithmic}[1]
    \REQUIRE Sample count $k$, a $\beta$-amplitude-encoding $V$ of vector
    $u \in \mr^n$, polynomial $P_{2\beta} \in \mr\sbra*{x}$ that satisfies
    \cref{lem:poly-approx} with parameter
    $\epsilon_P = k\epsilon^2_{\G} / 300n $.
    \ENSURE $k$ independent samples $i_1, i_2, \dots, i_{k}$.
    \STATE Obtain
    $\mathcal{O}_{\tilde u} \colon \ket{i}\ket{0} \mapsto \ket{i}\ket{\tilde u_i}$
    using $\widetilde O\rbra*{\beta}$ queries to $V$, where
    $u_i \leq \tilde u_i \leq u_i + 1$, by consistent quantum amplitude
    estimation (\cref{thm:gapped-consist-quantum-amp-est}).
    \STATE Find the $k$ largest $\tilde u_i$'s by quantum $k$-maximum finding
    (\cref{thm:q-max}) and let $S$ be the set of their indexes.
    This can be done with $\widetilde O\rbra{\sqrt{nk}}$ queries to
    $\mathcal{O}_{\tilde u}$.
    \STATE Compute $\tilde u^* = \min\limits_{i \in S} \tilde u_i$, and
    $W = \rbra*{n-k} \exp\rbra*{\tilde u^*} + \sum\limits_{i \in S} \exp \rbra*{\tilde u_i}$.
    \FOR {$\ell = 1, \dots, k$} \STATE Prepare the quantum state
    \begin{equation*}
      \ket{u_{\text{guess}}} = \sum_{i \in S} \sqrt{\frac{\exp\rbra*{\tilde u_i}}{W} } \ket{i} +
      \sum_{i \notin S} \sqrt{\frac{\exp\rbra*{\tilde u^*}}{W} } \ket{i}.
    \end{equation*}
    \STATE Obtain
    $U_u = \rbra{V^\dag_{CAB} \otimes I_{D}} \rbra{V_{DAB}\otimes I_{C}}$
    being a block-encoding of $\diag\rbra*{u} / \beta$.
    Similarly, obtain $U^{\max}_{\tilde u}$ being a block-encoding of
    $\diag\rbra*{\max\cbra*{\tilde u, \tilde u^*}} / 2\beta$.
    \STATE Obtain $U^-$ being a block-encoding of
    $\diag\rbra*{u - \max\cbra*{\tilde u, \tilde u^*}} / 4\beta$ by the LCU
    (Linear-Combination-of-Unitaries) technique
    (\cref{thm:linear-combination-unitary}), using $O\rbra*{1}$ queries to $U_u$
    and $U^{\max}_{\tilde u}$.
    \STATE Obtain $U^{\exp}$ being a block-encoding of
    $P_{2\beta}\rbra*{\diag\rbra*{u - \max\cbra*{\tilde u, \tilde u^*}} / 4\beta}$
    by the QSVT technique (\cref{thm:qsvt}), using
    $O\rbra{\beta \log\rbra{\epsilon_P^{-1}}}$ queries to $U^-$.
    \STATE Post-select
    $\ket{\tilde u_{\text{post}}} = \bra{0}^{\otimes a} U^{\exp} \ket{u_\text{guess}} \ket{0}^{\otimes a}$
    by quantum amplitude amplification (\cref{thm:amplification}), and obtain
    $\ket{\tilde u_{\text{Gibbs}}} = \ket{\tilde u_{\text{post}}} / \Abs{\ket{\tilde u_{\text{post}}}}$.
    (Suppose $U^{\exp}$ has $a$ ancilla qubits.)
    \STATE Measure $\ket{\tilde u_{\text{Gibbs}}}$ in the computational basis
    and let $i_\ell \in \sbra*{n}$ be the outcome.
    \ENDFOR
    \STATE \textbf{Return} $i_1, i_2, \dots, i_{k}$.
  \end{algorithmic}
\end{algorithm}

\subsection{Useful Theorems}

\begin{thm}[Quantum state preparation,~\cite{GR2002,KP2017}]\label{thm:quantum-state-preparation}
  There is a data structure implemented on QRAM maintaining an array
  $a_1, a_2, \dots, a_{\ell}$ of positive numbers that supports the following
  operations.
  \begin{itemize}
    \item Initialization: For any value $c$, set $a_i \gets c$ for all
          $i \in \sbra*{\ell}$.
    \item Assignment: For any index $i$ and value $c$, set $a_i \gets c$.
    \item State Preparation: Prepare a quantum state
        \[
        \ket{a} = \sum_{i \in \sbra*{\ell}} \sqrt{\frac{a_i}{\Abs*{a}_1}} \ket{i}.
        \]
  \end{itemize}
  Each operation costs $\polylog\rbra*{\ell}$ time.
\end{thm}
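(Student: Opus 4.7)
The plan is to realize the array $a_1,\dots,a_\ell$ as the leaves of a complete binary tree of depth $d=\lceil \log_2 \ell\rceil$ stored in QRAM, in the style of Kerenidis--Prakash, where every internal node $v$ holds the partial sum $s_v=\sum_{i\in\text{leaves}(v)} a_i$. The root then automatically stores $\Abs*{a}_1$, and this invariant is exactly what drives all three required operations at cost $\polylog(\ell)$.

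For \emph{Assignment} $a_i\gets c$, only the $d+1=O(\log \ell)$ ancestors of leaf $i$ need updating: add the delta $c-a_i$ to each of their stored partial sums. For \emph{State Preparation}, I would build $\ket{a}$ by walking a path register down the tree level by level. Starting from $\ket{0}$ at the root, at depth $k$ I apply, controlled on the current path, a single-qubit rotation whose angle is determined by QRAM reads of the two children's sums $s_{\text{left}},s_{\text{right}}$, producing the superposition $\sqrt{s_{\text{left}}/(s_{\text{left}}+s_{\text{right}})}\,\ket{0}+\sqrt{s_{\text{right}}/(s_{\text{left}}+s_{\text{right}})}\,\ket{1}$. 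After $d$ such rotations the path register holds exactly $\sum_i\sqrt{a_i/\Abs*{a}_1}\,\ket{i}$, at a total cost of $O(\log \ell)$ controlled rotations and QRAM lookups. Positivity of the entries ensures the radicands are non-negative, and whenever a subtree sum vanishes the corresponding branch receives amplitude zero and the rotation is trivial.

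The main obstacle is \emph{Initialization}: writing $c$ to all $\ell$ leaves and refreshing their ancestors costs $\Omega(\ell)$ time if done literally. The fix is a lazy-initialization overlay. I would store a single global default value $c^{\star}$ together with a balanced dictionary of explicitly overridden nodes and their stored values. Under this convention an un-overridden node at depth $k$ is \emph{implicitly} interpreted as holding $(\ell/2^k)c^{\star}$, so resetting the whole array amounts to writing $c^{\star}\gets c$ and emptying the overlay, both in $\polylog(\ell)$ time. Every subsequent assignment inserts the $O(\log \ell)$ affected ancestors into the overlay with their corrected partial sums; every read (during assignment or during preparation) consults the overlay first and falls back to the implicit uniform value otherwise, all in $\polylog(\ell)$ time thanks to standard QRAM dictionary access. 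Correctness then reduces to verifying the invariant $s_v=\sum_{i\in\text{leaves}(v)}a_i$ is preserved by each operation, which is immediate by induction on the sequence of assignments, and to checking that reads during the coherent tree walk return the correct classical values -- this is exactly what the QRAM access model guarantees.
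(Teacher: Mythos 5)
Your construction is exactly the Grover--Rudolph/Kerenidis--Prakash argument that the paper relies on by citation (it gives no proof of this theorem itself), so you and the paper take the same route: a QRAM binary tree of subtree partial sums, ancestor-only updates for Assignment, and a controlled-rotation walk down the tree for State Preparation, with your lazy-initialization overlay correctly supplying the one piece the bare tree does not give, namely Initialization in $\polylog\rbra*{\ell}$ time. The only glossed points are routine: each level's rotation angle must be computed into an ancilla from the two children's sums and then uncomputed to avoid residual entanglement, and with finite-precision arithmetic the state is prepared only to inverse-polynomial accuracy rather than exactly, which is the standard reading of the theorem.
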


\begin{thm}[Quantum $k$-maximum finding, Theorem 6
  of~\cite{DHHM2006}]\label{thm:q-max}
  Given $k \in \sbra*{n}$ and quantum oracle $\mathcal{O}_{u}$ for an array
  $u_1, u_2, \dots, u_n$, i.e., for every $i \in \sbra*{n}$,
  \begin{equation*}
    \mathcal{O}_{u} \ket{i} \ket{0} = \ket{i} \ket{u_i},
  \end{equation*}
  there is a quantum algorithm that, with probability $\geq 0.99$, finds a set
  $S \subseteq \sbra*{n}$ of cardinality $\abs{S} = k$ such that $u_i \geq u_j$
  for every $i \in S$ and $j \notin S$, using $O\rbra*{\sqrt{nk}}$ queries to
  $\mathcal{O}_u$.
\end{thm}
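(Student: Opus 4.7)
The plan is to reduce $k$-maximum finding to iterated quantum minimum finding on a dynamically shrinking set of ``marked'' indices, generalizing Dürr--Høyer's $O(\sqrt{n})$-query minimum-finding algorithm. By negating the array, it suffices to treat the $k$-minimum problem, so I will describe it in that form.

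I would maintain a classical priority-queue data structure $Q$ that stores up to $k$ candidate indices together with their values, ordered so that its maximum value $\tau$ (the current ``threshold'') is accessible in $O(\log k)$ time. Initialize $Q$ with $k$ indices chosen uniformly at random without replacement. Then repeat the following loop: treat the indices $i \in [n]$ with $u_i < \tau$ and $i \notin Q$ as the marked set, and invoke a quantum search procedure (amplitude amplification with an unknown fraction of marked items, as in Dürr--Høyer and \cite{BHMT02}) to either return such an index $i^*$ or certify, with constant confidence, that none exists. If one is found, replace the argmax of $Q$ by $i^*$, update $\tau$, and repeat; if the search reports ``empty,'' terminate and output $Q$. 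Because the loop terminates only when no index outside $Q$ has value below $\tau$, correctness (that $Q$ is the set of $k$ smallest values) follows immediately.

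The substantive step is bounding the expected total query complexity by $O(\sqrt{nk})$. I would use the Dürr--Høyer potential argument in an extended form: order the values $u_{(1)} \le u_{(2)} \le \cdots \le u_{(n)}$, and track the rank $r_t$ of the current threshold $\tau$ at iteration $t$. Each quantum search at iteration $t$ costs $O(\sqrt{n/(r_t - k)})$ queries in expectation, because exactly $r_t - k$ indices are marked (elements strictly below $\tau$ but not in $Q$). Since the replaced element is a uniformly random element of the currently marked set (Dürr--Høyer guarantee on Grover under an unknown marked fraction), the new rank is distributed uniformly over $\{k, k+1, \dots, r_t - 1\}$, so $\mathbb{E}[r_{t+1}-k \mid r_t] = (r_t-k)/2$. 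A standard coupling with the harmonic process shows that $r_t - k$ halves in expectation, so the number of improving iterations is $O(\log n)$ before the threshold stabilizes near the optimum, and the total expected cost telescopes as
\begin{equation*}
  \sum_{t} O\!\rbra*{\sqrt{n/(r_t-k)}} \;=\; O\!\rbra*{\sqrt{n}\cdot \sum_{j \ge 1} 2^{-j/2} \sqrt{r_0 - k}} \;=\; O\!\rbra*{\sqrt{n(r_0 - k)}}\,,
\end{equation*}
and by choosing the initial random $Q$ of size $k$, $r_0 - k = O(k)$ with high probability (a standard random-sampling tail bound), giving total cost $O(\sqrt{nk})$. Running the procedure $O(\log(1/\delta))$ times and taking the best output boosts the success probability to any desired constant, in particular $0.99$.

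The main obstacle is the probabilistic bookkeeping in the potential argument: the $r_t$'s are correlated across iterations, the Grover-under-unknown-marked subroutine has per-call failure probability that must be controlled so that the telescoping expectation is not corrupted, and the case $r_t - k = O(1)$ requires the exact-count variant of amplitude amplification to terminate cleanly. These are handled by standard techniques (union bounding the failure probabilities across the $O(\log n)$ iterations, and invoking the exponential search of \cite{BHMT02}), but together they constitute the real work of the proof.
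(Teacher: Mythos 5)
This theorem is not proved in the paper at all; it is imported verbatim as Theorem 6 of the cited work of D\"urr, Heiligman, H\o yer, and Mhalla, so the only thing to assess is whether your from-scratch argument is sound. Your algorithmic skeleton (maintain $k$ candidates, let the threshold $\tau$ be the worst value in the candidate set, Grover-search for an improvement among the unknown number of marked items, replace, repeat, and handle the unknown marked fraction and the final emptiness check via exponential search) is indeed the D\"urr--H\o yer generalization used in that reference. The complexity analysis, however, has a genuine gap. The claim that the new threshold rank is uniform on $\{k,\dots,r_t-1\}$, hence that $r_t-k$ halves in expectation and only $O(\log n)$ improving iterations occur, is false: the threshold is the \emph{maximum of the whole candidate set}, so after you swap out the argmax for the found element $i^*$, the new threshold is the larger of $u_{i^*}$ and the old second maximum of $Q$, whose rank is typically barely smaller than $r_t$ when $Q$ started as a random sample (it drops by roughly a $(1-1/k)$ factor, not a half). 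More decisively, each successful search inserts at most one of the true $k$ smallest elements into $Q$, so at least about $k$ improving iterations are unavoidable; an $O(\log n)$ bound on the number of improvements cannot be correct for $k\gg\log n$. The claim that a uniformly random initial $Q$ of size $k$ has $r_0-k=O(k)$ with high probability is also false: the rank of the maximum of $k$ random indices is $\Theta(n)$, not $O(k)$. Finally, the telescoping display is internally inconsistent: with per-iteration cost $O(\sqrt{n/(r_t-k)})$ the cost \emph{increases} as the marked set shrinks (the last improvements cost $\Theta(\sqrt{n})$ each), whereas your sum is evaluated as a decreasing geometric series in $\sqrt{r_0-k}$.

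The real work, which your writeup does not substitute for, is the D\"urr--H\o yer-style accounting adapted to $k$ candidates: one shows that the element of rank $r>k$ is ever inserted into the candidate set with probability at most $\min(1,k/r)$ (because insertions are uniform among currently marked elements), and charges to each insertion the expected Grover cost at the time it happens; the resulting expectation is of the form $\sum_{r>k}\min(1,k/r)\,O\bigl(\sqrt{n/(r-k)}\bigr)=O(\sqrt{nk})$, after which one truncates the run at a constant multiple of this expectation and applies Markov to get constant success probability (boosted to $0.99$ by repetition). Without this rank-by-rank argument --- or some correct replacement for your halving heuristic --- the stated $O(\sqrt{nk})$ bound is not established.
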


We now recall the definition of block-encoding, a crucial concept in quantum
singular value transformation~\cite{GSLW2019}, which is used in line $9$ to
$12$ in \cref{algo:multi-Gibbs-full}.

\begin{defi} [Block-encoding]\label{def:block-encoding}
  Suppose $A$ is a linear operator on $b$ qubits, $\alpha, \epsilon \geq 0$ and
  $a$ is a positive integer.
  A $\rbra*{b+a}$-qubit unitary operator $U$ is said to be an
  $\rbra*{\alpha, \epsilon}$-block-encoding of $A$, if
  \begin{equation*}
    \Abs*{\alpha \bra{0}^{\otimes a} U \ket{0}^{\otimes a} - A }_{\mathrm{op}} \leq \epsilon.
  \end{equation*}
\end{defi}

\begin{defi}[State Preparation Pair, Definition 28 of~\cite{GSLW2019}]\label{def:state-preparation-pair}
  Let $y\in \mr^n$ be a vector, specially in this context the number of
  coordinates starts from $0$.
  Suppose $\norm{y}_1\le \beta$.
  Let $\epsilon$ be a positive real.
  We call a pair of unitaries $\rbra*{P_L, P_R}$ acting on $b$ qubits a
  $\rbra*{\beta,\epsilon}$-state-preparation pair for $y$ if
  \begin{align*}
    P_L \ket{0}^{\otimes b} &= \sum_{j=0}^{2^b-1} c_j \ket{j},\\
    P_R \ket{0}^{\otimes b} &= \sum_{j=0}^{2^b-1} d_j \ket{j},
  \end{align*}
  such that:
  \begin{equation*}
   \sum_{j=0}^{m-1}\abs{\beta c_j^*d_j-y_j}\le \epsilon
  \end{equation*}
  and for $j\in [2^b]$, $j\ge m$, we require $c_j^* d_j =0$.
\end{defi}

We now state a theorem about linear combination of unitary operators,
introduced by~\cite{BCCKS2015} and~\cite{CW2012}.
The following form is from~\cite{GSLW2019}.
Again we restrict ourselves to the case of real linear combinations.

\begin{thm}[Linear Combination of Unitaries, Lemma 29 of~\cite{GSLW2019}]\label{thm:linear-combination-unitary}
  Let $\epsilon$ be a positive real number and $y\in \mr^n$ be a vector as in
  \cref{def:state-preparation-pair} with $\rbra*{\beta, \epsilon_1}$ state
  preparation pair $\rbra*{P_L , P_R}$.
  Let $\set{A_j}_{j=0}^{m-1}$ be a set of linear operators on $s$ qubits, and
  forall $j$, we have $U_j$ as an $(\alpha, \epsilon_2)$-block-encoding of $A_j$
  acting on $a+s$ qubits.
  Let
  \begin{equation*}
    W = \rbra*{\sum_{j=0}^{m-1}\ket{j}\bra{j}\otimes U_j} +
    \rbra*{I-\sum_{j=0}^{m-1}\ket{j}\bra{j}}\otimes I_{a+s},
  \end{equation*}
  Then we can implement a
  $(\alpha \beta, \alpha \epsilon_1 + \alpha \beta \epsilon_2)$-block-encoding
  of $A = \sum_{j= 0}^{m-1} y_j A_j$, with one query from $P_L^{\dagger}$, 
  $P_R$, and $W$.
\end{thm}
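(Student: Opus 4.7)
The plan is to explicitly construct the circuit $V = (P_L^\dagger \otimes I_{a+s})\, W\, (P_R \otimes I_{a+s})$, which transparently uses one query each to $P_R$, $W$, and $P_L^\dagger$, and to show by direct amplitude bookkeeping that $\alpha\beta \cdot \bra{0}^{\otimes(b+a)} V \ket{0}^{\otimes(b+a)}$ approximates $A = \sum_j y_j A_j$ to operator norm error $\alpha\epsilon_1 + \alpha\beta\epsilon_2$.

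First I would unfold the action of $V$ step by step. Applying $P_R$ to the $b$-qubit selection register takes $\ket{0}^{\otimes b}$ to $\sum_j d_j \ket{j}$, and by the controlled block structure of $W$ the state on the selection and inner-ancilla registers becomes
\[
  \sum_{j=0}^{m-1} d_j \ket{j} \otimes U_j \ket{0}^{\otimes a} \;+\; \sum_{j=m}^{2^b-1} d_j \ket{j} \otimes \ket{0}^{\otimes a}.
\]
Projecting the selection register via $\bra{0}^{\otimes b} P_L^\dagger = \sum_j c_j^* \bra{j}$ and the ancilla register via $\bra{0}^{\otimes a}$ yields
\[
  \bra{0}^{\otimes(b+a)} V \ket{0}^{\otimes(b+a)} = \sum_{j=0}^{m-1} c_j^* d_j \, \bra{0}^{\otimes a} U_j \ket{0}^{\otimes a} \;+\; \sum_{j=m}^{2^b-1} c_j^* d_j \, I_s.
\]
The crucial simplification is that the state-preparation-pair definition enforces $c_j^* d_j = 0$ for $j \geq m$, so the trailing identity contributions vanish and no ``out-of-range'' junk survives into the top-left block.

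Next I would quantify the approximation error. Setting $\tilde y_j = \beta c_j^* d_j$ and $B_j = \alpha \bra{0}^{\otimes a} U_j \ket{0}^{\otimes a}$, multiplying the above identity by $\alpha\beta$ gives the implemented operator $\sum_j \tilde y_j B_j$. The state-preparation-pair guarantee supplies $\sum_j \abs{\tilde y_j - y_j} \leq \epsilon_1$, while the block-encoding hypothesis supplies $\Abs*{B_j - A_j} \leq \epsilon_2$ together with $\Abs*{B_j} \leq \alpha$ (since $\bra{0}^{\otimes a}U_j\ket{0}^{\otimes a}$ is a contraction). A single triangle inequality split
\[
  \Abs*{\sum_j \tilde y_j B_j - \sum_j y_j A_j} \;\leq\; \Abs*{\sum_j (\tilde y_j - y_j) B_j} \;+\; \Abs*{\sum_j y_j (B_j - A_j)}
\]
controls the first summand by $\alpha \epsilon_1$ and the second by the claimed block-encoding contribution from the $\epsilon_2$ errors accumulated against $\Abs*{y}_1 \leq \beta$, producing precisely the $\alpha\beta$-subnormalized block-encoding of $A$ with additive error $\alpha\epsilon_1 + \alpha\beta\epsilon_2$ claimed in the statement.

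The argument carries no serious obstacle; the only delicate point is guaranteeing that the trailing indices $j \geq m$, on which no $A_j$ is defined and $W$ acts trivially, do not leak into the top-left block — this is exactly the role of the $c_j^* d_j = 0$ requirement in the state-preparation-pair definition. Beyond this observation, the proof is pure amplitude bookkeeping followed by a single triangle inequality, and the ``one query each'' cost count is immediate from the circuit description of $V$.
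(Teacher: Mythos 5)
Your proof is correct, and it is essentially the standard argument: this statement is imported verbatim as Lemma 29 of~\cite{GSLW2019}, so the paper gives no proof of its own, and your construction $V=(P_L^\dagger\otimes I_{a+s})\,W\,(P_R\otimes I_{a+s})$ with the top-left-block bookkeeping and a single triangle inequality is exactly how that lemma is established, including the use of the condition $c_j^* d_j=0$ for $j\ge m$ to kill the out-of-range identity terms. One small remark: your second error term is actually bounded by $\Abs{y}_1\,\epsilon_2\le\beta\epsilon_2$, which is at least as strong as the stated $\alpha\beta\epsilon_2$ in every case used in this paper (where $\alpha=1$ and $\epsilon_2=0$), so the claimed $(\alpha\beta,\alpha\epsilon_1+\alpha\beta\epsilon_2)$-block-encoding follows.
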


\begin{thm} [Eigenvalue transformation, Theorem 31
  of~\cite{GSLW2019}]\label{thm:qsvt}
  Suppose $U$ is an $\rbra*{\alpha, \epsilon}$-block-encoding of an Hermitian
  operator $A$.
  For every $\delta > 0$ and real polynomial $P \in \mr\sbra*{x}$ of degree $d$
  such that $\abs{P\rbra*{x}} \leq 1/2$ for all $x \in \sbra*{-1, 1}$, there is
  an efficiently computable quantum circuit $\tilde U$, which is a
  $\rbra*{1, 4d\sqrt{\epsilon/\alpha}+\delta}$-block-encoding of
  $P\rbra*{A/\alpha}$, using $O\rbra*{d}$ queries to $U$.
\end{thm}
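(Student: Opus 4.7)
The plan is to reduce the theorem to the core primitive of quantum signal processing (QSP) combined with qubitization, which together give polynomial transformations of singular/eigenvalues encoded in the top-left block of a unitary. I would first treat the exact case $\epsilon = 0$ and then track how block-encoding error propagates, so the final $4d\sqrt{\epsilon/\alpha} + \delta$ bound arises from two independent sources: the inexact block-encoding and the approximation chosen when synthesizing $P$.

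First I would pass from the block-encoding $U$ of $A$ to qubitization. Recall that $U$ satisfies $\Abs{\alpha \bra{0}^{\otimes a} U \ket{0}^{\otimes a} - A}_{\mathrm{op}} \le \epsilon$. Let $\Pi = \ket{0}\bra{0}^{\otimes a} \otimes I$ be the projector onto the encoded block, and consider the ``reflected'' walk operator $W = (2\Pi - I)\, U\, (2\Pi - I)\, U^{\dagger}$ together with $U$ itself. In the idealized setting, Jordan's lemma (as exploited by Low and Chuang) shows that the joint action of $U$ and $\Pi$ decomposes the full Hilbert space into a direct sum of two-dimensional invariant subspaces, one for each eigenvector $\ket{\lambda}$ of $A$, on which $U$ acts as a rotation by angle $\arccos(\lambda/\alpha)$. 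Thus each $\ket{\lambda}$ lifts to a qubit-like subspace on which we have a signal operator with signal parameter $\lambda/\alpha$.

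Second, I would invoke the QSP result: given any real polynomial $\tilde P$ of degree $d$ with fixed parity and $|\tilde P(x)| \le 1$ on $[-1,1]$, there exist $d+1$ phase angles $\varphi_0,\dots,\varphi_d$ such that the interleaved circuit $e^{\mathrm{i}\varphi_0 Z}\prod_{j=1}^d (W_x \cdot e^{\mathrm{i}\varphi_j Z})$ produces, in its $\bra{0}\cdot\ket{0}$ block, exactly $\tilde P(x)$. Applying this circuit in the qubitized basis simultaneously to every 2D subspace implements $\tilde P(A/\alpha)$ on the top-left block, using $O(d)$ queries to $U$. To handle an arbitrary $P$ of mixed parity with $|P|\le 1/2$, I would split $P = P_{\mathrm{even}} + P_{\mathrm{odd}}$, note $|P_{\mathrm{even}}|, |P_{\mathrm{odd}}| \le 1$ on $[-1,1]$, implement each separately by QSP, and combine via a one-ancilla LCU (\cref{thm:linear-combination-unitary}); the factor $1/2$ is exactly what makes the LCU normalization fit into a block-encoding with subnormalization $1$. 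The $\delta$ term in the statement absorbs the error from the classical computation of the phases $\varphi_j$ and from the finite-precision synthesis of the LCU ancilla rotations.

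Finally, I would bound the propagation of the block-encoding error. Let $A' = \alpha \bra{0}^{\otimes a} U \ket{0}^{\otimes a}$, so $\Abs{A' - A}_{\mathrm{op}} \le \epsilon$. Running the circuit above with $U$ in place of an exact block-encoding implements $P(A'/\alpha)$ on the top-left block instead of $P(A/\alpha)$. For degree-$d$ polynomials with $|P|\le 1/2$ on $[-1,1]$, a robustness lemma for QSP gives $\Abs{P(A'/\alpha) - P(A/\alpha)}_{\mathrm{op}} \le 4d\sqrt{\epsilon/\alpha}$; morally this is because the walk operator built from $U$ rather than an exact block-encoding deviates from the ideal qubitized walk by $O(\sqrt{\epsilon/\alpha})$ per layer in operator norm (the square root comes from comparing the walk, which involves two applications of $U$ and reflections, with the exact Jordan decomposition), and these $O(d)$ errors accumulate linearly. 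Adding the synthesis error $\delta$ yields the claimed $(1, 4d\sqrt{\epsilon/\alpha} + \delta)$-block-encoding.

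The hardest part of this program is the QSP step itself, namely proving the existence of the phase angles and the robustness lemma above: the existence is a nontrivial algebraic fact about $2\times 2$ unitary polynomials (essentially a Chebyshev/Laurent argument on $\mathrm{SU}(2)$), and robustness requires a careful operator-norm analysis of perturbed qubitized walks. The rest of the argument — qubitization, parity splitting by LCU, and classical phase synthesis — is essentially bookkeeping once the QSP primitive is in hand.
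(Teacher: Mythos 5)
This statement is imported verbatim from Theorem 31 of~\cite{GSLW2019}; the paper gives no proof of its own, so the only meaningful comparison is with the original GSLW argument, and your sketch follows essentially that same route: qubitization of the block-encoding, existence of QSP phase angles for definite-parity real polynomials bounded by $1$, parity splitting recombined by a one-ancilla LCU, the classically computed phases accounting for the $\delta$ term, and the robustness lemma of~\cite{GSLW2019} accounting for the $4d\sqrt{\epsilon/\alpha}$ term.

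One detail in your parity step should be tightened. From $\abs{P(x)}\le 1/2$ you get $\abs{P_{\mathrm{even}}(x)},\abs{P_{\mathrm{odd}}(x)}\le 1/2$ on $\sbra*{-1,1}$, and the correct move is to QSP-implement the \emph{doubled} polynomials $2P_{\mathrm{even}}$ and $2P_{\mathrm{odd}}$ (each of definite parity and bounded by $1$, which is exactly what the hypothesis $\abs{P}\le 1/2$ buys you); the one-ancilla LCU then contributes precisely the factor $1/2$ that recovers $P\rbra*{A/\alpha}$ with subnormalization $1$. As literally written — implementing $P_{\mathrm{even}}$ and $P_{\mathrm{odd}}$ themselves and averaging them with LCU — you would only obtain a block-encoding of $P\rbra*{A/\alpha}/2$, which is not the claimed statement. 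With that correction, your outline and error accounting match the standard proof.
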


Finally, for quantum amplitude amplification without knowing the exact value of
the amplitude, we need the following theorem:

\begin{thm} [Quantum amplitude amplification, Theorem 3
  of~\cite{BHMT02}]\label{thm:amplification}
  Suppose $U$ is a unitary operator such that
  \begin{equation*}
    U \ket{0}_A \ket{0}_B = \sqrt{p} \ket{0}_A \ket{\phi_0}_B +
    \sqrt{1-p} \ket{1}_A \ket{\phi_1}_B.
  \end{equation*}
  where $p \in \sbra*{0, 1}$ is unknown and $\ket{\phi_0}$ and $\ket{\phi_1}$
  are normalized pure quantum states.
  There is a quantum algorithm that outputs $\ket{0}_A\ket{\phi_0}_B$ with
  probability $\geq 0.99$, using $O\rbra*{1/\sqrt{p}}$ queries to $U$.
\end{thm}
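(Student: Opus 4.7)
The plan is to follow the BHMT exponential-search framework for amplitude amplification when the amplitude is unknown. First, I would define the Grover-type operator $Q = -U S_0 U^\dag S_\chi$, where $S_\chi = \rbra{I - 2 \ket{0}_A\bra{0}} \otimes I_B$ reflects about the ``bad'' subspace identified by register $A$, and $S_0 = I - 2 \ket{0}_A\bra{0} \otimes \ket{0}_B\bra{0}$ is the reflection about the initial state. A standard two-dimensional invariant subspace analysis in the plane spanned by $\ket{g} = \ket{0}_A \ket{\phi_0}_B$ and $\ket{b} = \ket{1}_A \ket{\phi_1}_B$ shows that $Q$ acts as a rotation by angle $2\theta$ in this plane, where $\sin \theta = \sqrt{p}$ with $\theta \in \rbra{0, \pi/2}$. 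Consequently,
\begin{equation*}
  Q^j U \ket{0}_A \ket{0}_B = \sin\rbra{\rbra{2j+1}\theta} \ket{g} + \cos\rbra{\rbra{2j+1}\theta} \ket{b},
\end{equation*}
so measuring register $A$ and observing $0$ collapses register $B$ \emph{exactly} to $\ket{\phi_0}_B$, with probability $\sin^2\rbra{\rbra{2j+1}\theta}$.

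Since $\theta$ (equivalently $p$) is unknown, I would use an exponential-search strategy. Run successive stages indexed by $\ell = 0, 1, 2, \dots$; at stage $\ell$, set $M_\ell = \lceil \rbra{6/5}^\ell \rceil$, draw $j$ uniformly at random from $\cbra{0, 1, \dots, M_\ell - 1}$, prepare $Q^j U \ket{0}_A \ket{0}_B$ using at most $2M_\ell + 1$ applications of $U$ and $U^\dag$, measure register $A$, and halt if the outcome is $0$. A direct computation using a geometric sum yields the key identity
\begin{equation*}
  \E_j\sbra{\sin^2\rbra{\rbra{2j+1}\theta}} = \frac{1}{2} - \frac{\sin\rbra{4 M_\ell \theta}}{4 M_\ell \sin\rbra{2\theta}},
\end{equation*}
which is bounded below by $1/4$ as soon as $M_\ell \geq 1 / \sin\rbra{2\theta}$. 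Let $\ell^*$ be the smallest such stage; then $M_{\ell^*} = \Theta\rbra{1/\sin \theta} = \Theta\rbra{1/\sqrt{p}}$, and summing the per-stage query costs $O\rbra{M_\ell}$ geometrically across $\ell = 0, 1, \dots, \ell^*$ bounds the total number of $U$-queries by $O\rbra{M_{\ell^*}} = O\rbra{1/\sqrt{p}}$.

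To achieve success probability at least $0.99$, I would either independently repeat the entire exponential search a constant number of times or continue running a constant number of additional stages beyond $\ell^*$: once $M_\ell \geq 1/\sin\rbra{2\theta}$, each trial succeeds with probability at least $1/4$, so a constant number of such trials drives the failure probability below $0.01$ while preserving the $O\rbra{1/\sqrt{p}}$ total query bound. The main technical obstacle is establishing the averaged-probability identity above via a careful manipulation of the geometric series $\sum_{j=0}^{M_\ell - 1} \cos\rbra{\rbra{4j+2}\theta}$ and controlling its error term uniformly in the unknown $\theta$. A secondary subtlety is that a successful measurement of register $A$ must leave register $B$ precisely in $\ket{\phi_0}_B$, with no verification overhead; this follows immediately from the two-dimensional invariant subspace structure, and the boundary cases $p = 0$ and $p = 1$ can be ruled out or absorbed into the analysis.
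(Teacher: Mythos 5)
Your proposal is correct and is essentially the standard exponential-search (QSearch) argument of \cite{BHMT02}, which the paper simply cites as Theorem 3 without reproving. One minor caveat: for $p$ close to $1$ the critical stage satisfies $M_{\ell^*} \approx 1/\sin(2\theta)$, which is \emph{not} $\Theta(1/\sqrt{p})$ since $\cos\theta \to 0$; the bound still holds because in that regime the first few stages already succeed with probability close to $1$ (so the algorithm halts after $O(1)$ queries with probability $\geq 0.99$), exactly as in the case analysis of the original BHMT proof.
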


\subsection{Main Proof} \label{sec:gibbs-full}

We generalize \cref{thm:gibbs} as follows.

\begin{thm} \label{thm:gibbs-full}
  \Cref{algo:multi-Gibbs-full} will produce $k$ independent and identical
  distributed samples from a distribution that is $\epsilon_{\G}$-close
  to $\Oracle_{u}$ in total variation distance, in quantum time
  $\widetilde O\rbra*{\beta \sqrt{nk}}$.
\end{thm}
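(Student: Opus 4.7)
The plan is to separate correctness from complexity, treating the consistent amplitude estimates $\tilde u_i$ and the polynomial approximation of $\exp$ as controlled perturbations of an idealized analysis. The guiding intuition is that $\ket{u_{\text{guess}}}$ is an $\Omega(\sqrt{k/n})$-overlap approximation to the target Gibbs state, so amplitude amplification succeeds in $O(\sqrt{n/k})$ rounds per sample and $k$ samples cost $\widetilde O(\sqrt{nk})$ rounds in total.

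For correctness, I would track the amplitude of $\ket{i}$ in $\ket{u_{\text{post}}} = \bra{0}^{\otimes a} U^{\exp} \ket{u_{\text{guess}}}\ket{0}^{\otimes a}$. Whether $i \in S$ or $i \notin S$, the argument fed to $P_{2\beta}$ is $(u_i - \max\cbra{\tilde u_i, \tilde u^*})/(4\beta)$. From $u_i \leq \tilde u_i \leq u_i + 1$ together with $0 \leq u_i \leq \beta$ one checks this argument lies in $\sbra{-1/2,0}$, so \cref{lem:poly-approx} yields $P_{2\beta}(\cdot) \approx \frac{1}{4}\exp((u_i - \max\cbra{\tilde u_i,\tilde u^*})/2)$ up to additive error $\epsilon_P$. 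Multiplying by the guess amplitude $\sqrt{\exp(\max\cbra{\tilde u_i,\tilde u^*})/W}$ cancels the $\max$ term exactly inside the square root, leaving squared amplitude $\exp(u_i)/(16W)$ uniformly in $i$. Summing over $i$ gives the ideal post-selection probability $\norm{\exp(u)}_1/(16W)$, and the induced conditional distribution is precisely the Gibbs distribution on $u$.

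Next I would bound $W$ from above by $O((n/k)\norm{\exp(u)}_1)$. Splitting $W = \sum_{i \in S}\exp(\tilde u_i) + (n-k)\exp(\tilde u^*)$, the first term is at most $e\sum_{i \in S}\exp(u_i) \leq e \norm{\exp(u)}_1$ using $\tilde u_i \leq u_i + 1$. For the second, letting $u^* = \min_{i \in S} u_i$, minimality gives $k \exp(u^*) \leq \sum_{i \in S}\exp(u_i) \leq \norm{\exp(u)}_1$, so $(n-k)\exp(\tilde u^*) \leq e n \exp(u^*) \leq e(n/k)\norm{\exp(u)}_1$. Hence the amplification success probability is $\Omega(k/n)$ and \cref{thm:amplification} needs $O(\sqrt{n/k})$ invocations of $U^{\exp}$ per sample.

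Error propagation and complexity then come together as follows. Choosing $\epsilon_P = \Theta(k\epsilon_{\G}^2/n)$ guarantees, via a standard $\ell_2$-to-total-variation conversion (triangle inequality after normalization, then squaring and summing over $n$ coordinates), that each produced sample is drawn from a distribution within total variation $\epsilon_{\G}$ of $\Oracle_u$. Consistent amplitude estimation with failure probability $\exp(-r)$, $r = \Theta(\log(nk))$, survives a union bound over all lookups in the $k$-maximum search and the $k$ amplification stages; its per-query cost is $\widetilde O(\beta)$ by \cref{thm:gapped-consist-quantum-amp-est} since additive precision $1$ in $\tilde u_i$ corresponds to probability precision $1/\beta$. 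Crucially, once $S$ and its $\tilde u_i$'s are cached in QRAM, the block-encoding $U^{\max}_{\tilde u}$ costs $O(1)$ per query (look up $\tilde u_i$ if $i \in S$, else return $\tilde u^*$), so $U^-$ and hence $U^{\exp}$ (with degree $\widetilde O(\beta)$ by \cref{thm:qsvt} applied to $P_{2\beta}$) cost $\widetilde O(\beta)$ queries to $V$, each of which is $O(1)$ oracle calls. Adding the $\widetilde O(\beta\sqrt{nk})$ preprocessing for the $k$-maximum stage and the $k \cdot O(\sqrt{n/k}) \cdot \widetilde O(\beta) = \widetilde O(\beta\sqrt{nk})$ cost of the main loop yields the claimed runtime. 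The main obstacle is the bound on $W$ and hence the success probability: errors in $\tilde u$ of size larger than $O(1)$, or a set $S$ that significantly mis-ranks the true top-$k$, could inflate $W$ by a factor of $\beta$ and break the $\sqrt{nk}$ complexity, so the coarse $\pm 1$-precision of consistent estimation plus the clean split between top-$k$ and tail contributions is exactly what is needed.
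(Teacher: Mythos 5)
Your proposal is correct and follows essentially the same route as the paper's proof: consistent amplitude estimation at $\pm 1$ additive precision in $u_i$ (cost $\widetilde O(\beta)$ per estimate), quantum $k$-maximum finding for $\widetilde O(\beta\sqrt{nk})$ preprocessing, the exact cancellation of $\max\{\tilde u_i,\tilde u^*\}$ so that the ideal post-selected weights are $\exp(u_i)/(16W)$, the bound $W = O\bigl((n/k)\sum_i \exp(u_i)\bigr)$ giving success probability $\Omega(k/n)$ and hence $O(\sqrt{n/k})$ amplification cost per sample, and the choice $\epsilon_P = \Theta(k\epsilon_{\G}^2/n)$. The only minor difference is in propagating the polynomial error to total variation distance, where the paper bounds the fidelity between $\ket{\tilde u_{\text{Gibbs}}}$ and the ideal Gibbs state (obtaining $d_{\text{TV}} = O(\sqrt{n\epsilon_P/k})$) while you sketch a direct coordinatewise $\ell_2$ perturbation-plus-normalization argument; both close with the stated parameters.
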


It is immediate to show \cref{thm:gibbs} from \cref{thm:gibbs-full} by
constructing a $\beta$-amplitude-encoding $V$ of $\mbf{A}z$.
To see this, let $u = \mathbf{A}z$, then
$u_i = \rbra*{\mathbf{A} z}_i \in \sbra*{0, \beta}$.
By \cref{thm:quantum-state-preparation}, we can implement a unitary operator
$U_z^{\textup{QRAM}}$ such that
\begin{equation*}
  U_z^{\textup{QRAM}} \colon \ket{0}_C \ket{0}_B \mapsto
  \ket{0}_C \sum_{j \in \sbra*{n}} \sqrt{\frac{z_j}{\beta}} \ket{j}_B + \ket{1}_C \ket{\phi}_B.
\end{equation*}
Using two queries to $\mathcal{O}_{\mathbf{A}}$, we can construct a unitary
operator $\mathcal{O}_{\mathbf{A}}'$ such that
\begin{equation*}
  \mathcal{O}_{\mathbf{A}}' \colon \ket{0}_E \ket{i}_A \ket{j}_B
  \mapsto \rbra*{\sqrt{A_{i,j}} \ket{0}_E + \sqrt{1-{A_{i,j}}} \ket{1}_E} \ket{i}_A \ket{j}_B.
\end{equation*}
Let
\begin{equation} \label{eq:def-V-Az}
  V = \rbra*{\ket{0}_C\bra{0} \otimes \mathcal{O}_{\mathbf{A}}'
    + \ket{1}_C\bra{1} \otimes I_{EAB}} \rbra*{U_z^{\textup{QRAM}} \otimes I_{EA} }.
\end{equation}
It can be verified (see \cref{prop:V00i0}) that
\begin{equation*}
  \bra{0}_C \bra{0}_E V \ket{0}_C \ket{0}_E \ket{i}_A \ket{0}_B
  = \sum_{j \in \sbra*{n}} \sqrt{\frac{A_{i,j}z_j}{\beta} } \ket{i}_A \ket{j}_B,
\end{equation*}
and thus
$\bra{0}_C \bra{0}_E V \ket{0}_C \ket{0}_E \ket{i}_A \ket{0}_B =
\sqrt{u_i / \beta} \ket{i}_A \ket{\psi_i}_B$
for some $\ket{\psi_i}$.
Therefore, $V$ is a $\beta$-amplitude-encoding of $\mbf{A}z$.

Now, we will show \cref{thm:gibbs-full} in the following.

\begin{proof} [Proof of \cref{thm:gibbs-full}]
  Now we start to describe our algorithm.
  By our consistent quantum amplitude estimation
  (\cref{thm:gapped-consist-quantum-amp-est}), we choose an $O\rbra*{r}$-bit
  random string $s$, then we can obtain a quantum algorithm
  $\mathcal{O}_{\hat u}$ such that, with probability
  $1 - O\rbra*{\exp\rbra*{-r}}$, for every $i \in \sbra*{n}$, it computes
  $f^*\rbra*{s, u_i/\beta}$ with
  $\widetilde O\rbra*{\delta^{-1}} \cdot \poly\rbra*{r}$ queries to $V$,
  where $f^*\rbra*{s, p}$ is a function that only depends on $s$ and $p$, and it
  holds that
  \begin{equation*}
    \abs{f^*\rbra*{s, p} - p} \leq \delta
  \end{equation*}
  for every $p \in \sbra*{-1, 1}$.
  Here, $r, \delta$ are parameters to be determined.
  Note that 
  \begin{equation*}
    \frac{u_i}{\beta} = \Abs*{\bra{0}_C V\ket{0}_C \ket{i}_A \ket{0}_B}^2,
  \end{equation*}
  so when applying consistent quantum amplitude estimation, we just use a 
  controlled-XOR gate conditioned on the index
  and with $A$ the target system, before every query to $V$.
    
  By quantum $k$-maximum finding algorithm (\cref{thm:q-max}), we can find a set
  $S \subseteq \sbra*{n}$ with $\abs{S} = k$ such that
  $f^*\rbra*{s, u_i/\beta} \geq f^*\rbra*{s, u_j/\beta}$ for every $i \in S$ and
  $j \notin S$ with probability $0.99 - O\rbra*{\sqrt{nk} \exp\rbra*{-r}}$,
  using $O\rbra*{\sqrt{nk}}$ queries to $\mathcal{O}_{\hat u}$.
  To obtain a constant probability, it is sufficient to choose
  $r = \Theta\rbra*{\log\rbra*{n}}$.

  For each $i \in S$, again applying our consistent quantum amplitude estimation
  (\cref{thm:gapped-consist-quantum-amp-est}), we can obtain the value of
  $f^*\rbra*{s, u_i/\beta}$ with probability $1 - O\rbra*{\exp\rbra*{-r}}$,
  using $\widetilde O\rbra*{\delta^{-1}} \cdot \poly\rbra*{r}$ queries to $V$;
  then we set
  \begin{equation*}
    \hat u_i = \beta f^*\rbra*{s, \frac{u_i}{\beta}}
  \end{equation*}
  for all $i \in S$, with success probability $1 - O\rbra*{k\exp\rbra*{-r}}$ and
  using $\widetilde O\rbra*{k \delta^{-1}} \cdot \poly\rbra*{r}$ queries to $V$
  in total.
  It can be seen that $\abs{\hat u_i - u_i} \leq \beta\delta$ for every
  $i \in S$.
    
  Let $\tilde u_i = \hat u_i + \beta \delta$, and then we store $\tilde u_i$ for
  all $i \in S$ in the data structure as in \cref{thm:quantum-state-preparation}
  (which costs $O\rbra*{k}$ QRAM operations).
  Then, we calculate
  \begin{equation*}
    W = \rbra*{n-k} \exp\rbra*{\tilde u^*} + \sum_{i \in S} \exp \rbra*{\tilde u_i}
  \end{equation*}
  by classical computation in $\widetilde O\rbra*{k}$ time, where
  \begin{equation*}
    \tilde u^* = \min_{i \in S} \tilde u_i. 
  \end{equation*}
  By \cref{thm:quantum-state-preparation}, we can prepare the quantum state
  \begin{equation*}
    \ket{u_{\text{guess}}} = \sum_{i \in S}
    \sqrt{\frac{\exp\rbra*{\tilde u_i}}{W} } \ket{i} +
    \sum_{i \notin S} \sqrt{\frac{\exp\rbra*{\tilde u^*}}{W} } \ket{i}
  \end{equation*}
  in $\widetilde O\rbra*{1}$ time.
    
  Now we introduce another system $D$, and then let 
  \begin{equation*}
    U_u = \rbra{V^\dag_{CAB} \otimes I_{D}} \rbra{V_{DAB}\otimes I_{C}}. 
  \end{equation*}
  It can be shown (see \cref{prop:U_u}) that $U_u$ is a
  $\rbra*{1, 0}$-block-encoding of $\diag\rbra*{u} / \beta$.
  By QRAM access to $\tilde u_i$, we can implement a unitary operator
  \begin{equation*}
    V_{\tilde u} \colon \ket{i}_A \ket{0}_B \mapsto \ket{i}_A
    \rbra*{\sqrt{\frac{\max\cbra*{\tilde u_i, \tilde u^*}}{2\beta}}\ket{0}_B +
      \sqrt{1-\frac{\max\cbra*{\tilde u_i, \tilde u^*}}{2\beta}}\ket{1}_B }
  \end{equation*}
  in $\widetilde O\rbra*{1}$ time by noting that
  $\max\cbra*{\tilde u_i, \tilde u^*} = \tilde u_i$ if $i \in S$ and
  $\tilde u^*$ otherwise.
  We introduce one-qubit system $C$, and let
  \begin{equation*}
    U^{\max}_{\tilde u} = \rbra*{V_{\tilde u}^\dag \otimes I_C}
    \rbra*{\text{SWAP}_{BC} \otimes I_A} \rbra*{V_{\tilde u} \otimes I_C}.
  \end{equation*}
  It can be shown that $U^{\max}_{\tilde u}$ is a $\rbra*{1, 0}$-block-encoding
  of $\diag\rbra*{\max\cbra*{\tilde u, \tilde u^*}} / 2\beta$.
  Applying the LCU technique (\cref{thm:linear-combination-unitary}), we can
  obtain a unitary operator $U^-$ that is a $\rbra*{1, 0}$-block-encoding of
  $\diag\rbra*{u - \max\cbra*{\tilde u, \tilde u^*}} / 4\beta$, using
  $O\rbra*{1}$ queries to $U_u$ and $U^{\max}_{\tilde u}$.
  By the QSVT technique (\cref{thm:qsvt} and \cref{lem:poly-approx}), we can
  construct a unitary operator $U^{\exp}$ that is a
  $\rbra*{1, 0}$-block-encoding of
  $P_{2\beta}\rbra*{\diag\rbra*{u - \max\cbra*{\tilde u, \tilde u^*}} / 4\beta}$,
  using $O\rbra{\beta \log\rbra{\epsilon_P^{-1}}}$ queries to $U^-$, where
  \begin{equation*}
    \abs{P_{2\beta}\rbra*{x} - \frac 1 4 \exp\rbra*{2\beta x} } \leq \epsilon_{P}
  \end{equation*}
  for every $x \in \sbra*{-1, 0}$ and $\epsilon_P \in \rbra*{0, 1/2}$ is to be
  determined.
  Suppose $U^{\exp}$ has an $a$-qubit ancilla system, and let
  $\ket{\tilde u_{\text{post}}} = \bra{0}^{\otimes a} U^{\exp}
  \ket{u_\text{guess}} \ket{0}^{\otimes a}$.
  Note that
  \begin{equation*}
    \ket{\tilde u_{\text{post}}} = \sum_{i \in S} P_{2\beta}
    \rbra*{\frac{u_i - \tilde u_i}{4\beta}}
    \sqrt{\frac{\exp\rbra*{\tilde u_i}}{W}} \ket{i} +
    \sum_{i \notin S} P_{2\beta}\rbra*{\frac{u_i - \tilde u^*}{4\beta}}
    \sqrt{\frac{\exp\rbra*{\tilde u^*}}{W}} \ket{i}.
  \end{equation*}
  It can be shown (\cref{prop:u_post}) that
  $\Abs{\ket{\tilde u_{\text{post}}}}^2 \geq \Theta\rbra*{k / n}$; thus by
  quantum amplitude amplification (\cref{thm:amplification}), we can obtain
  \begin{equation*}
    \ket{\tilde u_{\text{Gibbs}}} = \frac{\ket{\tilde u_{\text{post}}} }
    {\Abs*{\ket{\tilde u_{\text{post}}}}}
  \end{equation*}
  using $O\rbra{\sqrt{n/k}}$ queries to $U^{\exp}$.
  By measuring $\ket{\tilde u_{\text{Gibbs}}}$ on the computational basis, we
  return the outcome as a sample from the distribution
  $\tilde u_{\text{Gibbs}}$; it can be shown (\cref{prop:diff-total-variation})
  that the total variation distance between $\tilde u_{\text{Gibbs}}$ and
  $\Oracle_{u}$ is bounded by
  \begin{equation*}
    d_{\text{TV}}\rbra*{\tilde u_{\text{Gibbs}}, \Oracle_{u}}
    \leq \sqrt{\frac{88n\epsilon_P}{k\exp\rbra*{-2\beta\delta} } }.
  \end{equation*}
  By taking $\delta = 1/2\beta$ and
  $\epsilon_P = k \epsilon^2_{\G}/300n$, we
  can produce one sample from $\tilde u_{\text{Gibbs}}$, using
  $\widetilde O\rbra{\beta \sqrt{n/k}}$ queries to $U_u$ and
  $U^{\max}_{\tilde u}$, with $\widetilde O\rbra{\beta \sqrt{nk}}$-time
  precomputation.

  Finally, by applying $k$ times the above procedure (with the precomputation
  processed only once), we can produce $k$ independent and identically
  distributed samples from $\tilde u_{\text{Gibbs}}$ that is
  $\epsilon_{\text{Gibbs}}$-close to the Gibbs distribution
  $\Oracle_{u}$, with total time complexity
  \begin{equation*}
    \widetilde O\rbra*{\beta \sqrt{nk}} + k \cdot
    \widetilde O\rbra*{\beta \sqrt{\frac{n}{k}}} = \widetilde O\rbra*{\beta \sqrt{nk}}.
  \end{equation*}
\end{proof}

\subsection{Technical Lemmas}

\begin{prop}\label{prop:V00i0}
  Let $V$ defined by \cref{eq:def-V-Az}, we have
  \begin{equation*}
    \bra{0}_C \bra{0}_D V \ket{0}_C \ket{0}_D \ket{i}_A \ket{0}_B =
    \sum_{j \in \sbra*{n}} \sqrt{\frac{A_{i,j}z_j}{\beta} } \ket{i}_A \ket{j}_B.
  \end{equation*}
\end{prop}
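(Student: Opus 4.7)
The plan is to verify the identity by directly composing the two factors of $V$ (as given in \eqref{eq:def-V-Az}) on the input state and tracking each register. First I would apply $U_z^{\textup{QRAM}} \otimes I_{EA}$ to $\ket{0}_C \ket{0}_D \ket{i}_A \ket{0}_B$; by its defining action this produces
\[
\rbra*{\ket{0}_C \sum_{j \in \sbra*{n}} \sqrt{z_j/\beta}\, \ket{j}_B + \ket{1}_C \ket{\phi}_B} \otimes \ket{0}_D \ket{i}_A,
\]
since $U_z^{\textup{QRAM}}$ touches only the $C$ and $B$ registers.

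Next I would apply the block-controlled unitary $\ket{0}_C\bra{0} \otimes \mathcal{O}_{\mathbf{A}}' + \ket{1}_C\bra{1} \otimes I_{DAB}$. On the $\ket{1}_C$ branch nothing happens, so that piece keeps the $\ket{1}_C$ flag. On the $\ket{0}_C$ branch $\mathcal{O}_{\mathbf{A}}'$ acts on $(D,A,B)$ and sends each $\ket{0}_D \ket{i}_A \ket{j}_B$ to $\bigl(\sqrt{A_{i,j}}\ket{0}_D + \sqrt{1 - A_{i,j}}\ket{1}_D\bigr) \ket{i}_A \ket{j}_B$, so the $\ket{0}_C$ branch becomes
\[
\ket{0}_C \sum_{j \in \sbra*{n}} \sqrt{\tfrac{z_j}{\beta}}\, \bigl(\sqrt{A_{i,j}}\,\ket{0}_D + \sqrt{1 - A_{i,j}}\,\ket{1}_D\bigr) \ket{i}_A \ket{j}_B.
\]

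Finally I would take the inner product with $\bra{0}_C \bra{0}_D$. The $\ket{1}_C$ piece is annihilated by $\bra{0}_C$, and inside the surviving sum only the $\ket{0}_D$ component is picked out by $\bra{0}_D$, which multiplies each term by $\sqrt{A_{i,j}}$. Combining with $\sqrt{z_j/\beta}$ under a common square root yields exactly $\sum_{j \in \sbra*{n}} \sqrt{A_{i,j} z_j / \beta}\, \ket{i}_A \ket{j}_B$, the claimed identity.

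The only real obstacle is bookkeeping: one must reconcile the ancilla register name used in the statement ($D$) with the name ($E$) introduced in the reformulation preceding \eqref{eq:def-V-Az}, and keep straight which factor of $V$ acts on which subset of $(C,D,A,B)$ so that the controlled structure lines up with the tensor-product identities. Once the register labels are matched consistently, the calculation is mechanical and no further nontrivial step is required.
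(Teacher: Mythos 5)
Your proposal is correct and follows essentially the same route as the paper's own proof: apply $U_z^{\textup{QRAM}}$, then the $C$-controlled $\mathcal{O}_{\mathbf{A}}'$, and project with $\bra{0}_C\bra{0}_D$ (the paper likewise silently identifies the ancilla label $E$ from \cref{eq:def-V-Az} with $D$ in the statement). No gaps; the computation is the same mechanical register bookkeeping the paper performs.
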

\begin{proof}
  \begin{align*}
    & V \ket{0}_C \ket{0}_D\ket{i}_A \ket{0}_B  \\
    = & \rbra*{\ket{0}_C\bra{0} \otimes \mathcal{O}_{\mathbf{A}}' + \ket{1}_C\bra{1}
        \otimes I_{AB}} \rbra*{\ket{0}_C\ket{0}_D\ket{i}_A \sum_{j \in \sbra*{n}}
        \sqrt{\frac{z_j}{\beta}} \ket{j}_B + \ket{1}_C\ket{0}_D\ket{i}_A\ket{\phi}_B} \\
    = & \ket{0}_C \sum_{j \in \sbra*{n}} \rbra*{\sqrt{A_{i,j}} \ket{0}_D +
        \sqrt{1-{A_{i,j}}} \ket{1}_D} \sqrt{\frac{z_j}{\beta}} \ket{i}_A \ket{j}_B +
        \ket{1}_C\ket{0}_D\ket{i}_A\ket{\phi}_B  \\
    = & \ket{0}_C \ket{0}_D \sum_{j \in \sbra*{n}} \sqrt{\frac{A_{i,j} z_j}{\beta}}
        \ket{i}_A \ket{j}_B + \ket{0}_C \ket{1}_D \sum_{j \in \sbra*{n}}
        \sqrt{\frac{\rbra*{1-{A_{i,j}}} z_j}{\beta}} \ket{i}_A \ket{j}_B +
        \ket{1}_C\ket{0}_D\ket{i}_A\ket{\phi}_B.
  \end{align*}
\end{proof}

\begin{prop}\label{prop:U_u}
  In the proof of \cref{thm:gibbs-full}, $U_u$ is a
  $\rbra*{1, 0}$-block-encoding of $\diag\rbra*{u} / \beta$.
\end{prop}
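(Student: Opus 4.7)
The plan is to verify the block-encoding identity $\bra{0}_C\bra{0}_D\bra{0}_B U_u\ket{0}_C\ket{0}_D\ket{0}_B = \diag\rbra{u}/\beta$ as an operator on the target system $A$ (with ancilla systems $C,D,B$), by tracking what $U_u$ does to a generic basis state $\ket{i}_A$ and showing the result is $\rbra{u_i/\beta}\ket{i}_A$. Since the claim asserts $\epsilon = 0$, I aim for an exact operator identity, not an approximation.

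First, I translate the amplitude-encoding hypothesis (\cref{def:amplitude-encoding}) into the decomposition
\[
V\ket{0}_C\ket{i}_A\ket{0}_B \;=\; \sqrt{u_i/\beta}\,\ket{0}_C\ket{i}_A\ket{\psi_i}_B + \ket{\Phi_i},
\]
where $\ket{\Phi_i}$ lies in the $\ket{0}_C$-orthogonal subspace of $CAB$ and the $\ket{\psi_i}$ are unit vectors (as forced by $V$ being unitary together with $u_i\le \beta$). Applying the same identity to $V_{DAB}$, with $D$ in the role of $C$, splits the vector $(V_{DAB}\otimes I_C)\ket{0}_C\ket{0}_D\ket{i}_A\ket{0}_B$ into a $\ket{0}_D$-coherent part, namely $\sqrt{u_i/\beta}\,\ket{0}_C\ket{0}_D\ket{i}_A\ket{\psi_i}_B$, plus a remainder $\ket{0}_C\otimes\ket{\Phi_i^D}$ whose $\ket{0}_D$-component vanishes.

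Next, I apply $V^\dag_{CAB}\otimes I_D$ and project onto $\bra{0}_D$. The key observation is that $V^\dag_{CAB}\otimes I_D$ acts trivially on $D$, so it commutes with $\ket{0}\bra{0}_D$ and therefore annihilates the $\ket{\Phi_i^D}$ remainder. The surviving contribution is $\sqrt{u_i/\beta}\,V^\dag\ket{0}_C\ket{i}_A\ket{\psi_i}_B$.

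Finally, I evaluate $\bra{0}_C\bra{0}_B V^\dag\ket{0}_C\ket{i}_A\ket{\psi_i}_B$ coordinate by coordinate in $A$. Writing $M := (\bra{0}_C\otimes I_{AB})V(\ket{0}_C\otimes I_{AB})$ and taking Hermitian conjugates of the defining relation $M\ket{j}_A\ket{0}_B = \sqrt{u_j/\beta}\,\ket{j}_A\ket{\psi_j}_B$, the coefficient of $\ket{j}_A$ in $\bra{0}_B M^\dag\ket{i}_A\ket{\psi_i}_B$ equals $\sqrt{u_j/\beta}\,\braket{j}{i}_A\braket{\psi_j}{\psi_i}_B = \sqrt{u_i/\beta}\,\delta_{ij}$, so $\bra{0}_B M^\dag\ket{i}_A\ket{\psi_i}_B = \sqrt{u_i/\beta}\,\ket{i}_A$. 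Multiplying by the outer $\sqrt{u_i/\beta}$ from the previous step yields $\rbra{u_i/\beta}\ket{i}_A$, as required. The only delicate step is the middle one --- ensuring the $\ket{0}_D$-orthogonal tail is killed after acting with $V^\dag_{CAB}$ --- and this is a direct consequence of the tensor-product structure of $V^\dag_{CAB}\otimes I_D$.
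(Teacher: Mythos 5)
Your proof is correct and takes essentially the same approach as the paper: the paper's proof likewise expands $\rbra{V_{DAB}\otimes I_C}\ket{0}_C\ket{0}_D\ket{i}_A\ket{0}_B$ and the conjugated amplitude-encoding relation for $V_{CAB}$, and kills the cross terms by orthogonality to $\ket{0}$ in the $C$ and $D$ ancillas, so that only the term $\sqrt{u_iu_j}/\beta\,\braket{j}{i}_A\braket{\psi_j}{\psi_i}_B = \delta_{ij}u_i/\beta$ survives. Your version merely reorganizes the bookkeeping (projecting out $D$ first via the commutation of $V^\dag_{CAB}\otimes I_D$ with the $D$-projector, then applying the conjugated defining relation of $M$), which is the same computation as the paper's single inner-product evaluation.
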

\begin{proof}
  To see this, for every $i, j \in \sbra*{n}$,
  \begin{align*}
    & \bra{0}_C \bra{0}_D \bra{j}_A \bra{0}_B U_u
       \ket{0}_C \ket{0}_D \ket{i}_A \ket{0}_B \\
    = & \bra{0}_C \bra{0}_D \bra{j}_A \bra{0}_B \rbra{V^\dag_{CAB} \otimes I_{D}}
        \rbra{V_{DAB}\otimes I_{C}}  \ket{0}_C \ket{0}_D \ket{i}_A \ket{0}_B \\
    = & \rbra*{\sqrt{u_j/\beta}\bra{0}_C\bra{0}_D
        \bra{j}_A\bra{\psi_j}_B + \bra{1}_C\bra{0}_D\bra{g_j}_{AB}}
        \rbra*{\sqrt{u_i/\beta}\ket{0}_C\ket{0}_D\ket{i}_A\ket{\psi_i}_B+
        \ket{0}_C\ket{1}_D \ket{g_i}_{AB}}\\
    = & \braket{j}{i}_A \frac{u_i}{\beta}.
  \end{align*}
\end{proof}

\begin{prop}\label{prop:u_post}
  In the proof of \cref{thm:gibbs-full}, if $\delta = 1/2\beta$,
  $E=\sum_{j\in \sbra*{n}} \exp\rbra*{u_{j}}$, and
  $\epsilon_P = k \epsilon^2_{\G}/300n$, then
  \begin{equation*}
    \Theta \rbra*{\frac{k}{n}} \leq \frac{E}{16W} - 2\epsilon_{P} \leq
    \Abs*{\ket{u_{\textup{post}}} }^2 \leq \frac{E}{16W} + 3\epsilon_P.
  \end{equation*}
\end{prop}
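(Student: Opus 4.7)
The plan is to expand $\Abs{\ket{\tilde u_{\text{post}}}}^2$ as a sum over $i\in\sbra*{n}$ of $P_{2\beta}^2(\cdot)$ values times the guess-weight at $i$, to replace each $P_{2\beta}^2$ by $\tfrac{1}{16}\exp(4\beta x)$ via \cref{lem:poly-approx}, and finally to reduce the $\Theta\rbra*{k/n}$ claim to the estimate $W \le (en/k)\,E$. First I would verify that the polynomial arguments live in the valid domain $\sbra*{-1,0}$. For $i\in S$, the choice $\delta = 1/(2\beta)$ made in the proof of \cref{thm:gibbs-full} gives $u_i \le \tilde u_i \le u_i+1$, so $(u_i-\tilde u_i)/(4\beta)\in\sbra*{-1/(4\beta),0}\subseteq\sbra*{-1,0}$. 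For $i\notin S$, consistency of quantum $k$-maximum finding operating on the estimates $\tilde u$ forces $\tilde u^* \ge \tilde u_i \ge u_i \ge 0$, while $\tilde u^* \le \beta+1 \le 4\beta$ keeps the argument inside $\sbra*{-1,0}$.

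Next I would control the error term-by-term. Because $\abs{P_{2\beta}(x)} \le 1$ and $\tfrac{1}{4}\exp(2\beta x) \le \tfrac{1}{4}$ on $\sbra*{-1,0}$, the identity $a^2-b^2=(a-b)(a+b)$ together with \cref{lem:poly-approx} gives
\begin{equation*}
\Abs*{P_{2\beta}^2(x) - \tfrac{1}{16}\exp\rbra*{4\beta x}} \le \tfrac{5}{4}\epsilon_P.
\end{equation*}
Substituting $x = (u_i - \tilde u_i)/(4\beta)$ for $i \in S$ and $x = (u_i-\tilde u^*)/(4\beta)$ for $i \notin S$, each idealized term collapses to $\exp\rbra*{u_i}/(16W)$ once multiplied by the corresponding weight $\exp(\tilde u_i)/W$ or $\exp(\tilde u^*)/W$. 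Summing over $i$ and using that these weights sum to $1$ by the definitions of $W$ and $\ket{u_{\text{guess}}}$, I obtain
\begin{equation*}
\Abs*{\Abs*{\ket{\tilde u_{\text{post}}}}^2 - \frac{E}{16W}} \le \tfrac{5}{4}\epsilon_P,
\end{equation*}
which already sits inside the window $\sbra*{-2\epsilon_P, 3\epsilon_P}$ asserted by the proposition.

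The main obstacle is the $\Theta\rbra*{k/n}$ lower bound, which I reduce to showing $W \le (en/k)\,E$. Writing $E_S = \sum_{i\in S}\exp(u_i)$, the bound $\tilde u_i \le u_i + 1$ immediately yields $\sum_{i\in S}\exp(\tilde u_i) \le e\,E_S$. For the bulk term, $\tilde u^* = \min_{i\in S}\tilde u_i \le 1 + \min_{i\in S} u_i$, and since the minimum of $k$ numbers is at most their average, $\exp\rbra*{\min_{i\in S} u_i} \le E_S/k$; hence $(n-k)\exp(\tilde u^*) \le (e(n-k)/k)\,E_S$. Combining,
\begin{equation*}
W \le e\,E_S\rbra*{1 + \tfrac{n-k}{k}} = \tfrac{en}{k}\,E_S \le \tfrac{en}{k}\,E,
\end{equation*}
so $E/(16W) \ge k/(16en)$. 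Since $2\epsilon_P = k\epsilon_{\G}^2/(150n)$ is a strictly smaller constant multiple of $k/n$ (as $\epsilon_{\G}\le 1$ and $1/(16e) > 1/150$), $E/(16W) - 2\epsilon_P = \Theta\rbra*{k/n}$, closing the chain. The one subtlety worth flagging is that $k$-maximum finding here operates on the consistent estimates $\tilde u$, not on $u$ directly, so the hierarchy $\tilde u^* \ge \tilde u_i \ge u_i$ for $i\notin S$ must be read off from the consistency of the amplitude estimator rather than from any direct comparison with $u$.
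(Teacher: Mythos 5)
Your proposal is correct and follows essentially the same route as the paper: expand $\Abs{\ket{\tilde u_{\text{post}}}}^2$ over the guess weights, replace $P_{2\beta}^2$ by $\tfrac{1}{16}\exp(4\beta x)$ up to an $O(\epsilon_P)$ error so the weighted sum telescopes to $E/(16W)$, and lower-bound $E/W$ by $k/(en)$ via $\tilde u_i \le u_i + 1$ and the minimum-versus-average bound on $S$. Your two-sided $\tfrac{5}{4}\epsilon_P$ estimate from $a^2-b^2=(a-b)(a+b)$ is a mild streamlining of the paper's separate $(a-b)^2\ge a^2-2ab$ and $(a+b)^2\le a^2+3ab$ bounds, and your explicit check that the polynomial arguments lie in $\sbra*{-1,0}$ (using $u_j\le\tilde u_j\le\tilde u^*$ for $j\notin S$) is a detail the paper leaves implicit.
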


\begin{proof}
  We first give an upper bound for $W$ in terms of $u_i$ and $\tilde u^*$.
  Notice that $\tilde{u}_i \le u_i + 2 \beta\delta$ for all $i\in S$, we have:
  \begin{equation*}
    W  = \rbra*{n-k} \exp\rbra*{\tilde u^*} +
    \sum_{i \in S} \exp\rbra*{\tilde u_i}
    \leq \exp\rbra*{2\beta\delta} \rbra*{\rbra*{n-k} \exp\rbra*{u^*} +
      \sum_{i \in S} \exp\rbra*{u_i} }.
  \end{equation*}
  Note that
  \begin{align*}
    \frac{\rbra*{n-k} \exp\rbra*{u^*} + \sum\limits_{i \in S} \exp\rbra*{u_i} }
    {\sum\limits_{i \in \sbra*{n}} \exp\rbra*{u_i}} \leq \frac{n-k}{k} + 1 = \frac{n}{k},
  \end{align*}
  then we have
  \begin{equation}
    \label{eq:EW-bound}
    \frac{E}{W} \geq \sum_{i \in \sbra*{n}}
    \frac{\exp\rbra*{u_i}}{\exp\rbra*{2\beta\delta}
      \rbra*{\rbra*{n-k} \exp\rbra*{u^*} + \sum_{i \in S} \exp\rbra*{u_i} }}
    \geq \frac{k}{n} \exp\rbra*{-2\beta\delta}.
  \end{equation}
  With this, noting that $\rbra{a-b}^2\ge a^2-2ab$ for any real $a$ and $b$, we have
  \begin{align*}
    \Abs*{\ket{u_{\textup{post}}} }^2
    & = \sum_{i \in S} \rbra*{P_{2\beta}\rbra*{\frac{u_i - \tilde u_i}{4\beta}}}^2
      {\frac{\exp\rbra*{\tilde u_i}}{W}} + \sum_{i \notin S}
      \rbra*{P_{2\beta}\rbra*{\frac{u_i - \tilde u^*}{4\beta}}}^2
      {\frac{\exp\rbra*{\tilde u^*}}{W}} \\
    & \geq \sum_{i \in S} \rbra*{\rbra*{\frac 1 4 \exp \rbra*{\frac{u_i - \tilde u_i}{2}} }^2
      - 2\epsilon_P } {\frac{\exp\rbra*{\tilde u_i}}{W}} \\
    & \qquad +
      \sum_{i \notin S} \rbra*{\rbra*{\frac 1 4 \exp \rbra*{\frac{u_i - \tilde u^*}{2}}}^2
      - 2\epsilon_P } {\frac{\exp\rbra*{\tilde u^*}}{W}} \\
    & = \frac 1 {16} \rbra*{\sum_{i \in S}  \exp
      \rbra*{{u_i - \tilde u_i}}{\frac{\exp\rbra*{\tilde u_i}}{W}} +
      \sum_{i \notin S}  \exp \rbra*{{u_i - \tilde u^*}}
      {\frac{\exp\rbra*{\tilde u^*}}{W}} } \\
    & \qquad -
      2\epsilon_P \rbra*{\sum_{i \in S} \frac{\exp\rbra*{\tilde u_i}}{W} +
      \sum_{i \notin S} \frac{\exp\rbra*{\tilde u^*}}{W} } \\
    & \geq \frac{E}{16W} - 2\epsilon_P \\
    & \geq \Theta\rbra*{\frac{k}{n}}.
  \end{align*}
  On the other hand, a similar argument using the inequality
  $\rbra{a+b}^2\le a^2+3ab$ for positive real $a\ge b$ gives
  \begin{align*}
    \Abs*{\ket{u_{\textup{post}}} }^2
    & \leq \frac{E}{16W} + 3\epsilon_P.
  \end{align*}
  These yield the proof.
\end{proof}

\begin{prop}\label{prop:diff-total-variation}
  In the proof of \cref{thm:gibbs-full}, the total variation distance between
  the two distributions $\tilde u_{\text{Gibbs}}$ and $\Oracle_{u}$ is
  bounded by
  \begin{equation*}
    d_{\text{TV}}\rbra*{\tilde u_{\textup{Gibbs}}, \Oracle_{u}} \leq
    \sqrt{\frac{88n\epsilon_P}{k\exp\rbra*{-2\beta\delta} } }.
  \end{equation*}
\end{prop}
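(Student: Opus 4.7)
The plan is to bound the total variation distance by the $L^2$ distance between $\ket{\tilde u_{\text{Gibbs}}}$ and the ideal Gibbs state $\ket{u_{\text{Gibbs}}}:=\tfrac{1}{\sqrt{E}}\sum_{i\in\sbra*{n}}\sqrt{\exp(u_i)}\ket{i}$, whose computational-basis measurement realises $\Oracle_u$ exactly. First I will analyse the unnormalised post-selected vector $\ket{\tilde u_{\text{post}}}$ and only then pass to the normalised state.

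Next I would compare $\ket{\tilde u_{\text{post}}}$ against the scaled ideal $\tfrac{\sqrt{E}}{4\sqrt{W}}\ket{u_{\text{Gibbs}}}$, whose amplitude at $\ket{i}$ equals $\tfrac{1}{4}\sqrt{\exp(u_i)/W}$. For $i\in S$, rewrite this as $\tfrac{1}{4}\exp((u_i-\tilde u_i)/2)\sqrt{\exp(\tilde u_i)/W}$ and invoke \cref{lem:poly-approx} at the argument $(u_i-\tilde u_i)/(4\beta)\in[-1/(4\beta),0]$; the per-coordinate amplitude error is then at most $\epsilon_P\sqrt{\exp(\tilde u_i)/W}$, with the same bound (using $\tilde u^*$ in place of $\tilde u_i$) covering $i\notin S$. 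Squaring, summing, and using $\sum_{i\in S}\exp(\tilde u_i)+(n-k)\exp(\tilde u^*)=W$, the weights collapse to give $\bigl\lVert\ket{\tilde u_{\text{post}}}-\tfrac{\sqrt{E}}{4\sqrt{W}}\ket{u_{\text{Gibbs}}}\bigr\rVert\le\epsilon_P$.

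Combining \cref{prop:u_post} with the inequality $E/W\ge(k/n)\exp(-2\beta\delta)$ from its proof yields the lower bound $a:=\lVert\ket{\tilde u_{\text{post}}}\rVert=\Omega(\sqrt{k\exp(-2\beta\delta)/n})$, while the triangle inequality applied to the previous step forces $|a-\sqrt{E/(16W)}|\le\epsilon_P$. A standard two-step triangle argument (comparing $\ket{\tilde u_{\text{post}}}/a$ first to $(\tfrac{\sqrt E}{4\sqrt W}/a)\ket{u_{\text{Gibbs}}}$ and then to $\ket{u_{\text{Gibbs}}}$) then gives $\bigl\lVert\ket{\tilde u_{\text{Gibbs}}}-\ket{u_{\text{Gibbs}}}\bigr\rVert\le 2\epsilon_P/a$. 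Since both states have non-negative real amplitudes, Cauchy--Schwarz applied to the identity $|\alpha_i^2-\beta_i^2|=|\alpha_i-\beta_i|(\alpha_i+\beta_i)$ converts the $L^2$ bound into $d_{\text{TV}}\le\bigl\lVert\ket{\tilde u_{\text{Gibbs}}}-\ket{u_{\text{Gibbs}}}\bigr\rVert$, and substituting the lower bound on $a$ delivers the stated inequality up to absolute constants.

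The main technical obstacle is the bookkeeping across the two regimes $i\in S$ and $i\notin S$, where the polynomial is evaluated at two different reference values $\tilde u_i$ and $\tilde u^*$ but the errors must fit into a single estimate controlled by the total weight $W$. Once this is handled, the $\sqrt{n/(k\exp(-2\beta\delta))}$ prefactor in the stated bound is exactly the amplification incurred when the $L^2$-level error $\epsilon_P$ is divided by the small normalisation $a$.
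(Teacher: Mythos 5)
Your argument is correct, and it takes a genuinely different route in its final conversion step. The paper also compares $\ket{\tilde u_{\text{post}}}$ with the ideal state $\ket{u_{\text{Gibbs}}}$ using \cref{lem:poly-approx}, \cref{prop:u_post} and the bound $E/W \ge \rbra{k/n}\exp\rbra{-2\beta\delta}$, but it does so by lower-bounding the \emph{inner product} $\braket{\tilde u_{\text{post}}}{u_{\text{Gibbs}}}$, dividing by $\Abs{\ket{\tilde u_{\text{post}}}}$ to get a fidelity bound, and then using $d_{\text{TV}} \le \frac12\tr\abs{\cdot} = \sqrt{1-\abs{\braket{\tilde u_{\text{Gibbs}}}{u_{\text{Gibbs}}}}^2}$, which produces the stated constant $88$ directly. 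You instead bound the \emph{coordinate-wise amplitude error} against the rescaled ideal $\frac{\sqrt{E}}{4\sqrt{W}}\ket{u_{\text{Gibbs}}}$, collapse the weights to get an $\ell_2$ error of at most $\epsilon_P$, normalize via a two-step triangle inequality, and convert to total variation through $\sum_i\abs{\alpha_i^2-\beta_i^2}=\sum_i\abs{\alpha_i-\beta_i}\abs{\alpha_i+\beta_i}\le 2\Abs{\alpha-\beta}_2$ (note the non-negativity of amplitudes is not even needed here, which is just as well since $P_{2\beta}$ may dip slightly below zero). Your route is sound and in fact yields a bound of order $\epsilon_P\sqrt{n/\rbra{k\exp\rbra{-2\beta\delta}}}$, linear rather than square-root in $\epsilon_P$, hence qualitatively stronger in the relevant regime. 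The one loose end is that the proposition asserts the literal constant $88$ under the square root: to deduce it from your linear bound you must track your absolute constants (coming from the lower bound on $a=\Abs{\ket{\tilde u_{\text{post}}}}$ via \cref{prop:u_post}) and use $\epsilon_P\le 1/2$ (or the explicit choice $\epsilon_P = k\epsilon_{\G}^2/300n$, $\delta = 1/2\beta$) so that $C\epsilon_P \le \sqrt{88\,\epsilon_P}\,$ scaled appropriately; this works out but is not carried out in your write-up, whereas the paper's fidelity computation lands on $1-\frac{88n\epsilon_P}{k\exp\rbra{-2\beta\delta}}$ exactly.
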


\begin{proof}
  Define $E = \sum\limits_{j\in \sbra*{n}} \exp\rbra*{u_{j}}$.
  Let
  \begin{equation*}
    \ket{u_{\text{Gibbs}}} = \sum_{i \in \sbra*{n}}
    \sqrt{\frac{\exp\rbra*{u_i}}{E}} \, \bigket{i}
  \end{equation*}
  be the intended quantum state with amplitudes the same as the Gibbs
  distribution $\Oracle_{u}$.
  The inner product between $\ket{\tilde u_{\text{post}}}$ and
  $\ket{u_{\text{Gibbs}}}$ can be bounded by:
  \begin{align*}
    \braket{\tilde u_{\text{post}}}{u_{\text{Gibbs}}}
    & = \sum_{i \in S} P_{2\beta}\rbra*{\frac{u_i - \tilde u_i}{4\beta}}
      \sqrt{\frac{\exp\rbra*{\tilde u_i}}{W}}
      \sqrt{\frac{\exp\rbra*{u_i}}{E}} \\
    & \qquad + \sum_{i \notin S} P_{2\beta}\rbra*{\frac{u_i - \tilde u^*}{4\beta}}
      \sqrt{\frac{\exp\rbra*{\tilde u^*}}{W}}
      \sqrt{\frac{\exp\rbra*{u_i}}{E}} \\
    & \geq \sum_{i \in S} \rbra*{\frac 1 4 \exp \rbra*{\frac{u_i - \tilde u_i}{2}} -
      \epsilon_P} \sqrt{\frac{\exp\rbra*{\tilde u_i}}{W}}
      \sqrt{\frac{\exp\rbra*{u_i}}{E}} \\
    & \qquad + \sum_{i \notin S} \rbra*{\frac 1 4 \exp
      \rbra*{\frac{u_i - \tilde u_i}{2}} - \epsilon_P}
      \sqrt{\frac{\exp\rbra*{\tilde u^*}}{W}}
      \sqrt{\frac{\exp\rbra*{u_i}}{E}} \\
    & \ge \frac{1}{4\sqrt{W E}} \biggl(\sum_{i \in \sbra*{n}}
      \exp\rbra*{u_i}\biggr) - \epsilon_P.
  \end{align*}
  The last step is by Cauchy's inequality.
  By \cref{prop:u_post} and \cref{eq:EW-bound}, we have
  \begin{align*}
    \abs{\braket{\tilde u_{\text{Gibbs}}}{u_{\text{Gibbs}}}}^2 =
    \frac{\abs{\braket{\tilde u_{\text{post}}}{u_{\text{Gibbs}}}}^2}
    {\norm{\ket{\tilde u_{\text{post}}}}^2}
    & \geq \frac{E }{16W \norm{\ket{\tilde u_{\text{post}}}}^2}
      - \frac{\epsilon_P }{2\norm{\ket{\tilde u_{\text{post}}}}^2}\\
    &\ge \frac{E }{16W \rbra*{\dfrac{E}{16W} + 3\epsilon_P}} -
      \frac{\epsilon_P}{2\norm{\ket{\tilde {u}_{\text{post}}}}^2} \\
    &\ge 1- \frac{48\epsilon_P}{E/W} -
      \frac{8\epsilon_P}{E/W - 32\epsilon_{P}} \\
    & \ge 1 - \frac{48n\epsilon_P}{k \exp\rbra*{-2\beta\delta} } -
      \frac{8n\epsilon_P}{k \exp\rbra*{-2\beta\delta} - 32n\epsilon_P} \\
    & \ge 1 - \frac{88n\epsilon_P}{k\exp\rbra*{-2\beta\delta} }.
  \end{align*}
  Finally, we have
  \begin{align*}
    d_{\text{TV}}\rbra*{\tilde u_{\textup{Gibbs}}, \Oracle_{u}}
    & \leq \frac 1 2 \tr \rbra*{\Big| \ket{\tilde u_{\text{Gibbs}}}
      \bra{\tilde u_{\text{Gibbs}}} - \ket{u_{\text{Gibbs}}}
      \bra{u_{\text{Gibbs}}} \Big| } \\
    & = \sqrt{1 - \abs{\braket{\tilde u_{\text{Gibbs}}}{u_{\text{Gibbs}}}}^2} \\
    & \leq \sqrt{\frac{88n\epsilon_P}{k\exp\rbra*{-2\beta\delta} } },
  \end{align*}
  which is bounded by $\epsilon_{\G}$ by the choice of $\epsilon_{P}$.
\end{proof}

\end{document}